\numberwithin{theorem}{section}
\newcommand{\TheTitle}{Computing quasiconformal folds} 
\newcommand{\TheAuthors}{Di Qiu, Ka-Chun Lam, Lok-Ming Lui}
\headers{\TheTitle}{\TheAuthors}
\title{{\TheTitle}}
\author{
  Di Qiu\thanks{Department of Mathematics, The Chinese University of Hong Kong.
    \email{dqiu@math.cuhk.edu.hk}.}
    \and
  Ka-Chun Lam\thanks{Department of Computing + Mathematical science, California Institute of Technology. \email{kclam@caltech.edu}).} 
  \and
  Lok-Ming Lui\thanks{Department of Mathematics, The Chinese University of Hong Kong.
    (\email{lmlui@math.cuhk.edu.hk}).} }
\begin{document}

\maketitle

\begin{abstract}
  Computing surface folding maps has numerous applications ranging from computer graphics to material design. In this work we propose a novel way of computing surface folding maps via solving a linear PDE. This framework is a generalization to the existing computational quasiconformal geometry and allows precise control of the geometry of folding. This property comes from a crucial quantity that occurs as the coefficient of the equation, namely the alternating Beltrami coefficient. This approach also enables us to solve an inverse problem of parametrizing the folded surface given only partial data with known folding topology. Various interesting applications such as fold sculpting on 3D models, study of Miura-ori patterns, and self-occlusion reasoning are demonstrated to show the effectiveness of our method.
\end{abstract}

\begin{keywords}
  Beltrami equation, quasiconformal geometry, mathematical origami, fold modeling
\end{keywords}

\begin{AMS}
  65D18, 68U05, 65D17
\end{AMS}

\section{Introduction}
Modeling the folding phenomena of surfaces, as well as the study of its regular patterns such as mathematical origami, has attracted lots of interests in computer graphics, material design as well as mathematics. Recent examples include origamizing surfaces \cite{demaine2017origamizer}, material design with mathematical origami \cite{dudte2016programming}, a FoldSketch system manipulating the folding of clothes \cite{li2018foldsketch}, modeling curved folding surface used in fabrication and architectural design, see for example \cite{kilian2008curved, zhu2013soft}. 
All these approaches treat a folded surface as an embedded surface in the three dimensional space, with the folding being a property of the embedding, rather than something intrinsic of the surface itself.

Very often such an embedding can be non-smooth, and thus can be harder to deal with directly. For example, a typical origami when viewed as an isometric embedding of some flat surface, is only of $C^{0}$ regularity. In other words, the embedding is not differentiable at the folding creases, which however are the key trait of such folded surfaces. 

Is it possible to encode the folding as something intrinsic to the surface, in the sense of differential geometry? Since then it would allow us to abstract away some excessive degrees of freedom from all possible embeddings. Such an encoding will have to keep the key properties of the folding, so as to manipulate the folding via the encoding in effective manners. 

In this paper we propose to use a novel fold encoding scheme that is intrinsic to the surfaces, taking ideas from conformal geometry. We encode folded surface by a ``folding map" defined on its encoding domain.  The folding map can be determined using a local differential quantity called the {\it alternating Beltrami coefficient} \cite{srebro1996branched}, which takes the form of a complex-valued scalar field defined on the encoding domain. The coefficient represents precisely a more general type of local conformal distortion of the folding map, including its possible {\it change of orientation}, which we use to model folds and its geometry. Our work can be considered as the first application of quasiconformal methods to the problem of modeling and studying the folding of surfaces.

Computationally, the folding map is the solution of a linear partial differential equation, called {\it alternating Beltrami equation} in quasiconformal theory.  By solving this equation based on the associated coefficients, we can fold a domain into the corresponding folded surface. The inverse of the folding map can also be computed using the equation of the same type defined on the folded surface, which unfolds it back to its encoding domain.  We show that the alternating Beltrami equation admits a quadratic variational formulation, and in fact it generalizes the classical least square conformal parametrization \cite{levy2002least, desbrun2002intrinsic} in the sense that it allows folding with prescribed conformal distortion. Consequently, folding or unfolding can be achieved by simply solving a sparse linear system.  

Moreover, the encoding scheme with its associated folding or unfolding maps also allows us to relate different folded surfaces with the same  ``folding topology" via deformation.
With this we can formulate optimization problems to find the ``best" positioning of the folding creases under certain criteria by deforming an existing one.
We use this idea to solve the problem of recovering a {\it flat-foldable} surface with self-occlusion. In other words, we show that it is possible in many cases to use only the topology and some partial geometric information of the folding to reconstruct the entire flat-foldable surface.

The paper is organized as follows. In Section \ref{subsec2.1} we develop the topological notion of ``folding homeomorphism" and the associated singular set configuration, which are formal definitions of the ``folding map" and ``folding topology" mentioned previously. We then explain the geometric meaning of the alternating Beltrami coefficient and the associated equation, visualize the folding effect, and give the definition of flat-foldability. In Section \ref{subsec2.2} we develop the variational formulation and give its discrete geometric interpretation, together with its discretization and implementation. In Section \ref{sec3} we define the reconstruction problem of flat-foldable surfaces with self-occlusion, and develop the ``reinforcement iteration" algorithm to solve it. Finally, in Section \ref{sec4} and \ref{sec5} we demonstrate the results of the algorithm and various interesting applications including generating new flat-foldable {\it Miura-ori} prototypes, fold-like texture generation, fold and cusp sculpting on surfaces, fold in-painting for surfaces, and occlusion reasoning for flat-foldable surfaces.

\subsection{Related work} Here we briefly list some important related works in this area, while they are by no means an exhaustive survey. 

{\bf Computational quasiconformal geometry.} Computational technique of conformal map \cite{levy2002least, desbrun2002intrinsic} turned out to be very useful in computer graphics. Since the seminal work of Gu and Yau \cite{gu2004genus}, the conformal geometry framework in surface processing tasks has advanced significantly. Earlier work generalizing these ideas is already implicit in the work of Seidel \cite{zayer2005discrete}. The quasiconformal extension of the surface registration framework was proposed by Lui and his coauthors \cite{lui2012optimization, lui2014teichmuller, choi2015flash}, with successful applications to medical image registration and surface registration. The quasiconformal method is able to handle large deformations, where conformal methods typically fail. Our work is a generalization along this line, which allows the manipulation of folding, and opens up a new research area to explore.

{\bf Modeling surface folding and mathematical origamis.} In computer graphics there has been a notable amount of work on modeling the folding phenomena of surfaces. Many interesting works focus on 3D interactive design. These include the method of thin plate form with explicit user control of folding angles for interactive 3D graphics design in \cite{zhu2013soft}, which can also achieve the sharp folding edges as we do here. Our framework and techniques are completely different, especially here we are taking advantage of the fact that alternating Beltrami equation can be solved effectively in 2D. On the other hand, there are studies of developable surface design with curved folding \cite{kilian2008curved}, taking the advantage of the special quad meshes, while we don't have this restriction, but the focus and techniques are rather different. We must also mention the work of Demaine and Tachi \cite{demaine2017origamizer}, who developed algorithms to fold a planar paper into arbitrary 3D shapes. The study of the folding phenomena also has industrial applications in such as the 4D printing of \cite{kwok2015four} and material design in \cite{dudte2016programming}. In comparison with the work above, we encode the folding and its geometry in an intrinsic way, and hence three dimensional geometric features of the folded surface, such as mean curvature, do not belong to our framework but can be tackled by pre- or post-processing. We expect to discover interesting connections with others in future work.

\section{Computing quasiconformal folds} \label{sec2}
\subsection{Definitions of folding homeomorphism, singular set configuration and derivation of alternating Beltrami equation} \label{subsec2.1}
\subsubsection{Topological description}
The notion of folding used in this paper departs from those of three
dimensional nature, as in \cite{kilian2008curved}. Instead, we model it via the the
continuous map from the domain surface to the target surface with
designated change of orientation. This is made precise in the following
definitions.
\begin{definition}[Two color map]
Let $X$ be a two dimensional surface and $\Sigma$ be a union of finite number of curves in $X$. A pair $(X, \Sigma)$ is said to be a two color map if $X\setminus \Sigma$ is a disjoint union of surfaces whose boundary belongs to $\Sigma$, and each member of the union is assigned uniquely with one of the two distinct colors such that the closures of any two distinct members assigned with the same color intersect at most at finite number of points.
\end{definition}
\begin{definition}[Folding homeomorphism and its singular set configuration]
Let $X$ and $Y$ be oriented surfaces. A continuous and {\it discrete} map $f:X\to Y$ 
($f^{-1}(y)$ are isolated in $X$ for all $y\in Y$) is called a
{\it folding homeomorphism} if there is a subset $\Sigma\subset X$, of Hausdorff dimension $1$,
with $(X,\Sigma)$ forming a white-balck two color map, such that when restricting to
the the white (or black) region, $f$ is an orientation-preserving (or -reversing)
homeomorphism. The two color map $(X,\Sigma)$ is called the {\it singular set configuration} of $f$, and is sometimes simply referred to as $\Sigma $ if it is clear from the context. The white and black regions are denoted by $X^+, X^-$ respectively. \end{definition} 

Note that we have required the map to be discrete in order to avoid the case of degeneracy. 

For convenience we would also like to give names to the points in the singular set
according to the properties of $f$. The definition below will classify the usually encountered situation and suffices for our purposes in this paper. Readers can also refer to Gutlyanskii et al. \cite{gutlyanskii2012alternating} for related materials. 
\begin{definition}[Folding point and cusp point] \label{def:1}
Let $f:X\to Y$ be a folding homeomorphism with singular set configuration $\Sigma$. 
\begin{itemize} 
\item A point
$p\in\Sigma$ is called a folding point if there is an open neighborhood $U$ of $p$ such that
$U\setminus\Sigma$ is disconnected into exactly $2$ simply connected components,
and the restriction $f\big|_{U}$ is topologically equivalent to $(x,y)\mapsto(x,|y|)$, where
$U\cap\Sigma$ plays the role of $x$-axis. 
\item A point
$p\in\Sigma$ is called a cusp point if there is an open neighborhood $U$ of $p$ such that
$U\setminus\Sigma$ is disconnected into exactly $2n$ simply connected components, $n>1$, and the remaining
points $p'\in (\Sigma\cap U)\setminus\{p\}$ are all folding points.
\item The collection of paths in $\Sigma$ consisting of all folding points is called folding lines.
\end{itemize}
\end{definition} 

We illustrate these concepts in Figure \ref{fig:def_fold}. 
For a simple explicit example, consider the map
\[
f(x+iy) = \begin{cases}
x+iy & \text{ if $ax + by \geq 0$} \\
x+iy - \frac{2(ax+by)(a + bi)}{a^2 + b^2} & \text{ if $ax + by  < 0$} 
\end{cases}
\]
where $a>0, b\in\mathbb{R}$. Then we see that $f$ folds the half plane $\{x+iy: ax + by  < 0\}$ into the other half $\{x+iy: ax + by  > 0\}$. Explicit examples of cusp is more involved to write down, but can be found in \cite{gutlyanskii2012alternating}. 
A more famous example is the paper crane origami, whose singular set configuration is shown in Figure \ref{fig:sing set}, where for better visualization we use yellow and purple instead of white and black. The paper crane origami is in fact {\it flat-foldable}, which we will define in Definition \ref{def: qc_fold}. \\
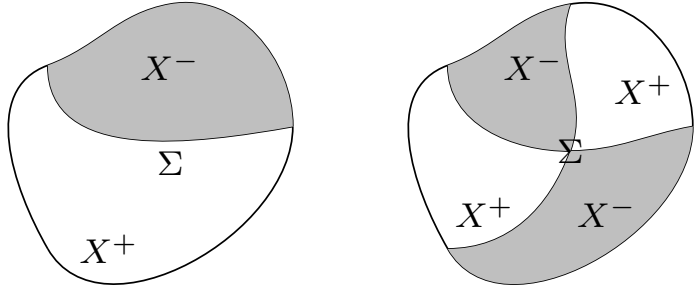
\begin{figure}
 \centering
 \begin{subfigure}[b]{0.35\textwidth}
 \resizebox {\columnwidth} {!} {
\begin{tikzpicture}
    \draw (1,0) to [out=120,in=200]  (1,1.5);
    \draw (1,1.5) to [out=20,in=190] (2,2) to [out=10,in=90] (3,1);
    \draw (3,1) to [out=270,in=300] (1,0);
	\draw (1,1.5) to [out=-90,in=190] (3,1);
	\fill [color = lightgray] (1,1.5) to [out=-90,in=190] (3,1) to [out=90,in=10] (2,2) to [out=190,in=20] (1,1.5);
	\node at (1.5,0) {\footnotesize $X^{+}$};
	\node at (2,1.5) {\footnotesize $X^{-}$};
	\node at (2,0.7) {\footnotesize $\Sigma$};
  \end{tikzpicture} }
      \caption{The singular configuration contains only folding points: the folding line separates $X^{-}$ and $X^{+}$.}
    \label{fig:folding_point}
     \end{subfigure}
\hspace{15pt}
\begin{subfigure}[b]{0.35\textwidth}
\resizebox {\columnwidth} {!} {
    \begin{tikzpicture}
    \draw (1,0) to [out=120,in=200]  (1,1.5);
    \draw (1,1.5) to [out=20,in=190] (2,2) to [out=10,in=90] (3,1);
    \draw (3,1) to [out=270,in=300] (1,0);
	\draw (1,1.5) to [out=-90,in=180] (2,0.8);
	\draw (2,0.8) to [out=0,in=190] (3,1);
	\draw (1,0) to [out=0,in=250] (2,0.8);
	\draw (2,0.8) to [out=70,in=250] (2,2);
	\fill [color = lightgray] (1,1.5) to [out=-90,in=180] (2,0.8) to [out=70,in=250] (2,2) to [out=190,in=20] (1,1.5);
	\fill [color = lightgray] (1,0) to [out=0,in=250] (2,0.8) to [out=0,in=190] (3,1) to [out=270,in=300] (1,0);
	\node at (1.3,0.3) {\footnotesize $X^{+}$};
	\node at (1.7,1.5) {\footnotesize $X^{-}$};
	\node at (2.6,1.3) {\footnotesize $X^{+}$};
	\node at (2.3,0.3) {\footnotesize $X^{-}$};
	\node at (2,0.8) {\footnotesize $\Sigma$};
  \end{tikzpicture} }
  \caption{The singular configuration contains both folding points and a single cusp point: the cusp point joins the folding lines.}
  \end{subfigure}
  \caption{Illustration of Definition \ref{def:1}}
    \label{fig:def_fold}
\end{figure}

\begin{figure}
   \centering
    \begin{subfigure}[b]{0.38\textwidth}
        \includegraphics[width=\textwidth]{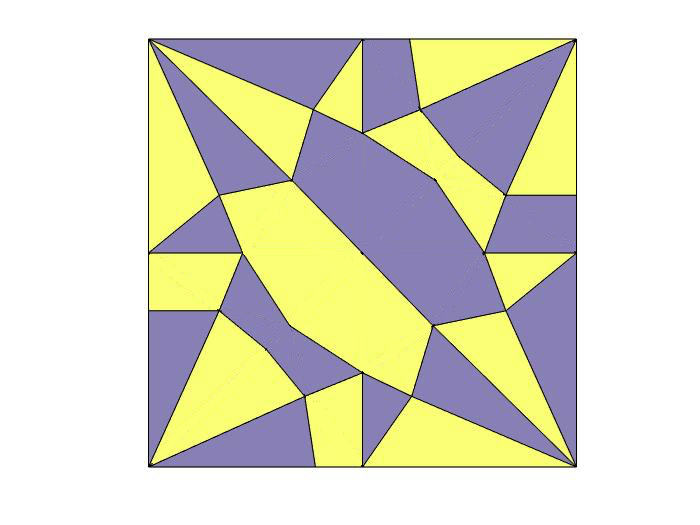}
        \caption{Singular set configuration of the paper crane. }
        \label{fig:crane_foldpattern}
    \end{subfigure} 
    \hspace{15pt}
    \begin{subfigure}[b]{0.38\textwidth}
        \includegraphics[width=\textwidth]{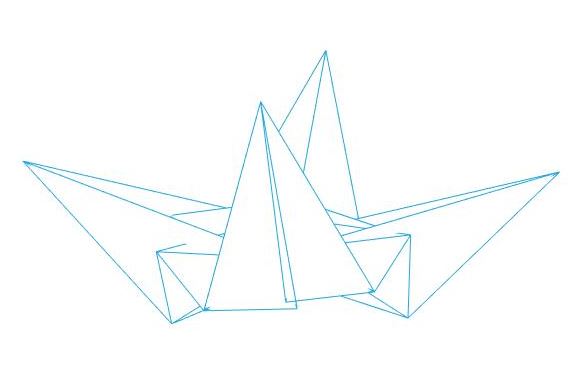}
        \caption{The paper crane obtained by solving an alternating Beltrami equation.}
        \label{fig:crane}
    \end{subfigure}
    \caption{The paper crane origami}
    \label{fig:sing set}
\end{figure}

Since change of orientation necessarily creates non-bijectivity, the inverse of a folding homeomorphism $f$ is no longer a function. Nevertheless it is still possible to define an ``unfolding homeomorphism" that unfolds a folded surface once we can distinguish the image of the points in $f^{-1}(p)$. Formally, one can consider the map
\[
F: (p , p) \mapsto (p , f(p)), \; p\in\Omega
\]
where $f$ is a folding homeomorphism. $F$ is bijective, hence has an inverse that unfolds the folded surface. In practice, if the fold is represented as a mesh as in Figure \ref{fig:crane}, the points in the image of the folding map are naturally distinguished by the mesh data structure. Locally, the unfolding homeomorphism undoes the change of orientation done by the corresponding folding homeomorphism. As we shall see below in Remark \ref{rem:unfold}, since we are dealing with differential equation, we need only to know on which triangle the orientation is reversed, and thus need not worry about the non-bijectivity. 

\subsubsection{The quasiconformal geometry of folding homeomorphisms} So far the above definitions have been topological. We can equip the surfaces with
Riemannian metrics and thus talk about the intrinsic geometry of the
folding homeomorphism $f:(X,g_{1})\to(Y,g_{2})$ between Riemannian
surfaces. 
To measure the local distortion of $f$ is to compare the pullback
metric $f^{*}g_{2}$ with the original metric $g_{1}$. 
In the other
direction, the knowledge of the distortion will give rise to a PDE
that characterizes the map. We shall describe this point in
details below. 

For our purpose, let us take an open set $\Omega \subset\mathbb{R}^2 \cong \mathbb{C}$
in $X$ which does not contain points in $\Sigma$, and put $g_{2}$ to be the Euclidean metric. If we consider
the pull back metric by the unknown map $f$ as given data in the form of a matrix field $H:\Omega\to\mathbf{S}_{++}$,
where $\mathbf{S}_{++}$ denotes the space of symmetric positive definite
$2\times2$ matrices, and assume $f$ is differentiable, it then satisfies
the nonlinear system \cite{astala2008elliptic}.
\[
Df(z)^{T}Df(z)=H(z),\ \forall z\in \Omega,
\]
where $f = (u,v)^T, z = (x,y)^T, Df(z) = \begin{bmatrix}u_{x} & u_{y}\\
v_{x} & v_{y}
\end{bmatrix}$.
Surprisingly enough, it is possible to reduce the above to a linear
equation if the data is given up to multiplying a everywhere positive
function. This is the essential advantage for us to introduce conformal
geometry in dimension two in our problem. To do this, denote 
\[
\mathbf{S}(2)=\{M\in\mathbf{S}_{++}:\det M=1\}.
\]
And let $Q:\Omega\to S(2)$ be such that $H(z) = \phi(z)Q(z)$, so the above nonlinear system is expressed
as
\[
Df(z)^{T}Df(z)=\phi(z)Q(z),\ \phi(z)>0.
\]
By taking determinants on both sides, we get
\[
\phi(z)=|J_{f}(z)|
\]
where $J_{f}(z)=\det Df(z)$. Note that the absolute value is necessary
since $f$ may be orientation reversing. As a result, we get
\[
Df(z)^{T}Df(z)=|J_{f}(z)|Q(z).
\]
Multiplying $Df(z)^{-1}$ on the right of both sides, and write $f=(u,v)^{T}$, \hspace{1pt}
$Q=\begin{bmatrix}q_{11} & q_{12}\\
q_{12} & q_{22}
\end{bmatrix},$ we obtain the system
\[
\begin{bmatrix}u_{x} & u_{y}\\
v_{x} & v_{y}
\end{bmatrix}^{T}=\text{sgn}(J_{f}(x))\cdot\begin{bmatrix}q_{11} & q_{12}\\
q_{12} & q_{22}
\end{bmatrix}\begin{bmatrix}v_{y} & -u_{y}\\
-v_{x} & u_{x}
\end{bmatrix}.
\]
It is a straightforward matter to rewrite the above system in complex
derivatives, obtaining
\[
\frac{\partial f}{\partial\bar{z}}(z)=\mu(z)\frac{\partial f}{\partial z}(z),
\]
where $\mu=\frac{q_{11}-q_{22}+2iq_{12}}{q_{11}+q_{22}+2\cdot \text{sgn}(J_f(x)}, \frac{\partial f}{\partial\bar{z}}=(u_{x}-v_{y})+i(u_{y}+v_{x})$
and $\frac{\partial f}{\partial z}=(u_{x}+v_{y})+i(-u_{y}+v_{x})$. 
This is called
the {\it alternating Beltrami equation}, coined by Srebro and Yakubov \cite{srebro1996branched}. The name refers to the fact that
\[
\begin{cases}
|\mu|<1 & \text{if }\text{sgn}(J_{f}(x))>0\\
|\mu|>1 & \text{if }\text{sgn}(J_{f}(x))<0
\end{cases},
\]
and hence it differs from the classical Beltrami equation in that the modulus of the coefficient is only required to be bounded away from $1$.
This motivates a more analytic definition for the folding homeomorphism, which is the principal mathematical subject of this paper.

\begin{definition} [Quasiconformal map with folds] \label{def: qc_fold}
A folding homeomorphism $f:\Omega\subset\mathbb{C}\to\mathbb{C}$, $K\geq1$ is called a  generalized $K$-quasiconformal map with singular configuration $\Sigma$
if it is a solution of a
alternating Beltrami equation $\frac{\partial f}{\partial\bar{z}}(z)=\mu(z)\frac{\partial f}{\partial z}(z)$, such that
in $\Omega^{+}$ and $\Omega^{-}$ it holds that $|\mu(z)|<1$ and $|\mu(z)|>1$, respectively, and moreover satisfies the bound
\[
\big|\frac{1+|\mu(z)|}{1-|\mu(z)|}\big|\leq K
\]
for all $z\in\Omega$ except for a set of Lebesgue measure zero. In particular, the case of $K=1$ will be called generalized conformal, or flat-foldable, or a planar origami.
\end{definition}
The quotient inside the bound has the interpretation of linear distortion of the map, see \cite{astala2008elliptic}. The above definition of flat-foldability is adapted to our problem, in particular in a discrete, computational setting. It includes the case where a surface is rigidly flat-folded, whose folding lines of the singular set configuration are all Euclidean geodesics. 

\begin{remark}
The rigidity associated to flat-foldability is also manifested via a well-known condition about how folding lines join each other at a cusp point, known as the {\it Kowasaki's condition}. In details, let $n>1$ be an integer and suppose there are $2n$ Euclidean geodesics emanating from a cusp point $p\in U\subset\mathbb{C}$.
Then the neighborhood $U$ is flat-foldable if the alternating sum of the angles $(\alpha_i)_{i=1}^{2n}$ formed by every 
two neighboring Euclidean geodesics
satisfy the  condition
$$
\sum_{i=1}^{2n}(-1)^{i}\alpha_{i} = 0. 
$$
This condition is utilized in \cite{dudte2016programming} for constrained optimization.
However, this formalism will not play a significant role in the algorithms we propose in this paper. 
\end{remark}

To get a better picture of the alternating Beltrami equations, 
we illustrate it with the effect of the Beltrami coefficients on a
single triangle (i.e. the linearized effect at the tangent space level). 
This should provide one with geometric intuition for the
solutions on a triangulated mesh.

Let us rewrite the Beltrami equation as a system of first-order
PDEs in the usual Cartesian coordinate. Suppose $f:(x,y)\mapsto(u,v)$
satisfies the equation $\frac{\partial f}{\partial\bar{z}}(z)=\mu(z)\frac{\partial f}{\partial z}(z)$.
If we write $\mu=\rho+i\tau$,
then it's not hard to see that 

\begin{equation}\label{eq:1} 
\begin{bmatrix}u_{y}\\
v_{y}
\end{bmatrix}=\frac{1}{(1+\rho){}^{2}+\tau^{2}}\begin{bmatrix}2\tau & |\mu|^2-1\\
1-|\mu|^2 & 2\tau
\end{bmatrix}\begin{bmatrix}u_{x}\\
v_{x} 
\end{bmatrix},  \end{equation} 
here we have assumed $\rho\neq-1$ and $\tau\neq0$.

Hence for a single triangle, on which we assume $f$ is linear, the
map is determined up to a similarity transform (uniform scaling and rotation) in the target
domain. Now suppose $f$ maps a domain triangle $[v_{1},v_{2},v_{3}]=[(0,0),(1,0),(x,y)]$
to the target triangle $[w_{1},w_{2},w_{3}]=[(0,0),(1,0),(u(x,y),v(x,y))]$.
Then
\begin{equation}
\begin{bmatrix}u(x,y)\\
v(x,y)
\end{bmatrix}=\begin{bmatrix}1 & \frac{2\tau}{(1+\rho){}^{2}+\tau^{2}}\\
0 & \frac{1-|\mu|^2}{(1+\rho){}^{2}+\tau^{2}}
\end{bmatrix}\begin{bmatrix}x\\
y
\end{bmatrix}.\label{eq:2}
\end{equation}
One can check that the set of points for the family of $\mu$ with
each fixed modulus $|\mu|\neq1$ form a circle, whereas in the case
$|\mu|=1$ the circle degenerates to the $x$-axis.  An illustration of this fact is shown in Figure \ref{fig:mu_plot}.

As one goes beyond to the case $|\mu|>1$, where the anti-diagonal
terms experience a change of sign, this leads to a ``flipping'' of
the triangle. In fact, for a single triangle, everything remains the same after a mirror reflection about the $x$-axis,
and the case $\mu=\infty$ corresponds to the {\it anti-conformality} of the map. Here $\infty$ should be understood as the infinity point in the Riemann sphere. What is more,
for each fixed argument $\arg(\mu)$ and let the modulus $|\mu|$ vary, the set of solution points form an arc of
a circle, passing through the points $(x,y)$ and $(x,-y)$. Altogether, we see that the Beltrami coefficients in effect form a
bipolar coordinate in the plane containing the target triangle. Therefore,
it describes all possible angular distortion at the tangent space
level, including those having a change of orientation.\\
\begin{figure}
    \centering
    \includegraphics[width=\textwidth]{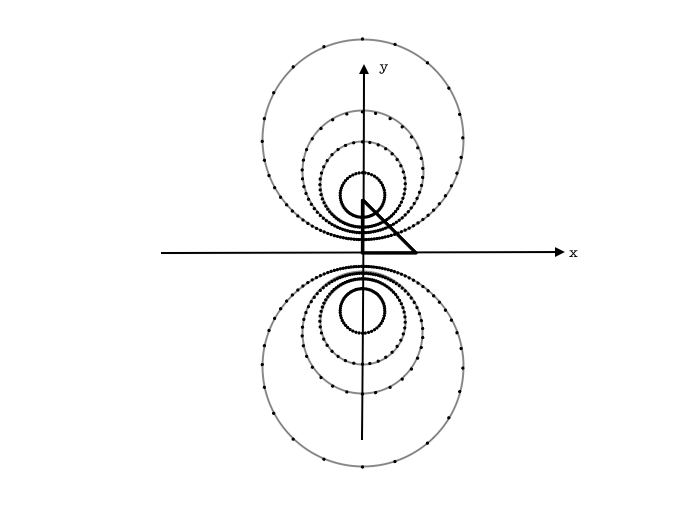}
    \caption{The trajectory of the third vertex under different values of Beltrami coefficients. Circles represent the situation when $|\mu| = 1/5, 7/20, 9/20, 3/5, 5/3, 20/9, 20/7, 5$ respectively.}
    \label{fig:mu_plot}
\end{figure}

\subsection{Energy formulation} \label{subsec2.2}
In this section we turn to the computational methods to solve the Beltrami equation. Previously proposed methods include the {\it Beltrami holomorphic flow method} as in \cite{lui2012optimization}, or the decoupling method as in \cite{lui2014teichmuller}. Both require entire boundary information for solving the Beltrami equation and it can be unrealistic in applications. Fortunately, it turns out to be also unnecessary once we realize the coupling of the two coordinate functions of the map. This coupling arises naturally in an energy functional of least squares type. For completeness, we analyze this problem below since we did not find it in the literature.

The formulation here takes inspiration from the well-known {\it least square conformal energy}, studied in \cite{levy2002least, desbrun2002intrinsic}, which take into account the coupling of $u$ and $v$. 
Its continuous formulation is 
 $$\int_{\Omega}\|\nabla u +\begin{bmatrix}0 & -1\\
 1 & 0
 \end{bmatrix} \nabla v\|^{2}dxdy.$$
The corresponding matrix associated to its discretization 
is the well-known {\it cotangent weight} matrix minus
a certain ``area matrix" \cite{pinkall1993computing, desbrun2002intrinsic}. This area matrix in fact plays the role of certain Neumann boundary condition. 
One would expect analogous results to hold in the quasiconformal setting. 

But in the quasiconformal case, it is not an entirely trivial matter to formulate the correct analog. It turns out the energy must give rise to a pair of second order elliptic equations as a necessary condition.

\subsubsection{The decoupling method and the necessary condition} Perhaps the most straightforward way to
solve the Beltrami equation is to decouple the corresponding
first order system into two independent second order equations, namely,
\begin{proposition}[Necessary condition]\label{prop:1}  
Suppose for $z\in \Omega \setminus \Sigma$, $f(z) = u(z)+iv(z)$ satisfies the equation 
$\frac{\partial f}{\partial\bar{z}}(z)=\mu(z) \frac{\partial f}{\partial z}(z)$. Assume the domain $\Omega$ is given the usual Euclidean geometry, and $|\mu|\neq1$, 
then we have
\begin{equation}
\begin{cases}
-\nabla\cdot(A\nabla u(z)) & =0\\
-\nabla\cdot(A\nabla v(z)) & =0
\end{cases}\label{eq:3}
\end{equation}
where $A= \frac{1}{1-|\mu|^2}\begin{bmatrix}(\rho-1)^{2}+\tau^{2} & -2\tau\\
-2\tau & (1+\rho)^{2}+\tau^{2}
\end{bmatrix}$,  and $\mu = \rho+i\tau$. 
 \end{proposition}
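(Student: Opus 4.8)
The plan is to decouple the first-order Beltrami system into two scalar second-order equations by eliminating one coordinate at a time. Writing $\mu=\rho+i\tau$ and separating the real and imaginary parts of $\frac{\partial f}{\partial\bar z}=\mu\frac{\partial f}{\partial z}$ yields two linear relations among $u_x,u_y,v_x,v_y$ whose coefficient determinant equals $1-|\mu|^2$. Since $|\mu|\neq1$ on $\Omega\setminus\Sigma$, I can solve these relations for $(v_x,v_y)$ in terms of $(u_x,u_y)$; a short computation shows the resulting coefficient matrix is precisely $JA$, with $J=\begin{bmatrix}0&-1\\1&0\end{bmatrix}$ and $A$ the matrix in the statement, so that
\[
\nabla v = JA\,\nabla u,\qquad\text{equivalently}\qquad A\,\nabla u = -J\,\nabla v,
\]
the second form following from $J^{2}=-I$. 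Before proceeding I would record two algebraic facts about $A$: it is symmetric, and $\det A=1$. The determinant is the one place where something nontrivial cancels, since $((\rho-1)^2+\tau^2)((\rho+1)^2+\tau^2)-4\tau^2=(|\mu|^2-1)^2$ exactly absorbs the prefactor $(1-|\mu|^2)^{-2}$.

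The equation for $u$ is then immediate: taking the divergence of $A\nabla u=-J\nabla v$ gives
\[
\nabla\cdot(A\nabla u)=-\nabla\cdot(J\nabla v)=-\bigl(\partial_x(-v_y)+\partial_y(v_x)\bigr)=v_{xy}-v_{yx}=0
\]
by equality of mixed partials. The feature I want to highlight is that $-J\nabla v$ is a \emph{rotated gradient}, hence divergence-free regardless of the spatial dependence of $\mu$; one never differentiates the variable coefficient matrix $A$, which is what keeps the argument from generating error terms.

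For the equation in $v$ I would invoke the identity $M^{T}JM=(\det M)J$, valid for every $2\times2$ matrix $M$. Applied to the symmetric matrix $A$ with $\det A=1$ it gives $AJA=J$. Combined with $\nabla v=JA\nabla u$ this produces
\[
A\,\nabla v=(AJA)\,\nabla u=J\,\nabla u,
\]
and the same rotated-gradient observation yields $\nabla\cdot(A\nabla v)=\nabla\cdot(J\nabla u)=0$. Hence the single matrix $A$ controls both coordinate functions, which is exactly the assertion of the proposition.

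The two points that actually require care are these. First, one must identify the correct \emph{symmetric} $A$ and verify $\det A=1$; this is what makes $AJA=J$ hold and, in turn, what lets the same $A$ appear in both equations rather than two different matrices. Second, the hypothesis $|\mu|\neq1$ is doing real work: it is the nonvanishing of the determinant $1-|\mu|^2$ that lets the first-order system be solved to produce $A$, and on the region $|\mu|>1$ this factor is negative, so $A$ switches from positive definite to negative definite. Consequently the quadratic form $\int_\Omega(\nabla u)^{T}A\nabla u\,dxdy$ is sign-indefinite over $\Omega$, which is precisely the non-convex energy and loss of global ellipticity anticipated in the introduction. A final technical caveat is regularity: the mixed-partials step needs $f$ to be twice differentiable, but if one only has $f\in W^{1,1}_{\mathrm{loc}}$ the identities should be read distributionally, where $\nabla\cdot(J\nabla w)=0$ holds for any such $w$ because $\nabla w$ is curl-free.
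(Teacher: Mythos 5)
Your proof is correct and follows essentially the same route as the paper: rewrite the Beltrami equation as a first-order relation between $\nabla u$ and $\nabla v$ through $A$ and the rotation $J$, then conclude from the equality of mixed partials that the rotated gradients are divergence-free. The only difference is that you explicitly supply the details ($\det A=1$ and $AJA=J$) that let the same matrix $A$ govern both equations, which the paper dismisses with ``the other equation is obtained in a similar way.''
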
 
\begin{proof}
Observe that $\frac{\partial f}{\partial\bar{z}}(z)=\mu(z) \frac{\partial f}{\partial z}(z)$ can be transformed into
$$ 
\begin{bmatrix} u_x \\u_y \end{bmatrix}
= \begin{bmatrix} 0 & 1 \\-1 & 0 \end{bmatrix}A\begin{bmatrix} v_x \\v_y \end{bmatrix}.
$$
Then making use of the commutativity of second order partial derivatives $u_{xy} = u_{yx}$ under the Euclidean coordinate,
we obtain $$\nabla\cdot(A\nabla u(z)) =0.$$ The other equation is obtained in a similar way.
\end{proof}
\begin{remark} \label{rem: alter}
Note that the coefficient matrix $A$ is positive (or negative) definite if $|\mu|< 1$ (or $|\mu|>1$, respectively). If $U$ is any open neighborhood, on which $A$ is either positive or negative but not both, then it is not hard to see that system \eqref{eq:3} are the Euler-Lagrange equations of the  
Dirichlet type energies
\begin{equation}
E_{\tilde{A}}(u;U)=\frac{1}{2}\int_{U}\|\tilde{A}^{1/2}\nabla u\|^{2}dxdy,\quad E_{\tilde{A}}(v;U)=\frac{1}{2}\int_{U}\|\tilde{A}^{1/2}\nabla v\|^{2}dxdy,\label{eq:8}
\end{equation}
with Dirichlet boundary conditions, where $\tilde{A}$ denotes $A$ if $A$ is positive definite, or $-A$ if $A$ is negative definite. 
Therefore, we see that in general the global variational problem
must be separated according to whether $|\mu|<1$ or $|\mu|>1$ in the domain $\Omega$. We shall often denote $\Omega^{+}$ (or $\Omega^{-}$)
to be the largest open subset such that $|\mu|<1$ (or $|\mu|>1$, respectively), which is consistent with the previous notation.
\end{remark}

The derived system \eqref{eq:3}
is a {\bf necessary condition} that in principle should be satisfied by any other method which solves the equation in the Euclidean domain $\Omega$. This motivates the following.

\begin{definition} \label{def:qc energy}
The {\bf least squares quasiconformal energy} of the map $z=(x,y)\mapsto (u,v)$ against Beltrami coefficient $\mu=\rho + i\tau$  is defined to be
\[
\begin{aligned}
E_{LSQC}(u,v;\mu)&=\frac{1}{2}\int_{\Omega}\|P\nabla u+ JP\nabla v\|^{2}\,dxdy 
\end{aligned}
\]
 where
\begin{equation*}
P = \frac{1}{\sqrt{1-|\mu|^2}}\begin{bmatrix}1-\rho & -\tau \\ -\tau & 1+\rho\end{bmatrix}, \; 
J =\begin{bmatrix}0 & -1\\
 1 & 0
 \end{bmatrix}
\end{equation*}
so that $P^TP = A$ as in \eqref{eq:3}. Note that if $|\mu| > 1$ we allow entries of $P$ to be imaginary, and $\|\cdot\|^2 = \langle \cdot, \cdot\rangle $ remains as the usual Euclidean norm (not Hermitian).
\end{definition}

Let's first consider the uniform case and assume that $|\mu|<1$. The case of $|\mu|>1$ defers by a negative sign. Thanks to the important observation that
$P^T J P = J$
, we have the following identity
$$
E_{LSQC}(u,v;\mu) = \left(E_{A}(u;\Omega)+E_{A}(v;\Omega)\right)-\mathcal{A}(u,v), 
$$
where $A$ is the same matrix described previously in \eqref{eq:3}, and 
$$\mathcal{A}(u,v):=\int_{\Omega}(u_{y}v_{x}-u_{x}v_{y})\,dxdy $$ 
is the area of the target surface. 

\begin{remark} \label{rem: lowbd_dirichlet}
Assume that $|\mu|<1$. Observe that we have obtained the analog of the classical lower bound of the Dirichlet energy
\begin{equation} \label{eq:13}
E_{A}(u)+E_{A}(v) \geq \mathcal{A}(u,v).
\end{equation}
This simply follows from the fact that $E_{LSQC}(u,v; \mu) \geq 0$. The vanishing of this energy is equivalent to the existence of $f = u+iv$ as a solution of the Beltrami equation with coefficient $\mu$. The existence is guaranteed for measurable Beltrami coefficients $\mu$ with $\|\mu\|_{L^{\infty}(\Omega)} < 1$, known as the {\it measurable Riemann map theorem}. Note also that the solution of the Beltrami equation is unique up to post-composition of conformal maps \cite{astala2008elliptic}.
\end{remark}
If we assume the domain $\partial\Omega$ has Lipschitz boundary, then the quantity $\mathcal{A}(u,v)$ 
is equal to the following integral on the boundary 
$$
\frac{1}{2}\int_{\partial\Omega}(v\nabla u-u\nabla v)\times\nu \,d\Gamma, $$
where $\nu(z)$ is the outer unit normal vector, and $d\Gamma$ is the standard measure of $\partial\Omega$. Actually, the coupling between $u$ and $v$ is realized as certain boundary condition applied to solving \eqref{eq:3}. The following  derivation of the second order equations with boundary condition is standard.
\begin{theorem}
Suppose $\mu$ is uniformly bounded away from $1$, $\Omega$ is connected with Lipschitz boundary, and suppose there exists one pair $(u,v)$, 
$u,v\in W^{2,2}(\Omega)$, such that 
\[ 
E_{LSQC}(u,v;\mu) = \arg\inf_{\tilde{u},\tilde{v}\in W^{1,2}} E_{LSQC}(\tilde{u},\tilde{v};\mu),
\]
then they satisfy the following Neumann boundary problem
\begin{equation} \label{eq: qc}
\begin{cases}
-\nabla\cdot(A\nabla u)  =0 \text{ in } \Omega \\
-\nabla\cdot(A\nabla v)  =0 \text{ in } \Omega \\
\partial_{A\nu}u +\nabla v \times \nu =0 \text{ on } \partial\Omega\\
\partial_{A\nu}v -\nabla u \times \nu =0 \text{ on } \partial\Omega
\end{cases},
\end{equation}
where as before $\nu(z)$ is the outer unit normal vector.
\end{theorem}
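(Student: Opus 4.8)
The plan is to treat $(u,v)$ as a stationary point of $E_{LSQC}$ and to read off the system \eqref{eq: qc} as its Euler--Lagrange equations together with the \emph{natural} (as opposed to essential) boundary conditions. Because the infimum defining $(u,v)$ is taken over all of $W^{1,2}(\Omega)\times W^{1,2}(\Omega)$ with no constraint on $\partial\Omega$, the admissible variations are arbitrary $\varphi,\psi\in W^{1,2}(\Omega)$, and it is exactly the freedom of their traces on $\partial\Omega$ that will force the Neumann-type relations. Only first-order stationarity over a \emph{linear} space of directions is used, so the non-convexity of $E_{LSQC}$ noted in Remark~\ref{rem: alter} (the indefiniteness of $A$ across $\Sigma$) is not itself an obstruction here; it would bear only on the existence of the minimizer, which is assumed.

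First I would compute the Gateaux derivative in the $u$-slot directly from the $P$-form of Definition~\ref{def:qc energy}. Writing $J=\begin{bmatrix}0 & -1\\ 1 & 0\end{bmatrix}$ and using that $P$ is symmetric, that $P^{T}P=A$, and the algebraic identity $P^{T}JP=J$ --- which follows from the $2\times2$ rule $M^{T}JM=(\det M)\,J$ together with the cancellation of the factor $1/(1-|\mu|^{2})$ against $\det\begin{bmatrix}1-\rho & -\tau\\ -\tau & 1+\rho\end{bmatrix}=1-|\mu|^{2}$ --- the condition $\frac{d}{dt}E_{LSQC}(u+t\varphi,v;\mu)\big|_{t=0}=0$ reduces to $\int_{\Omega}\nabla\varphi^{T}\big(A\nabla u+J\nabla v\big)\,dxdy=0$ for every $\varphi\in W^{1,2}(\Omega)$.

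The next step is integration by parts, which is where $u,v\in W^{2,2}(\Omega)$ and the Lipschitz boundary are spent. The diagonal term gives $-\int_{\Omega}\varphi\,\nabla\cdot(A\nabla u)+\int_{\partial\Omega}\varphi\,\partial_{A\nu}u\,d\Gamma$, the conormal derivative $\partial_{A\nu}u=(A\nabla u)\cdot\nu$ having a well-defined trace. For the coupling term I would exploit the key fact that $J\nabla v=(-v_{y},v_{x})^{T}$ is divergence-free when $v\in W^{2,2}$ (equality of mixed partials); hence this term has \emph{no} interior contribution and integrates purely to the boundary, producing $\int_{\partial\Omega}\varphi\,(\nabla v\times\nu)\,d\Gamma$ once one recognizes $(J\nabla v)\cdot\nu=\nabla v\times\nu$. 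Testing first with $\varphi\in C_{c}^{\infty}(\Omega)$ forces $-\nabla\cdot(A\nabla u)=0$ a.e.\ in $\Omega$, after which the residual boundary integral must vanish for all $\varphi$, whose traces are dense in $L^{2}(\partial\Omega)$; this yields $\partial_{A\nu}u+\nabla v\times\nu=0$ on $\partial\Omega$.

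Repeating the computation in the $v$-slot produces $-\nabla\cdot(A\nabla v)=0$ and the sign-flipped condition $\partial_{A\nu}v-\nabla u\times\nu=0$, the minus sign being forced by the antisymmetry $a^{T}Jb=-b^{T}Ja$ of $J$. The main obstacle I anticipate is not this skeleton but its rigorous justification in the indefinite regime: legitimizing the integration by parts and the conormal traces at merely $W^{2,2}$ regularity on a Lipschitz domain, and, more delicately, crossing the singular set $\Sigma$, where $A$ changes definiteness and the formal factor $\sqrt{1-|\mu|^{2}}$ in $P$ turns imaginary. Consistent with Remark~\ref{rem: alter}, I would localize to $\Omega^{+}$ and $\Omega^{-}$ and check that the boundary representation of $\mathcal{A}(u,v)$ already recorded in the excerpt reconciles the contributions along $\Sigma$, so that the single system \eqref{eq: qc} holds throughout $\Omega$; tracking every sign so that the two boundary conditions emerge with their correct relative sign is the part demanding the most care.
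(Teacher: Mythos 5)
Your derivation is correct and is essentially the paper's own route: the paper dismisses this computation as ``standard'' and carries out precisely the same first-variation argument (test functions in each slot, the identity $P^{T}JP=\det(P)\,J=J$ so that the coupling term is divergence-free and contributes only on $\partial\Omega$, integration by parts yielding the natural conormal conditions) in its proof of the generalized energy theorem. Your closing concern about crossing $\Sigma$ is not needed for this particular statement, which lives in the one-sided regime; the indefinite case is exactly what the generalized quasiconformal energy of the next subsection is introduced to handle.
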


\subsubsection{Generalized quasiconformal energy}

 Because of the change of orientation, the energy formulation and the associated system of equations has to be accordingly modified. It is then crucial to study the interaction between regions of the domain that corresponds to different orientations of $f$. 

First of all, it follows from arguments in Remark \ref{rem: alter} and Remark \ref{rem: lowbd_dirichlet} that the alternating Beltrami equation, when {\it restricted to regions of constant orientation}, is equivalent to vanishing of the ``energies"
\[
\begin{aligned}
 E_{LSQC}^+(u,v;\mu) := \frac{1}{2}\int_{\Omega^+}\|P\nabla u+JP\nabla v\|^{2}\,dxdy = 0 \\
 E_{LSQC}^-(u,v;\mu) : = \frac{1}{2}\int_{\Omega^-}\|P\nabla u+JP\nabla v\|^{2}\,dxdy = 0
\end{aligned}
\]
where $\Omega^+ = \text{int } \{z\in \Omega: |\mu(z)| <1 \}$, $\Omega^- = \text{int } \{z\in \Omega: |\mu(z)| >1 \}$. Recall that when $|\mu|>1$  we allow entries of $P$ to be imaginary, hence $E_{LSQC}^-(u,v;\mu)$ is in fact negative definite. 

To obtain the global solution, one could solve the equation individually in $\Omega^+$ and $\Omega^-$ and glue the solution along the singular set configuration. It turns out that this can be done implicitly. The problem now is how to combine the quasiconformal energies on regions with different orientations into a single ``energy", so that we can solve the alternating Beltrami equation on the entire domain in one shot. 
\begin{theorem} [Generalized quasiconformal energy] \label{def:gqc energy}
Assume there are only finitely many cusp points. Define the generalized quasiconformal energy with Beltrami coefficient $\mu$ of the map $z=(x,y)\mapsto (u,v)$ in $W^{2,2}$ to be
\[
E_{GQC}(u,v;\mu) = E_{LSQC}^+(u,v;\mu) - E_{LSQC}^-(u,v;\mu).
\]
Then the alternating Beltrami equation with Beltrami coefficient $\mu$ is a critical point of the above energy.  
\end{theorem}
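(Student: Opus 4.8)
The plan is to compute the first variation of $E_{GQC}$ at a solution $f=u+iv$ of the alternating Beltrami equation and show that it vanishes, treating $\Omega^{+}$ and $\Omega^{-}$ separately and then reconciling the contributions along the common interface $\Sigma$. The starting point is to make the integrand of each $E^{\pm}_{LSQC}$ explicit. Writing $J=\begin{bmatrix}0&-1\\1&0\end{bmatrix}$, expanding the squared norm, and using $P^{T}P=A$ together with the algebraic identity $P^{T}JP=J$ (which follows from the symmetry of $P$ and the direct computation $MJM=(1-|\mu|^{2})J$ for $M=\sqrt{1-|\mu|^{2}}\,P$), I would rewrite
\[
\tfrac12\|P\nabla u+JP\nabla v\|^{2}=\tfrac12(\nabla u)^{T}A\nabla u+\tfrac12(\nabla v)^{T}A\nabla v+(\nabla u)^{T}J\nabla v .
\]
The crucial gain is that the cross term now carries the \emph{constant} matrix $J$ rather than the $z$-dependent $P$, so that upon integrating by parts it produces only a boundary flux.

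Next I would perturb $u\mapsto u+\epsilon\phi$ and compute $\delta_{u}E^{\pm}_{LSQC}$. Integrating the $A$-term by parts and converting the cross term to a boundary integral via $\int_{\Omega^{\pm}}(v_{y}\phi_{x}-v_{x}\phi_{y})=-\int_{\partial\Omega^{\pm}}\phi\,(\nabla v\times\nu)$, I obtain
\[
\delta_{u}E^{\pm}_{LSQC}=-\int_{\Omega^{\pm}}\nabla\!\cdot\!(A\nabla u)\,\phi\,dxdy+\int_{\partial\Omega^{\pm}}\bigl[A\nabla u\cdot\nu+\nabla v\times\nu\bigr]\phi\,d\Gamma,
\]
and symmetrically for $\delta_{v}$ with the natural boundary integrand $A\nabla v\cdot\nu-\nabla u\times\nu$. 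The interior integrands vanish at a Beltrami solution by the necessary condition of Proposition \ref{prop:1}. For the boundary integrands I would invoke the first-order form $A\nabla u=-J\nabla v$ and $A\nabla v=J\nabla u$ (equivalent to the system in the proof of Proposition \ref{prop:1}, and valid in $\Omega^{-}$ as well, since the identities $AJA=J$ and $A=P^{T}P$ hold whenever $|\mu|\neq1$). Combined with $(J\nabla v)\cdot\nu=\nabla v\times\nu$, these give $A\nabla u\cdot\nu+\nabla v\times\nu=0$ and $A\nabla v\cdot\nu-\nabla u\times\nu=0$ pointwise, so each natural boundary integrand vanishes on \emph{any} curve where the equation holds.

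The heart of the argument is the interface $\Sigma$. Writing $\delta E_{GQC}=\delta E^{+}_{LSQC}-\delta E^{-}_{LSQC}$ and letting $\nu$ be the unit normal of $\Sigma$ pointing out of $\Omega^{+}$ (so the outward normal of $\Omega^{-}$ is $-\nu$), the $\Sigma$-contribution of $\delta_{u}E_{GQC}$ collapses, after the sign flip of the normal, to
\[
\int_{\Sigma}\bigl[(A^{+}\nabla u^{+}+A^{-}\nabla u^{-})\cdot\nu+(\nabla v^{+}+\nabla v^{-})\times\nu\bigr]\phi\,d\Gamma,
\]
where the $\pm$ superscripts denote the two one-sided traces. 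Applying the pointwise identity above on each side gives $A^{\pm}\nabla u^{\pm}\cdot\nu=-\nabla v^{\pm}\times\nu$, whence the bracket is identically zero; the same cancellation occurs for $\delta_{v}E_{GQC}$. Thus all interior, outer-boundary, and interface terms vanish, so $f$ is a critical point of $E_{GQC}$.

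I expect the main obstacle to be the rigorous justification of the interface and boundary bookkeeping rather than the algebra: one must argue that the divergence theorem applies on $\Omega^{+}$ and $\Omega^{-}$, whose boundaries are rectifiable away from the finitely many cusp points (these form a set of one-dimensional measure zero on $\Sigma$ and so do not affect the line integrals), and that the one-sided traces of $\nabla u,\nabla v$ along $\Sigma$ exist and are the objects appearing above — both furnished by the $W^{2,2}$ regularity hypothesis. A secondary point requiring care is the interpretation of $\|P\nabla u+JP\nabla v\|^{2}$ on $\Omega^{-}$, where $P$ has purely imaginary entries: it is to be read as the real quadratic form built from the (now negative definite) real matrix $A=P^{T}P$, which is precisely what makes the computations uniform across the two regions and lets the same first-order relations be used on both sides of $\Sigma$.
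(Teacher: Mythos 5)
Your proposal is correct and follows essentially the same route as the paper's proof: compute the first variation, integrate by parts on $\Omega^{+}$ and $\Omega^{-}$ separately, and use the sign in $E_{GQC}$ together with the flipped normal along $\Sigma$ so that the interface contributions combine and vanish, with the finitely many cusp points discarded as a set of one-dimensional measure zero. Your write-up is in fact more explicit than the paper's at the one point it glosses over --- you verify via $P^{T}JP=J$, $AJA=J$ and the first-order relations $A\nabla u=-J\nabla v$, $A\nabla v=J\nabla u$ that the natural boundary and interface integrands vanish pointwise at a Beltrami solution, whereas the paper merely asserts that the derived Euler--Lagrange system ``corresponds to'' the alternating Beltrami equation.
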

\begin{proof}
By taking a test function in the interior of constant orientation or near the boundary $\partial \Omega$, the critical point property in these regions is verified no different from the classical case. It now suffices to check the critical point property for the region near the singular set configuration. Since the number of cusp points is finite, it will not contribute to the integration on the singular set. Hence it suffices to work locally in a small neighborhood $U$ that contains only folding points, like the situation in Figure \ref{fig:folding_point}. Take any smooth test function $\phi$ compactly supported in $U$. Then by setting
\[
\frac{d}{d\epsilon}\big|_{\epsilon = 0}E_{GQC}(u+\epsilon \phi,v) = 0  
\]
we obtain
\[
\left (\int_{\Omega^+} - \int_{\Omega^-} \right) \langle P\nabla \phi, P\nabla u\rangle + \langle P\nabla \phi, JP\nabla v \rangle \, dxdy= 0.
\]
Integrating by parts, and repeating the same steps for $v$, we can derive the following Euler-Lagrange system
\begin{equation}
\begin{cases}
-\nabla\cdot(\tilde{A}\nabla u)  =0 \text{ in } \Omega^+\cup\Omega^- \\
-\nabla\cdot(\tilde{A}\nabla v)  =0 \text{ in } \Omega^+\cup\Omega^- \\
\partial_{\tilde{A}\nu}u +\nabla v \times \nu =0 \text{ on } \partial\Omega \cup \Sigma\\
\partial_{\tilde{A}\nu}v -\nabla u \times \nu =0 \text{ on } \partial\Omega \cup \Sigma
\end{cases},
\end{equation}
where 
\[
A= \frac{1}{1-|\mu|^2}\begin{bmatrix}(\rho-1)^{2}+\tau^{2} & -2\tau\\
-2\tau & (1+\rho)^{2}+\tau^{2}
\end{bmatrix},
\]
$\tilde{A} = A$ on $\Omega^{+}$ and $\tilde{A} = -A$ on $\Omega^{-}$, $\nu = \nu(z)$ is the outer unit normal vector when $z\in \partial\Omega^{+}$, or equivalently the inner unit normal if we regard $z\in \partial\Omega^{-}$. Note that the second order equation outside of the singular set and boundary has the same form, and corresponds to the alternating Beltrami equation with coefficient ${\mu}$. Most importantly, the boundary condition matches up from both sides of the folding line. This finishes the proof.
\end{proof} 

\begin{remark}
Note that the above Neumann boundary problem is somewhat different from convention since the singular set lies in the interior but is treated like boundary. But this very condition can  be seen as the way to glue two pieces of solutions on $\Omega^{+}$ and $\Omega^{-}$ together along the singular set configuration.
\end{remark}

\subsubsection{Discretization and implementation details}
First we discuss the case of least squares quasiconformal energy and later extend it to the generalized case. We discretize the the equation \eqref{eq: qc} on a linear triangular mesh $\mathcal{T}$, which is encoded as a list of vertices $V$ and a list of triangles $\mathcal{T}$ (by a mild abuse of notation) taking indices into $V$. We denote the number of vertices by $|V|$ and number of triangles by $|\mathcal{T}|$. The second order operator $\nabla\cdot(A\nabla)$ is a variant of the Laplacian. Its discretization amounts to expressing the following sum
\begin{equation} \label{eq: ele}
 \sum_{T\in\mathcal{T}}\langle P\nabla\varphi(T),P\nabla\phi(T)\rangle_{T}
\end{equation}
for any test functions $\varphi, \phi$ defined on the vertices $V$ into a quadratic form  $\varphi^T \mathcal{L}_{\mu} \phi$. Here, $\langle \cdot, \cdot \rangle_{T}$ is the 2D Euclidean inner product scaled by the area of the triangle $T$. On an oriented triangle $T = [v_0, v_1, v_2] $, since the functions being considered are linear on triangles, the gradient of a function 
$\varphi = (\varphi_0, \varphi_1, \varphi_2) $ on this triangle is given by
\[
\nabla\varphi=\frac{1}{2\text{Area}(T)}\begin{bmatrix}0 & -1\\ 1 & 0 \end{bmatrix}\sum_{i=0,1,2}\varphi_{i}(v_{2+i}-v_{1+i}).
\]
where indexing modulo $3$ as appropriate.
Observe that
\[
\begin{bmatrix}1-\rho & -\tau \\ -\tau & 1+\rho\end{bmatrix} \begin{bmatrix}0 & -1\\ 1 & 0 \end{bmatrix} = \begin{bmatrix}0 & -1\\ 1 & 0 \end{bmatrix} \begin{bmatrix}1+\rho & \tau \\ \tau & 1-\rho\end{bmatrix} 
\]
Hence, denoting
\[
v'_i = P^{-1}v_i,
\]
we have
\[
P\nabla\varphi=\frac{1}{2\text{Area}(T')}\begin{bmatrix}0 & -1\\ 1 & 0 \end{bmatrix}\sum_{i=0,1,2}\varphi_{i}(v'_{2+i}-v'_{1+i}).
\]
Therefore, denoting the triangle $T' = [v'_0, v'_1, v'_2]$,
\[
\begin{array}{rcl} \langle P\nabla\varphi(T),P\nabla\phi(T)\rangle_{T} & = & -\frac{1}{4\text{Area}(T')}\sum_{i,j}\varphi_{i}\phi_{j}(v'_{2+i}-v'_{1+i})^{T}(v'_{2+j}-v'_{1+j})\\ & = & -\sum_{i,j}\omega_{ij}(T)\varphi_{i}\phi_{j} \end{array}
\]
where 
\[
\omega_{ij}(T)=\begin{cases} -\frac{1}{2}\cot\theta'_{k},\,k\neq i,j & \text{if }i\neq j\\ \frac{1}{2}(\cot\theta'_{i+1}+\cot\theta'_{i+2}) & \text{if }i=j \end{cases} 
\]
where $\theta'_k$ is the angle of at the vertex $v'_k$. This is noting but the {\it cotangent weight} with angles changed by the effect of $\mu$. 

The expression for the area integral $\mathcal{A}(u,v)$ is unchanged from the least square conformal case \cite{levy2002least}, and hence we have:
\begin{corollary} \label{cor:1}
The quadratic form (up to a nonzero constant scaling) associated to the triangular mesh discretization of the least squares quasiconformal energy is given by the following $2|V| \times 2|V|$ matrix
$$
M := \text{diag}(\mathcal{L}_{\mu},\mathcal{L}_{\mu}) - 2\mathcal{A},
$$
which is applied to the $2|V|$-coordinate vector ${\bf x} = (u,v)$. The discrete version of Equation \eqref{eq: qc} is then $M\bf{x}=0$. Here $\mathcal{L}_{\mu}$ is the cotangent matrix associated to the operator $\nabla \cdot A\nabla$,
and $\mathcal{A}$ is the area matrix of the target triangular mesh.
\end{corollary}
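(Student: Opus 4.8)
The plan is to assemble the discrete $2|V|\times 2|V|$ matrix block by block, taking the continuous decomposition of $E_{LSQC}$ recorded after Definition \ref{def:qc energy} as the starting point. First I would expand the squared norm in the integrand of $E_{LSQC}$ into three pieces,
\[
\frac12\Big\|P\nabla u+\begin{bmatrix}0 & -1\\ 1 & 0\end{bmatrix}P\nabla v\Big\|^{2}=\frac12\|P\nabla u\|^{2}+\frac12\|P\nabla v\|^{2}+\Big\langle P\nabla u,\begin{bmatrix}0 & -1\\ 1 & 0\end{bmatrix}P\nabla v\Big\rangle,
\]
and invoke the identity $E_{LSQC}(u,v;\mu)=\big(E_{\tilde A}(u;\Omega)+E_{\tilde A}(v;\Omega)\big)-\mathcal{A}(u,v)$ already established, so that integrating the cross term returns $-\mathcal{A}(u,v)$. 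The two Dirichlet-type pieces are precisely the quantities whose triangle-wise form $\sum_{T}\langle P\nabla\varphi,P\nabla\phi\rangle_{T}=\varphi^{T}\mathcal{L}_{\mu}\phi$ was computed in the paragraph preceding the corollary; taking $\varphi=\phi=u$ and then $\varphi=\phi=v$ shows that $E_{\tilde A}(u;\Omega)$ and $E_{\tilde A}(v;\Omega)$ discretize to $u^{T}\mathcal{L}_{\mu}u$ and $v^{T}\mathcal{L}_{\mu}v$, which together form the block-diagonal matrix $\text{diag}(\mathcal{L}_{\mu},\mathcal{L}_{\mu})$.

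Next I would discretize the coupling $\mathcal{A}(u,v)=\int_{\Omega}(u_{y}v_{x}-u_{x}v_{y})\,dxdy$. Since this integrand is identical to the one appearing in the least squares conformal energy, its triangle-wise assembly is unchanged, and I can quote the discretization of \cite{levy2002least}: on each triangle it is a bilinear form in the nodal values of $u$ and $v$, which assembles into the signed area matrix $\mathcal{A}$ acting on $\mathbf{x}=(u,v)$, so that $\mathbf{x}^{T}\mathcal{A}\mathbf{x}$ reproduces $\mathcal{A}(u,v)$. Writing this coupling as $u^{T}Bv$ with $B$ the relevant off-diagonal $|V|\times|V|$ block and symmetrizing, the full discrete quadratic form of $2E_{LSQC}$ becomes $\mathbf{x}^{T}\big(\text{diag}(\mathcal{L}_{\mu},\mathcal{L}_{\mu})-2\mathcal{A}\big)\mathbf{x}$; the factor $2$ is exactly the one produced when the cross term appears doubled after expanding the square (equivalently when $u^{T}Bv+v^{T}B^{T}u=2u^{T}Bv$ is split symmetrically into the two off-diagonal blocks), and the overall $2$ relative to $E_{LSQC}$ is the ``nonzero constant scaling'' allowed in the statement. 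This is the claimed matrix $M$.

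Finally, for the equation $M\mathbf{x}=0$ I would use that $M$ is symmetric, so the gradient of the quadratic form $\mathbf{x}^{T}M\mathbf{x}$ is $2M\mathbf{x}$ and stationarity is equivalent to $M\mathbf{x}=0$; since the continuous critical point equations of $E_{LSQC}$ are exactly the Neumann system \eqref{eq: qc}, the identity $M\mathbf{x}=0$ is its discrete counterpart, with the area matrix $\mathcal{A}$ playing the role of the discrete Neumann coupling. The main obstacle I anticipate is the sign-and-factor bookkeeping: making sure the doubled cross term is split into genuinely symmetric off-diagonal blocks equal to $-2\mathcal{A}$, that the discrete area matrix inherited from the conformal case matches the continuous $\mathcal{A}(u,v)$ with the correct sign, and that $M\mathbf{x}=0$ faithfully reproduces the boundary and interface terms of \eqref{eq: qc} rather than only the interior operator $\nabla\cdot(A\nabla)$.
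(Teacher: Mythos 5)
Your proposal is correct and follows essentially the same route as the paper: both assemble $M$ from the cotangent discretization $\mathcal{L}_{\mu}$ of the Dirichlet-type terms computed in the preceding paragraph together with the area matrix of \cite{levy2002least}, and both identify $M\mathbf{x}=0$ with the discrete critical-point system. The only cosmetic difference is that the paper verifies the boundary conditions of \eqref{eq: qc} directly by inspecting the rows of $M$ indexed by boundary vertices (using that $\mathcal{A}$ is supported there), whereas you obtain the same conclusion from stationarity of the symmetric quadratic form --- the very bookkeeping you flag at the end.
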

\begin{proof}
It follows from the discussion above that the $|V| \times |V|$ matrix $\mathcal{L}_{\mu}$ corresponds to the discretization of the differential operator $\nabla\cdot(A\nabla)$. The area matrix matrix has non-zero entries only corresponding to the boundary vertices. It is then immediate to check, by examining the corresponding rows of the linear system, that for interior vertices, the solution $(u,v)$ satisfies $-\nabla\cdot(A\nabla u)  =0 $ and
$-\nabla\cdot(A\nabla v)  =0 $, while on the boundary, it satisfies $\partial_{A\nu}u +\nabla v \times \nu =0 $ and
$\partial_{A\nu}v -\nabla u \times \nu =0 $ respectively.
\end{proof} 

In order to obtain a nontrivial solution to the system $M\mathbf{x}= 0$, it turns out one need only pin down at least two vertices. The precise statement is contained in the following proposition.
\begin{proposition}
Suppose $|\mu|$ is uniformly bounded away from $1$, and the triangular mesh is connected and has no dangling triangles (i.e. there are no triangles which share a common vertex but no common edge). Let $I_{\text{pin}}$ be the indices of the points to be pinned, with cadinality $|I_{\text{pin}}|\geq 2$, $I_{\text{free}}$ be the indices for the free points, and $M$ be the matrix defined as in Corollary \ref{cor:1}. Then the $2|I_{\text{free}}| \times2|I_{\text{free}}|$ sub-matrix $M_{free}$ of $M$ indexed by the free points has full rank.
\end{proposition}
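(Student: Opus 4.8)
The plan is to exploit the positive-semidefiniteness of $M$ together with a precise description of its kernel. By Corollary~\ref{cor:1}, $\mathbf{x}^{T}M\mathbf{x}$ equals (up to a fixed positive scaling) the discretized least squares quasiconformal energy $E_{LSQC}(u,v;\mu)$ evaluated at the piecewise-linear interpolant of $\mathbf{x}=(u,v)$; since this energy is manifestly a sum of squared norms it is nonnegative, so $M$ is symmetric positive semidefinite, and its principal submatrix $M_{\mathrm{free}}$ is then positive semidefinite as well. Hence ``full rank'' is equivalent to ``positive definite.'' I would first reduce the claim to a statement about $\ker M$: given any $\mathbf{x}_{\mathrm{free}}$ with $M_{\mathrm{free}}\mathbf{x}_{\mathrm{free}}=0$, extend it to $\mathbf{x}\in\mathbb{R}^{2|V|}$ by setting the pinned coordinates to $0$; because the pinned entries vanish, $\mathbf{x}^{T}M\mathbf{x}=\mathbf{x}_{\mathrm{free}}^{T}M_{\mathrm{free}}\mathbf{x}_{\mathrm{free}}=0$, and semidefiniteness forces $M\mathbf{x}=0$. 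Thus $M_{\mathrm{free}}$ fails to be full rank exactly when some nonzero element of $\ker M$ vanishes at all the pinned vertices, and it suffices to rule this out.

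The key step is to show that $\ker M$ is the $4$-dimensional space of post-compositions $\{af_{0}+b:a,b\in\mathbb{C}\}$ of a single nonconstant solution $f_{0}$ (degenerating to the $2$-dimensional space of constants if no nonconstant solution exists). A vector lies in $\ker M$ iff the energy vanishes, i.e. iff on every triangle $T$ the affine map $f=u+iv$ solves the Beltrami equation with coefficient $\mu(T)$. By \eqref{eq:2}, such an affine map is determined, given the domain triangle and $\mu(T)$, up to a similarity $w\mapsto aw+b$; equivalently, once the images of two of its vertices are fixed the image of the third is uniquely determined, since $|\mu|\neq 1$ makes the map nondegenerate. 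I would propagate this determination across shared edges: the hypotheses that the mesh is connected with no dangling triangles guarantee that the triangle-adjacency (dual) graph is connected, so starting from one triangle the whole map is fixed by the four real parameters $(a,b)$, giving $\dim_{\mathbb{R}}\ker M\le 4$. Since $w\mapsto aw+b$ carries solutions to solutions, $\ker M$ equals this $4$-dimensional family whenever a nonconstant solution exists. The ``no dangling triangles'' hypothesis is essential precisely here: a triangle attached along a single vertex would contribute two extra independent kernel directions that no pinning elsewhere could eliminate.

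Finally I would kill this kernel by pinning. Writing a kernel element as $af_{0}+b$ and imposing vanishing at two distinct pinned vertices $p_{1},p_{2}$ gives $af_{0}(p_{j})+b=0$, whence $a\bigl(f_{0}(p_{1})-f_{0}(p_{2})\bigr)=0$ and $b=-af_{0}(p_{1})$; if $f_{0}(p_{1})\neq f_{0}(p_{2})$ this forces $a=b=0$, so the only kernel element vanishing at the pins is $0$ (the constant part of $\ker M$ being already removed by a single pin), and $M_{\mathrm{free}}$ is full rank. The main obstacle is exactly the separation condition $f_{0}(p_{1})\neq f_{0}(p_{2})$: local nondegeneracy ($|\mu|\neq 1\Rightarrow J_{f}\neq 0$) shows $f_{0}$ is injective on each triangle and hence separates any two vertices sharing a triangle, but excluding the possibility that a nonconstant solution collapses two far-apart vertices is a global matter. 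I expect this to follow from the nondegeneracy together with a discrete maximum/uniqueness principle for the $A$-harmonic system (a Tutte/Rado--Kneser--Choquet-type injectivity statement in the modified metric $A$), and this is the step I would treat most carefully.
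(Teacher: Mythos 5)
Your argument takes a genuinely different route from the paper's. The paper follows L\'evy et al.\ \cite{levy2002least}: it writes the discrete energy as $\|B\mathbf{x}-\mathbf{b}\|^2$ with one row per triangle and argues by induction over an incremental construction of the mesh (Glue/Join) that $B$ retains full column rank, the key point being that the coefficient attached to each newly glued vertex is nonzero because $|\mu_T|\neq1$. You instead characterize $\ker M$ as the at most four-real-dimensional family $\{af_0+b\}$ by propagating the per-triangle similarity ambiguity across the connected dual graph, and then try to kill this kernel with the two pins. The propagation step is correct and is really the same mechanism as the paper's Glue step. However, the step you yourself flag --- the separation condition $f_0(p_1)\neq f_0(p_2)$ --- is a genuine gap, and the discrete maximum-principle/Rad\'o--Kneser--Choquet route you propose cannot close it: there is no boundary datum here for such a theorem to act on, and a nonconstant discrete solution of the prescribed-$\mu$ problem is in general \emph{not} injective, because the prescribed triangle shapes define a cone metric whose developing map may overlap itself.

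Concretely, take the regular hexagon fan $v_0=0$, $v_k=e^{i\pi(k-1)/3}$ ($k=1,\dots,6$), $T_k=[v_0,v_k,v_{k+1}]$, and let $\mu_{T_k}$ be the Beltrami coefficient of the affine map sending $[v_0,v_k,v_{k+1}]$ to $[0,e^{2\pi i(k-1)/3},e^{2\pi ik/3}]$; one computes $|\mu_{T_k}|=1/2$ on every face. The mesh is connected and free of dangling triangles in the intended sense (every vertex star is a full fan), yet $f_0:v_k\mapsto e^{2\pi i(k-1)/3}$ is a nonconstant zero of the energy (the target cone angle at $v_0$ is $4\pi$) with $f_0(v_1)=f_0(v_4)$. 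Pinning $p_1=v_1$, $p_2=v_4$ makes $f_0-1$ a nonzero kernel vector of $M$ vanishing at both pins, so $M_{\mathrm{free}}$ is singular and the claim fails for this choice of pins. The statement therefore needs a further hypothesis --- e.g.\ that the two pins lie on a common triangle (where injectivity of the nondegenerate affine solution gives separation for free), that the prescribed $\mu$ is realized by an embedding, or simply the generic condition $f_0(p_1)\neq f_0(p_2)$. To be fair, the paper's own sketch does not confront this either: the base case of its induction, a single triangle, has full column rank only once both pins are already present, so the same separation issue is hidden there; your kernel formulation has the merit of exposing it. One last caveat: your reduction ``$\mathbf{x}^TM\mathbf{x}=0\Rightarrow M\mathbf{x}=0$'' relies on positive semidefiniteness and hence on $|\mu|<1$ on every face, so as written it does not cover the alternating case, where the unsigned-area form is indefinite.
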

\begin{proof}
The idea is essentially identical to the proof in \cite{levy2002least} and we only sketch the main argument and the modification needed here. The key observation is that the (topological) triangular mesh satisfying our assumption can be constructed incrementally using two operations:
\begin{itemize}
    \item Glue: adding a new vertex and connecting it to two neighboring vertices;
    \item Join: joining two existing vertices.
\end{itemize} 
The proof is based on this observation. To wit, we express $E_\text{LSQC}$ as $\|B\mathbf{x} - \mathbf{b}\|^2$ where $B$ is of size $2|\mathcal{T}|\times 2|I_{free}|$. It then suffices to prove $B$ has full rank. One then proceeds by proving the incremental construction preserves the full rank property. Since the modulus of the Beltrami coefficient associated to the new triangle is bounded away from $1$, the associated matrix coefficients are nonzero real numbers. And therefore the same argument in \cite{levy2002least} applies.
\end{proof}
\begin{remark}
We now have a geometric understanding in the discrete case. In fact, it can be regarded as a conformal map with the domain given a different {\it conformal structure}. This viewpoint in fact has been already demonstrated previously when we derive the Beltrami equation. In the case of a triangular mesh, this structure can be thought of the assignment the angle triples $(\alpha_1,\alpha_2,\alpha_3)$ to each triangular face of the mesh, or equivalently the associated Beltrami coefficient. Under this viewpoint we can relate many algorithms from conformal geometric processing to their quasiconformal counterparts.  
\end{remark}
\begin{remark}
It is usually a preferred practice to choose these two points far apart from each other to reduce excessive local scale change, as is the case in the conformal flattening task \cite{levy2002least}. This is because the triangle angles associated to the Beltrami coefficients as given may not be realizable as a planar mesh, and the associated least square quasiconformal energy can never reach zero.
\end{remark}

The implementation for the generalized quasiconformal energy is essentially identical to the previous case, except for the separate treatments for the oriented preserving and reversing parts of the domain. 
This means when assembling the matrix using triangle elements as in \eqref{eq: ele}, for the orientation preserving elements it is identical to the previous case, while for the orientation reversing elements we need to multiply a minus sign.

\begin{remark} \label{rem:unfold}
An unfolding map satisfies also an alternating Beltrami equation, thus it can be computed in the same way, since in order to assemble the matrix $M$ we need only to know the coefficients on each triangular mesh element. In fact, the Beltrami coefficient for the inverse map at corresponding location can be verified to be the same using a composition formula \cite{astala2008elliptic} which is local in nature. The computation of unfolding map will be important in the reinforcement iteration introduced later.
\end{remark}

In the sequel, solving (alternating) Beltrami equations using the described method will be denoted as
\[
\mathcal{T'} = \text{LSQC}(\mathcal{T},(\mu_{T})_{T\in\mathcal{T}},\mathcal{C}),\]
where $\mathcal{T'}$, $\mathcal{T}$ are the computed target triangular mesh and the domain mesh, respectively; $(\mu_{T})_{T\in\mathcal{T}}$ is the set of Beltrami coefficients on each face, and $\mathcal{C}$ is the set of constraints
\footnote{The source code is available at \url{https://github.com/sylqiu/Least-square-beltrami-solver}, and it supports the alternating Beltrami equation. }.

\section{Recovering flat-foldable surface from self-occlusion and the reinforcement iteration}  \label{sec3}
In this section we formalize the recovery problem and propose its solution using the framework developed in the previous section. 
\subsection{Problem description}
\label{sec3.1}
Following the notations as before, let $\Omega_{\Sigma}^{+},$ $\Omega_{\Sigma}^{-}$
be the two disjoint open sets in $\Omega$ specified by some singular
set configuration $\Sigma$ as the orientation-preserving part and
the reversing part, respectively. One possible formulation is to find
a folding homeomorphism $f$ such that 
\[
\arg\min_{f,\Sigma} E_{\text{GQC}}(u, v, \mu) 
\]
where $f = (u,v)$, 
\begin{equation} \label{eq:14} 
\mu =  \begin{cases} 
 0 &\text{ in }\Omega_{\Sigma}^{+}\\
 \infty &\text{ in }\Omega_{\Sigma}^{-}
\end{cases}
\end{equation}
subject to the constraints
\[
f\big|_{\Omega_{\text{vis}}}:\Omega_{\text{vis}}\to S_{\text{vis}},
\]
where $S_{\text{vis}}$ denote a subsets of the boundary and singular set that is not occluded and $\Omega_{\text{vis}}$ the corresponding points in the encoding domain.

Note that $\Sigma$ is a variable in the minimization problem. This
formulation as it stands seems very hard to implement. To resolve the difficulty, we ask for a factorization $f \circ \varphi = g$, where $f$ is flat-foldable, $g$ is some reference folding map that satisfies the constraints, and $\varphi$ is quasiconformal and preserves the constraints. 
It is in a spirit similar to the classical Stoilow factorization \cite{astala2008elliptic}.
Specifically, we introduce an initial singular set configuration $\Sigma_{0}$ of the same topology:
\begin{equation}
\arg\min_{f,\varphi}\left(E(f,\varphi):=E_{\text{GQC}}(u, v, \mu) \right)\label{eq:6}
\end{equation}
where $ g =(\tilde{u}, \tilde{v})$ are such that $\tilde{u}= u \circ \varphi, \tilde{v} = v \circ \varphi$, $\mu = 0$ on $\Omega_{\varphi(\Sigma_0)}^{+}$ and $\mu = \infty$ on $\Omega_{\varphi(\Sigma_0)}^{-}$ the minimization is subject to the constraints
\[
f \big|_{\Omega_{\text{vis}}}:\Omega_{\text{vis}}\to S_{\text{vis}}.
\]
Now the argument $f$, defined on $\Omega$, ranges in the set of folding
homeomorphisms, and $\varphi:\Omega\to\Omega$ ranges in the set
of quasiconformal homeomorphisms.
Note that if $\varphi$ gives the ``correct'' singular set configuration,
then the above energy vanishes for $f$ that is generalized conformal. 
Note that there might be multiple solutions satisfying the same partial constraints. To formalize the situation we define the admissibility of solutions described below in Definition \ref{def:admissible}.

\subsection{The reinforcement iteration}
The iteration consists of two key steps, which find improved folding and unfolding maps given the current unfolded and folded surfaces in an alternating fashion. Intuitively, each unfolding step tries to find a better singular set configuration, while each folding step tries to conform with the given data.

To motivate, let us first consider the case where the entire boundary and singular set of the folded surface $S$ is given. In order to parametrize $S$, 
one needs to start with some initial singular set configuration. We can easily construct a map $g:\Omega_{\Sigma_0} \to S$ by enforcing all the correct singular set and boundary constraints.

In general, $g$ will not be a conformal map, as the initial singular set configuration may not coincide with the reality. So instead, there exists a quasiconformal
map 
\[
\varphi:\Omega_{\Sigma_0}\to\Omega_{\Sigma}
\]
that maps the initial configuration to the correct one. Its relation with the desired generalized conformal map $f$ can be observed
as a commutative diagram below
\begin{equation}
\begin{tikzcd} \label{diagram:1}
\Omega_{\Sigma_0} \arrow[r,"\varphi"] \arrow[d,"g"] & \Omega_{\Sigma} \arrow[ld,"f"]\\
S 
\end{tikzcd}
\end{equation}
In this case, once we obtain the folded surface $S$ from the map $g$, since the entire boundary and singular set data is given, the generalized conformal unfolding homeomorphism $f^{-1}$ can be constructed by solving the alternating Beltrami equation with 
\begin{equation*}  
\nu =  \begin{cases} 
 0 &\text{ in }S^{+}\\
 \infty &\text{ in }S^{-}
\end{cases}
\end{equation*}
And in this way the map $\varphi$ is obtained by the composition $f^{-1}\circ g$. \\

However, when only partial data of $S$ is provided, it is no longer possible to obtain the folded surface $S$ by constructing $g$ in the above manner. We need to find the folded surface and its Euclidean singular set configuration simultaneously. We shall be looking for a folded surface that satisfies the following properties.
\begin{definition} \label{def:admissible}
Let $S_{\text{vis}}$ be a set of partial boundary and singular set data, and $\Omega_{\Sigma_{0}}$ be the domain with an initial singular set configuration.  A folded surface $S$ is called admissible if 
\begin{enumerate}
\item Topological equivalence: The singular set configuration of $S$ is of the same topological type with the target surface.
\item Data correspondence: There is a subset $C\subset S$ such that there is a isometry from $C$ to $S_{\text{vis}}$.
\item Cycle consistency: There exist maps $g,\varphi,f$ such that $f$ is flat-foldable, and the diagram \eqref{diagram:1} commutes.
\end{enumerate}
\end{definition}

We now proceed to describe a fixed-point-like algorithm of finding some admissible folded surface $S$ and its parametrization simultaneously.  Let $n$ be the current iteration number, $\Omega_{\Sigma_n} = \Omega_{\varphi_n(\Sigma_0)}$, and
$
g_{n}:\Omega_{\Sigma_{n-1}}\to S_{n}
$
be a generalized quasiconformal folding homeomorphism that satisfies
the constraints 
\[
g_{n}\big|_{\Omega_{\text{vis}}}:\Omega_{\text{vis}}\to S_{\text{vis}},
\]
where $\Omega_{\text{vis}}\subset\Omega_{\Sigma_(n-1)}$ is the corresponding subset corresponding to the partial boundary and singular set data $S_\text{vis}$. This step promotes data fidelity. Let $h_{n}:S_{n}\to\Omega_{\Sigma_{n}}$ be unfolding homeomorphism, which is obtained by solving the alternating Beltrami equation with 
\begin{equation} \label{eq:nu}
\nu = \begin{cases} 
 0 &\text{ in }S_{n}^{+}\\
\infty &\text{ in }S_{n}^{-}
\end{cases}
\end{equation}
with enforcing the original shape constraints of $\Omega_{\Sigma_{n}}$. This step minimize the generalized conformal distortion based on fitted surface $S_n$. We note that both the $S_{\text{vis}}$ and shape constraints are essential for convergence of the algorithm, in particular implicitly decreasing the area distortion of the map. The next map $g_{n+1}$ is constructed based on the
updated domain with its singular set configuration 
\[
\Sigma_{n}=h_{n}\circ g_{n}(\Sigma_{n-1}).
\]
As $n\to \infty$, we want $g_{n}$ to converge to a generalized conformal map $f:\Omega\to S$, and the composition
\[
h_{n}\circ g_{n}:=\psi_{n}
\]
converges to $id_{\Omega}$, while 
$$
\psi_{n}\circ\cdots\circ\psi_{2}\circ\psi_{1}:=\varphi_n
$$ converges to a quasiconformal map that transforms
the initial singular set configuration to a desirable one. This is shown
schematically in the following diagram
\begin{equation}
\begin{tikzcd} \label{diagram:2}
\Omega_{\Sigma_{0}} \arrow[r,"\psi_{1}"] \arrow[d,"g_1"] 
& \Omega_{\Sigma_{1}} \arrow[r,"\psi_{2}"] \arrow[d,"g_2"]  
& \Omega_{\Sigma_{2}} \arrow[r,"\psi_{3}"] \arrow[d,"g_3"] 
& \cdots \Omega_{\Sigma_{n-1}} \arrow[r,"\psi_{n}"] \arrow[d,"g_n"]  
& \Omega_{\Sigma_{n}} \cdots\\
S_{1} \arrow[ru,"h_1"] 
& S_{2} \arrow[ru,"h_2"] 
& S_{3} \arrow[ru,"h_3"]
& S_{n} \arrow[ru,"h_n"]
\end{tikzcd}
\end{equation}
In each step we keep enforcing the available data $S_{\text{vis}}$ by the map $g_n$, and by $h_n$ we keep enforcing the known boundary shape of $\Omega$, hence the name reinforcement iteration. The overall algorithm is summarized as in Algorithm \ref{algo:1}.

\begin{algorithm}
\caption{Reinforcement Iteration}

{\bf Inputs: }Initialized domain $\Omega_{\Sigma_0}$, partial data from folded surface $S_{\text{vis}}$, termination threshold $\epsilon>0$.\\
{\bf Outputs :} Optimal mapping $\varphi^*$ such that $\varphi^*(\Omega_{\Sigma_0})$ has optimal singular set configuration.

\vspace{2pt}
\hrulefill
\vspace{2pt}
\begin{algorithmic}
\State{Construct $g_{1}$, $h_1$; compute $\varphi_{1} = u_{1}\circ f_{1} $.}
\State{Evaluate $E(g_{1},\varphi_{1})$, $E(g_{0},\varphi_{0})=0$, $n=1$.}
\While{$|E(g_{n},\varphi_{n})-E(g_{n-1},\varphi_{n-1})| > \epsilon$,} 
\State{Provided $\varphi_{n-1}$, construct $g_{n}$, $h_n$ using the partial data and domain shape constraints.} 
\State{Compute $\varphi_{n} = h_{n}\circ g_{n}\circ \varphi_{n-1}$.} 
\State{Evaluate $E(g_{n},\varphi_{n})$, $n\leftarrow n+1$.}
\EndWhile 
\end{algorithmic}  \label{algo:1}
\end{algorithm}

The basic steps are constructions of the maps
$g_{n}$ and $h_{n}$, whose implementations we now turn to.
\subsection{Implementation details}

\subsubsection{Construction of \texorpdfstring{$g_{n}$}{}}

Given the updated domain $\Omega_{\Sigma_{\varphi_{n-1}}}=\varphi_{n-1}(\Omega_{\Sigma_{0}})$,
we obtain the $S_{\text{vis}}$-enforced map $g$ by solving a alternating Beltrami equations subject to the constraints
\[
g\big|_{\Omega_{\text{vis}}}:\Omega_{\text{vis}}\to S_{\text{vis}}.
\]
Since we mainly aim for the partial boundary and singular data enforcement here, in all examples shown we simply set the Beltrami coefficients to be $0$ or $\infty$ on the orientation-preserving or -reversing regions, respectively.
This choice of coefficients is a good one if the initialized domain is close to an admissible one and it will become increasingly so provided the algorithm converges and $g_n$ close to generalized conformal. Note that if for some sub-domain the mapping is already known, then the corresponding part of Beltrami coefficients can of course be pre-computed in the first iteration. Then we can continue to set the Beltrami coefficients as before since the previous part of distortion has been ``factored out".

In terms of triangular meshes, suppose the domain triangular mesh is
$\mathcal{D}_{n-1},$ with $\mathcal{D}_{n-1}^{+}$,
$\mathcal{D}_{n-1}^{-}$ corresponds to $\Omega_{n-1}^{+}$
and $\Omega_{n-1}^{-}$, respectively; $S_{\text{vis}}$
is realized as certain constraint $C_\text{vis}$. Then the folded surfaced is obtained by 
\[
\mathcal{S}_{g_{n}}=\text{LSQC}(\mathcal{D}_{n-1},\{\mu_{T}\}_{\mathcal{D}_{\varphi_{n-1}}},C_\text{vis})
\]
where \footnote{In practice we take a complex number with large enough modulus (e.g. $\mu = 10^5$) in place of the $\infty$.}
\[
\mu_{T}=\begin{cases}
0 & \text{if }T\in\mathcal{D}_{n-1}^{+}\\
\infty & \text{if }T\in\mathcal{D}_{n-1}^{-}
\end{cases}.
\]

\subsubsection{Construction of \texorpdfstring{$h_{n}$}{}}

Given the folded surface constructed from the last step $S_{n}=g_{n}(\Omega_{n-1})$,
recall that the unfolding map $h_{n}:S_{n}\to\Omega_{\Sigma_{n}}$ is found by solving the minimization problem 
\begin{equation} \label{eq: h}
\arg\min_{h:S_n \to \Omega_{\Sigma_{n}}} E_{\text{GQC}}(h; \nu) 
\end{equation}
with $\nu$ as in \eqref{eq:nu}, and subject to the shape constraints
\[
h\big|_{\partial S_{n}}:\partial S_{n}\to \partial \Omega.
\]
$\varphi_{n}$ is then updated by $\varphi_{n} = h_{n}\circ g_{n}$.
In terms of triangular meshes, suppose the folded surface mesh is
$\mathcal{S}_{n},$ with $\mathcal{S}_{n}^{+}$,
$\mathcal{S}_{n}^{-}$ corresponds to $S_{n}^{+}$
and $S_{n}^{-}$, respectively; $\partial\Omega$
is realized as certain constraint $C_{\partial\Omega}$. Then the above minimization
can be solved by 
\[
\mathcal{D}_{\Sigma_{n}}=\text{LSQC}(\mathcal{S}_{n},\{\nu_{T}\}_{\mathcal{S}_{n}},C_{\partial\Omega})
\]
where 
\[
\nu_{T}=\begin{cases}
0 & \text{if }T\in\mathcal{S}_{n}^{+}\\
\infty & \text{if }T\in\mathcal{S}_{n}^{-}
\end{cases}.
\] 

\section{Further discussion and experimental results} \label{sec4}
We note that the desirable domain $\Omega^*$ is a fixed point of our iteration algorithm, which we write as $\Omega^* = \mathcal{F}(\Omega^*)$,
where $\mathcal{F}$ is the iteration map in operator form. Note that $\mathcal{F}$ depends on its argument $\Omega$ in a very non-linear way because of the intermediate folded surface $S_{n}$ we introduced in each iteration, whose computation requires the cotangent matrix associated to $\Omega_{\Sigma_n}$. But approximately, in each iteration the folding and unfolding operations are inverse to each other and therefore $\mathcal{F}$ is close to the identity. The convergence of fixed-point iteration is well studied in the literature, see \cite{combettes2004solving} and references therein. For example, the convergence will be implied by the $\alpha$-averaged property of $\mathcal{F}$.  As we can notice in Figure \ref{fig:iter_experiment}, as well as in many other experiments, the distortion of many of the interior mesh triangles can barely be noticed in the later phase of the iteration, while the meshes remain also well conditioned. As other fixed-point iterations, it is reasonable to expect that the iteration map under good conditioning of the mesh triangles, enough constraints and a good initial guess to have convergence. 

\begin{figure} 
    \centering
    \begin{subfigure}[b]{0.29\textwidth}
        \includegraphics[width=\textwidth]{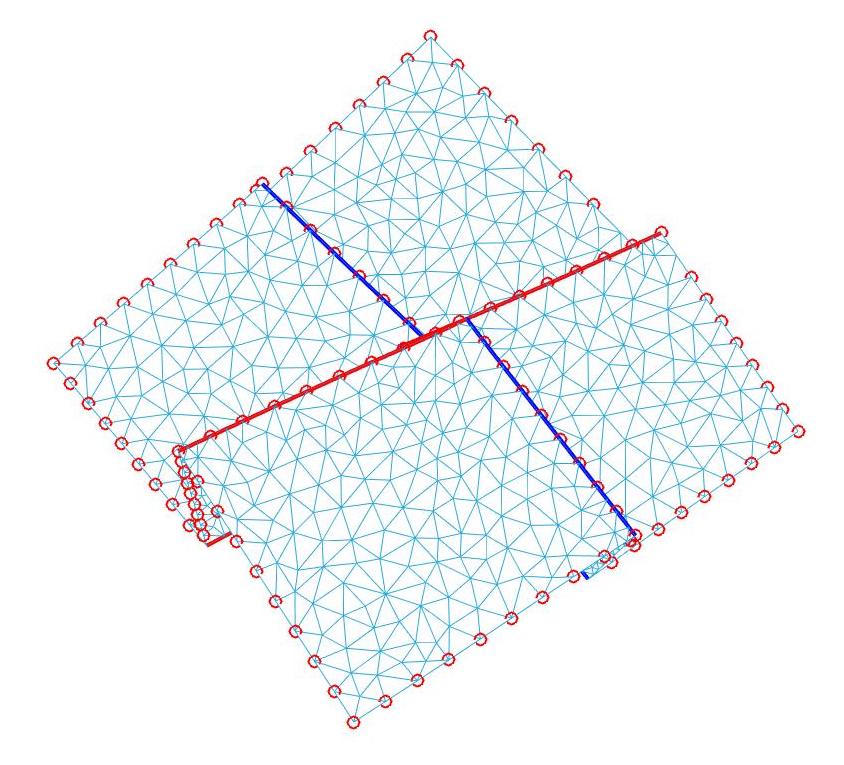}
        \caption{Folded surface with occlusion.}
        \label{fig:observed_fold}
    \end{subfigure} \quad
    \begin{subfigure}[b]{0.29\textwidth}
        \includegraphics[width=\textwidth]{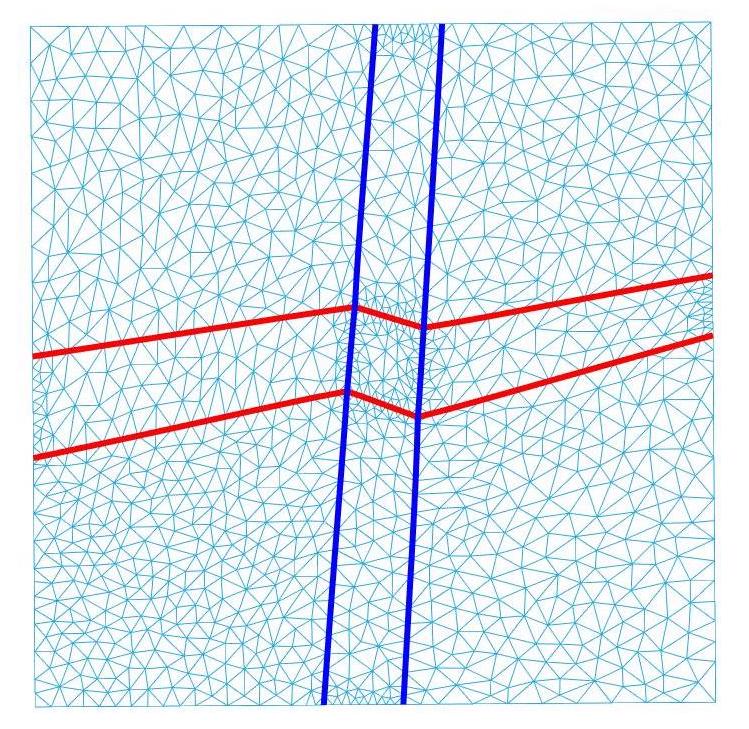}
        \caption{True unfolded surface.}
        \label{fig:true_unfold}
    \end{subfigure}
    \begin{subfigure}[b]{0.26\textwidth}
        \includegraphics[width=\textwidth]{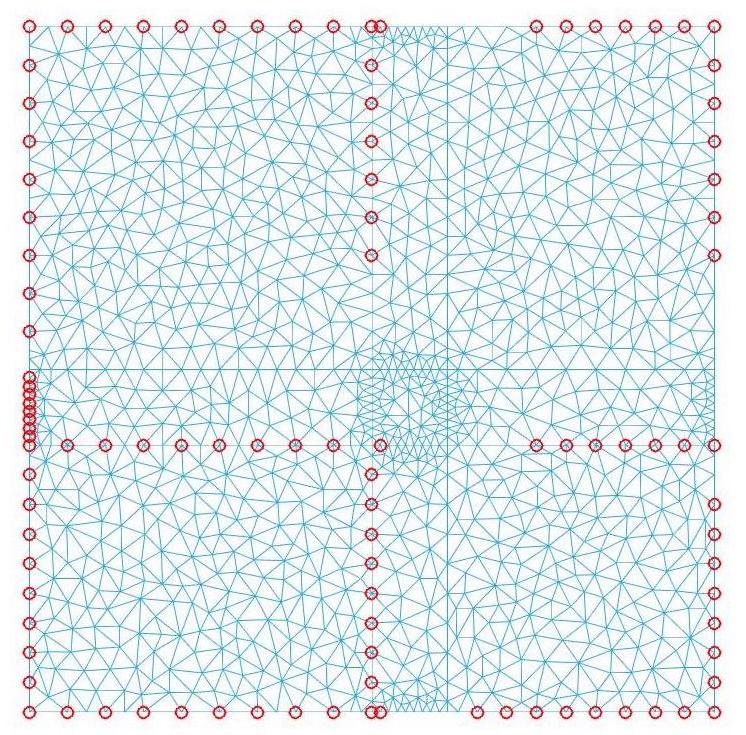}
        \caption{Initialised domain with partial data.}
        \label{fig:2unfold_init}
    \end{subfigure} \\
    \begin{subfigure}[b]{0.28\textwidth}
        \includegraphics[width=\textwidth]{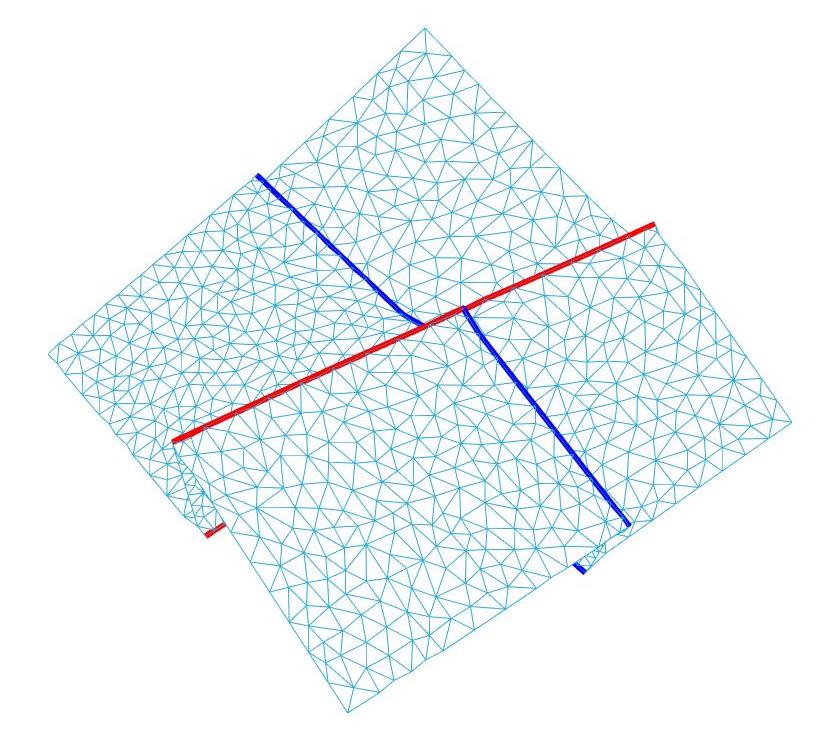}
        \caption{Iter = 1: Frontside of registered fold.}
        \label{fig:iter=1 reg+}
    \end{subfigure}
    \begin{subfigure}[b]{0.28\textwidth}
        \includegraphics[width=\textwidth]{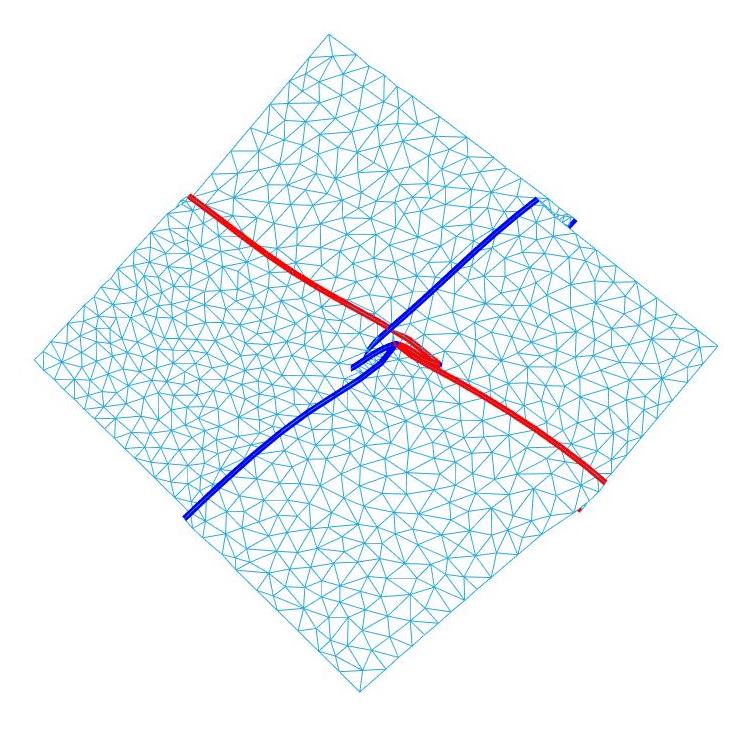}
        \caption{Iter = 1: Backside of registered fold.}
        \label{fig:iter=1 reg-}
    \end{subfigure}
    \begin{subfigure}[b]{0.28\textwidth}
        \includegraphics[width=\textwidth]{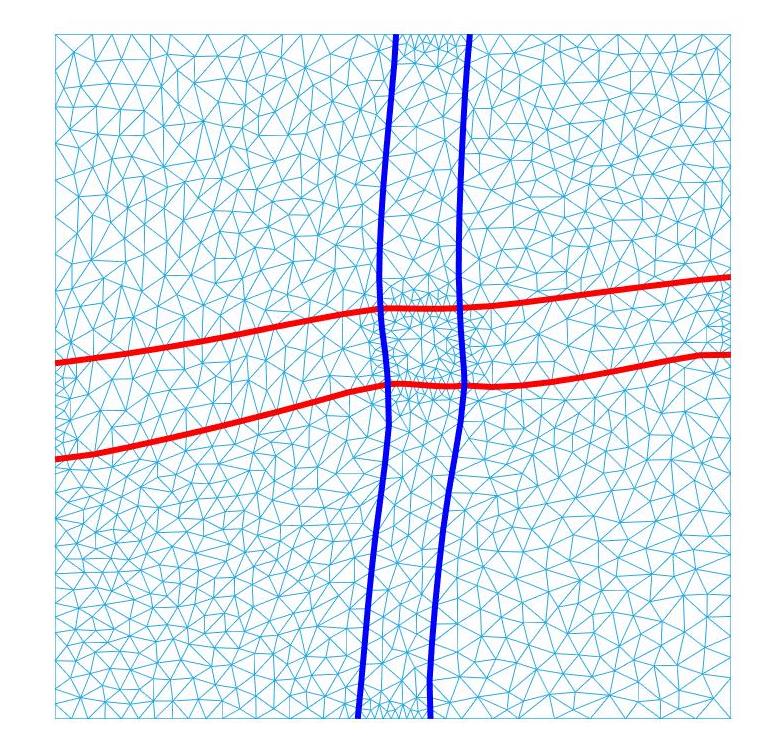}
        \caption{Iter = 1: Unfolded surface.}
        \label{fig:iter=1 unfold}
    \end{subfigure} \\
     \begin{subfigure}[b]{0.28\textwidth}
        \includegraphics[width=\textwidth]{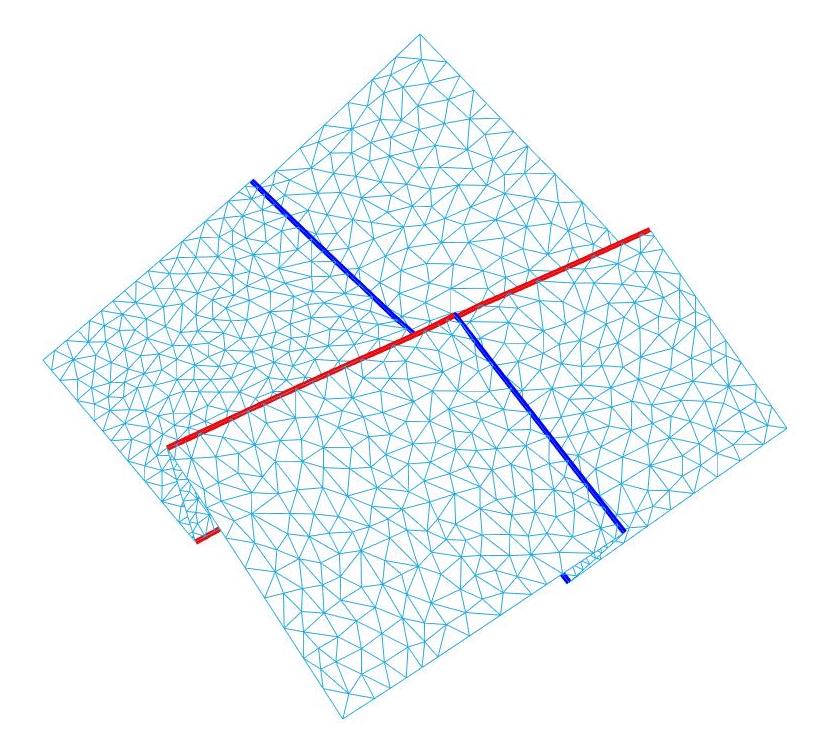}
        \caption{Iter = 50: Frontside of registered fold.}
        \label{fig:iter=5o reg+}
    \end{subfigure}
    \begin{subfigure}[b]{0.28\textwidth}
        \includegraphics[width=\textwidth]{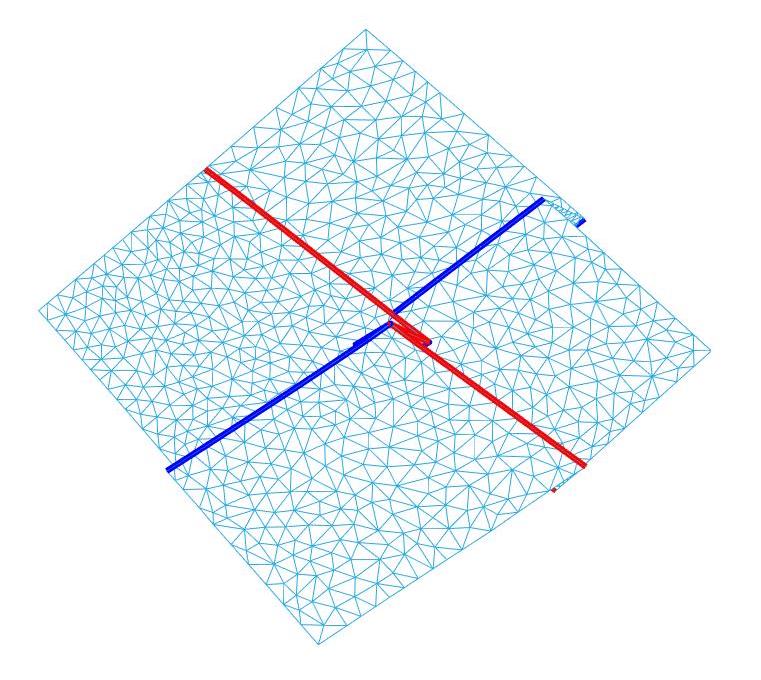}
        \caption{Iter = 50: Backside of registered fold.}
        \label{fig:iter=50 reg-}
    \end{subfigure}
    \begin{subfigure}[b]{0.28\textwidth}
        \includegraphics[width=\textwidth]{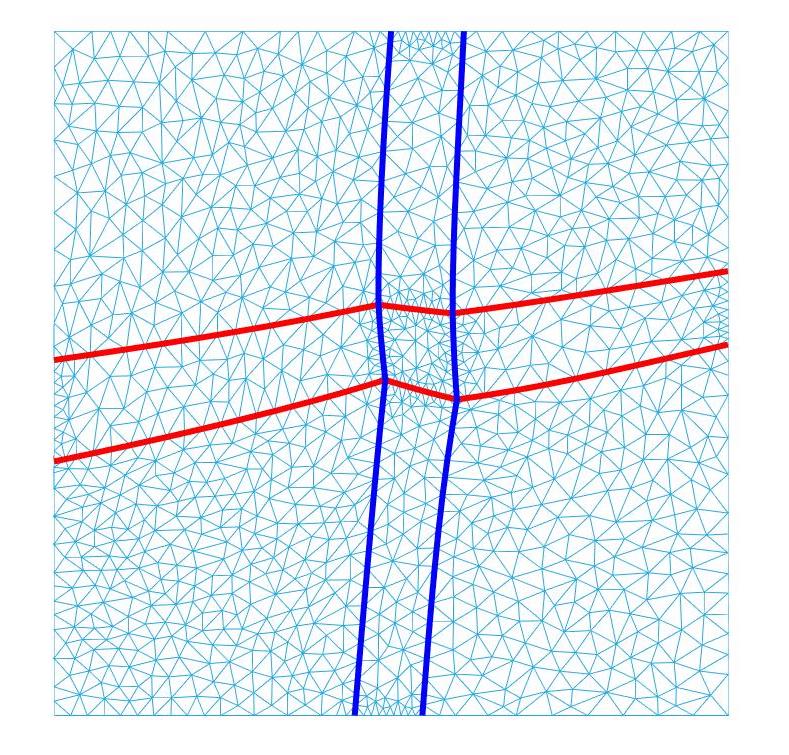}
        \caption{Iter = 50: Unfolded surface.}
        \label{fig:iter=50 unfold}
    \end{subfigure} \\
     \begin{subfigure}[b]{0.28\textwidth}
        \includegraphics[width=\textwidth]{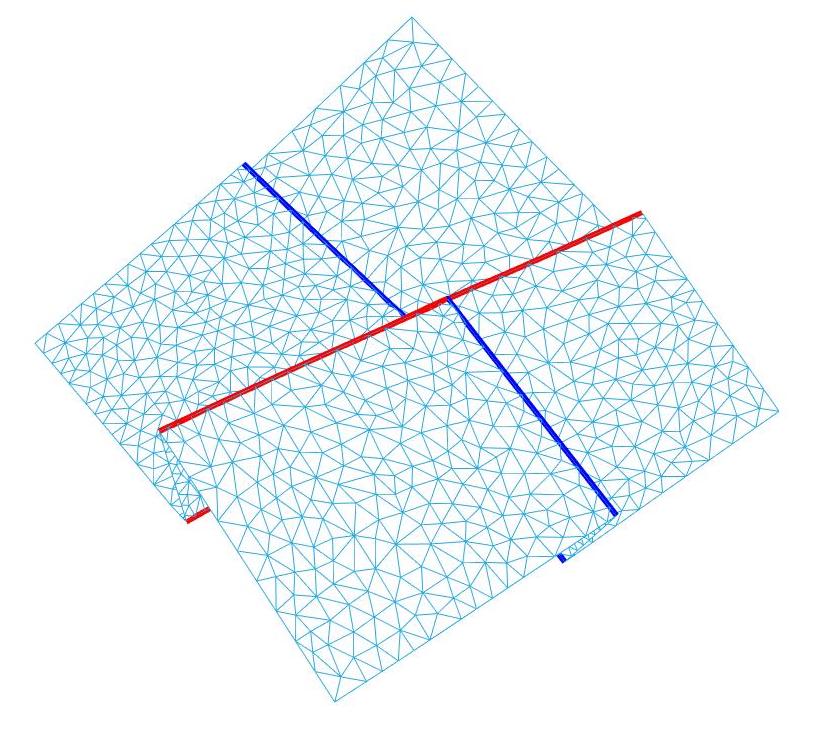}
        \caption{Iter = 200: Frontside of registered fold.}
        \label{fig:iter=120 reg+}
    \end{subfigure}
    \begin{subfigure}[b]{0.28\textwidth}
        \includegraphics[width=\textwidth]{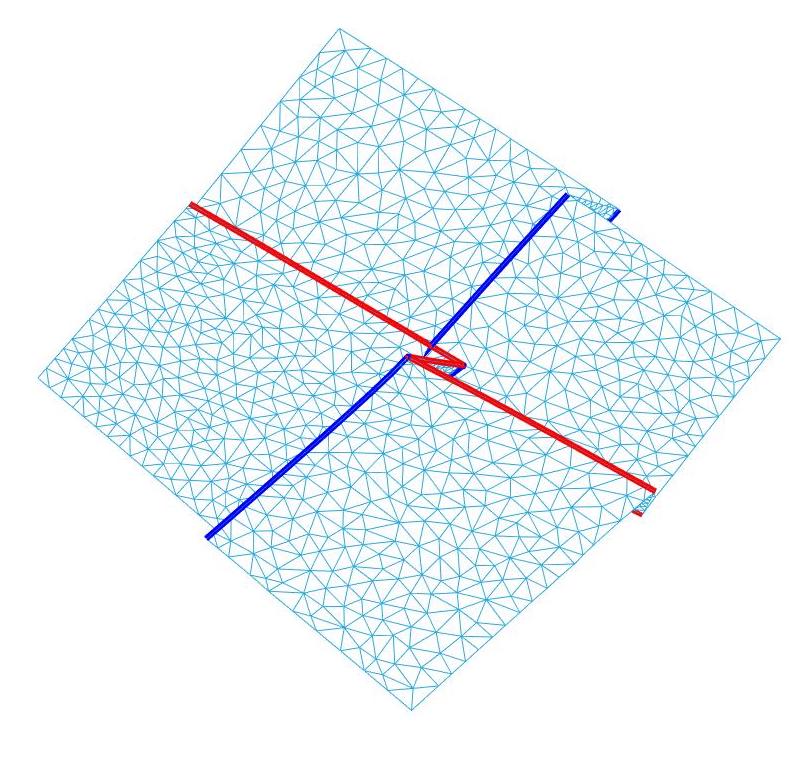}
        \caption{Iter = 200: Backside of registered fold.}
        \label{fig:iter=120 reg-}
    \end{subfigure}
    \begin{subfigure}[b]{0.28\textwidth}
        \includegraphics[width=\textwidth]{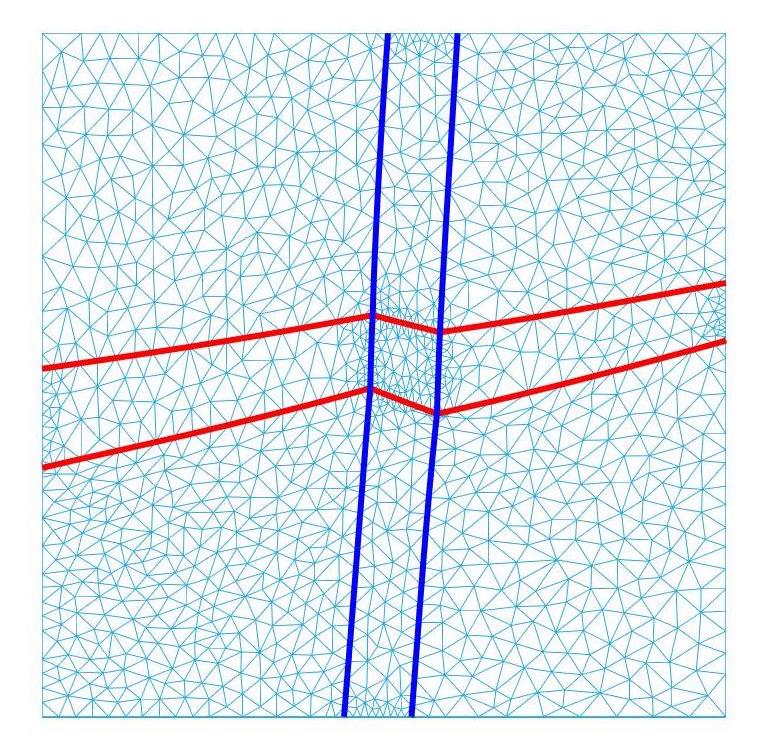}
        \caption{Iter = 200: Unfolded surface.}
        \label{fig:iter=120 unfold}
    \end{subfigure} \\
    \caption{Iteration results for the doubly folded surface: note that the folding lines gradually straighten out.}\label{fig:iter_experiment}
\end{figure}

\begin{figure}
    \centering
    \begin{subfigure}[b]{0.28\textwidth}
        \includegraphics[width=\textwidth]{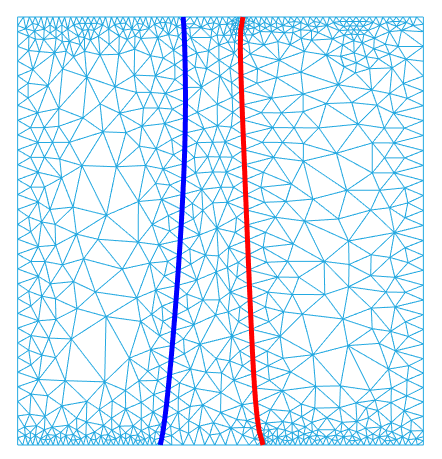}
        \caption{Iter = 1: unfolded domain.}
        \label{fig:1unfold_init}
    \end{subfigure}
    \begin{subfigure}[b]{0.285\textwidth}
        \includegraphics[width=\textwidth]{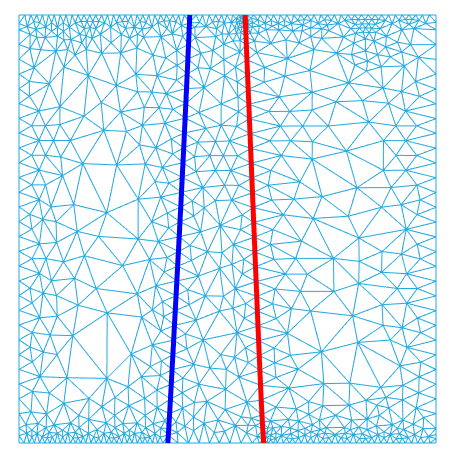}
        \caption{Iter = 50: unfolded domain.}
        \label{fig:1unfold_50}
    \end{subfigure}
    \begin{subfigure}[b]{0.295\textwidth}
        \includegraphics[width=\textwidth]{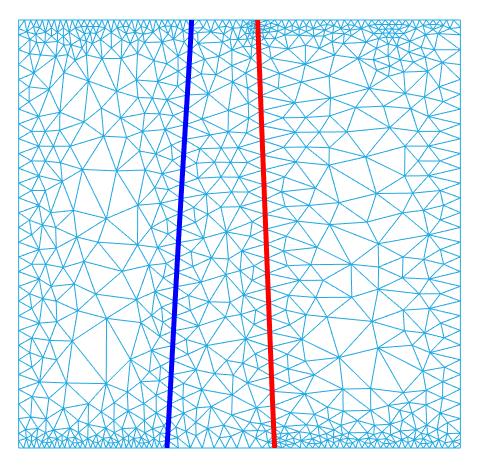}
        \caption{Iter = 200: unfolded domain.}
        \label{fig:1unfold_200}
    \end{subfigure}
    \begin{subfigure}[b]{0.28\textwidth}
        \includegraphics[width=\textwidth]{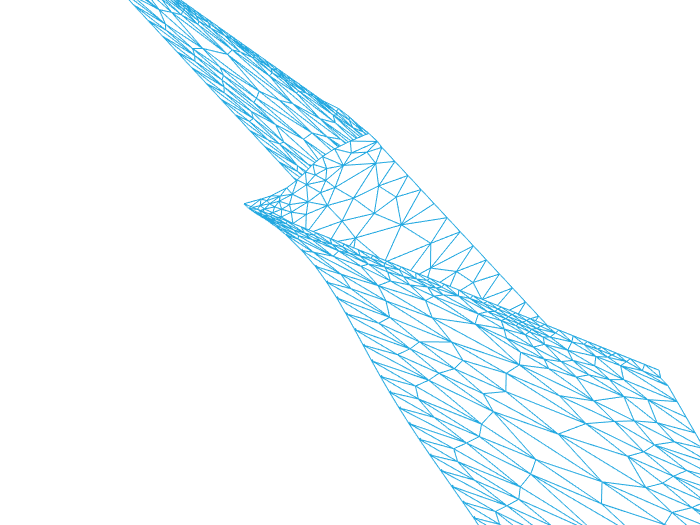}
        \caption{Iter = 1: initially registered 1-fold.}
        \label{fig:1fold_init}
    \end{subfigure}
    \begin{subfigure}[b]{0.28\textwidth}
        \includegraphics[width=\textwidth]{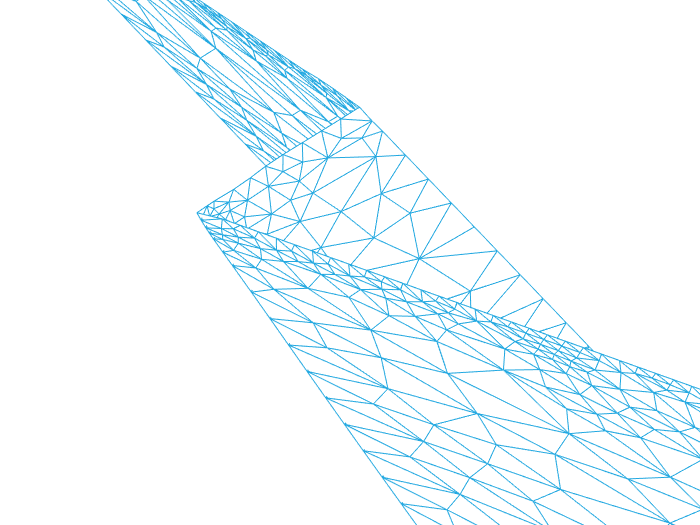}
        \caption{Iter = 200: finally registered 1-fold.}
        \label{fig:1fold_final}
    \end{subfigure}
    \caption{Iteration results for a once-folded surface: note the curved boundary in occlusion from the initial map is gradually straightened out and the folded domain becomes wider.}
    \label{fig:1fold_iter}
\end{figure}
\begin{figure}
\centering
    \begin{subfigure}[b]{0.28\textwidth}
        \includegraphics[width=\textwidth]{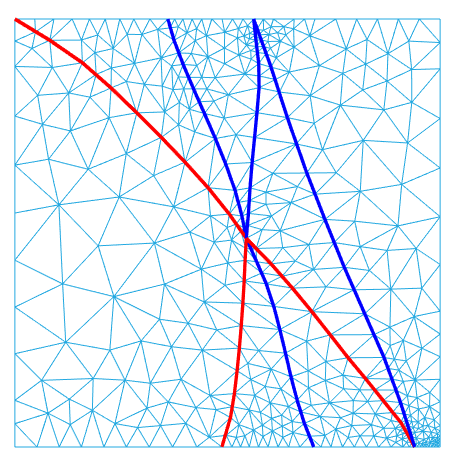}
        \caption{Iter = 1: unfolded domain.}
    \end{subfigure}
    \begin{subfigure}[b]{0.28\textwidth}
        \includegraphics[width=\textwidth]{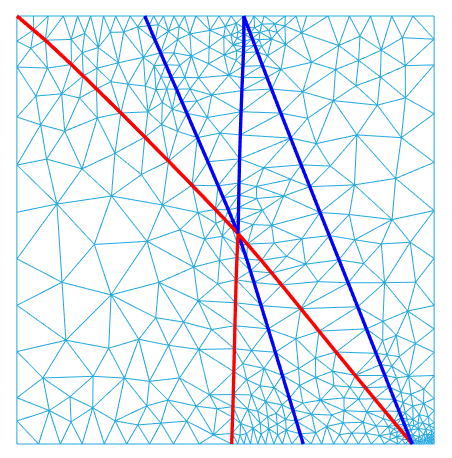}
        \caption{Iter = 50: unfolded domain.}
    \end{subfigure}
    \begin{subfigure}[b]{0.28\textwidth}
        \includegraphics[width=\textwidth]{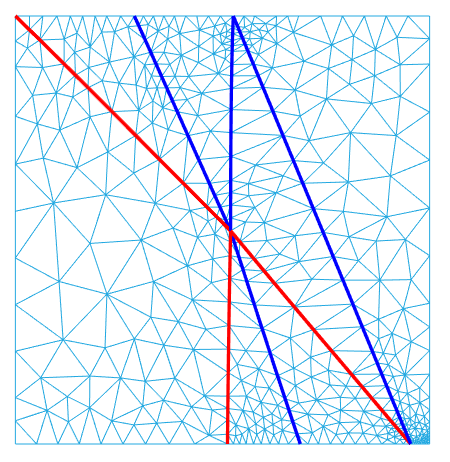}
        \caption{Iter = 200: unfolded domain.}
    \end{subfigure}
    \begin{subfigure}[b]{0.28\textwidth}
        \includegraphics[width=\textwidth]{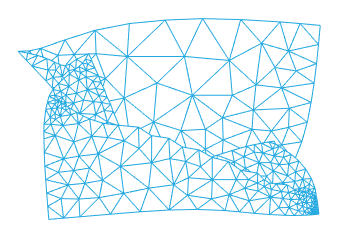}
        \caption{Iter = 1: initially registered cusp.}
    \end{subfigure}
    \begin{subfigure}[b]{0.29\textwidth}
        \includegraphics[width=\textwidth]{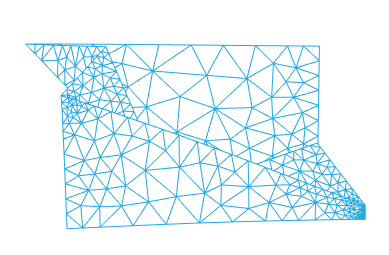}
        \caption{Iter = 200: finally registered cusp.}
    \end{subfigure}
    \caption{Iteration results for a cusped surface: note that the curved boundary and folding lines in occlusion from the initial map is gradually straightened out.}
    \label{fig:cusp_iter}
\end{figure}

We implemented the described reinforcement iteration algorithm and demonstrate it for a doubly folded surface, as illustrated in Figure \ref{fig:iter_experiment}. The folded surface and its unfolded counterpart, as shown in Figure \ref{fig:observed_fold} and \ref{fig:true_unfold}, are generated according to a real folded paper and its unfolded counterpart. In Figure \ref{fig:2unfold_init} it is our initialized domain $\Omega_{\Sigma_0}$. In \ref{fig:observed_fold} and \ref{fig:2unfold_init}, the red circles mark the corresponding constraint points to the visible partial singular set and boundary data. Our algorithm works similarly well with other examples as well. This shows the robustness of our algorithm.

In the next three rows of Figure \ref{fig:iter_experiment} we show the iteration results at the first iteration, 50-th iteration and 200-th iteration.  We can observe the curly folding lines in the first iteration in Figure \ref{fig:iter=1 reg-} and \ref{fig:iter=1 unfold}. This is due to the incomplete data and the incompatible initialized domain. In the subsequent iterations we saw significant improvement over the rigidity of the folding. In practice we also found that if in the later phase of the iteration, we explicitly regularize the singular lines by fitting the vertices into a Euclidean geodesic in a least square sense, and then restart the iteration, the convergence will have some minor speed-up in particular for the heavily multiply-folded cases.

Observe also that in the limit, as in Figure \ref{fig:iter=120 unfold}, the singular set configuration is in not exactly the same as that of the true unfolded surface. This can be explained by the existence multiple admissible solutions to this problem. For example, another admissible solution may be obtained by some different initialization. This is of course expected.

In Figure \ref{fig:1fold_iter} and \ref{fig:cusp_iter} we illustrate the effect of reinforcement iteration algorithm applied to a once-folded surface and a cusped surface. The straightening effect can be easily seen from the comparison between the initial folding map and the final folding map. In Figure \ref{fig:convergence}, we plot a log-log diagram for the scale insensitive version of the energy $E(g_k,\varphi_k)$ for the examples in Figure \ref{fig:iter_experiment} and \ref{fig:cusp_iter}. The plot here uses only the iteration algorithm and no other regularization. We can observe that the convergence rate approaches $O(1/N)$ in the mid-stage of the iteration. That the energy decreases slightly slower in the later phase can be explained by our observation from the iterations that only a few points are adjusted while the singular set configuration is still away from flat-foldability. These adjusted points are mainly near the cusp points. This fact can be observed from Figure \ref{fig:iter=1 unfold} and \ref{fig:iter=50 unfold}. The convergence rate varies in the different phases of the iteration, illustrating the non-linear nature of the iteration.

\begin{figure}
    \centering
    \includegraphics[width=0.7\textwidth]{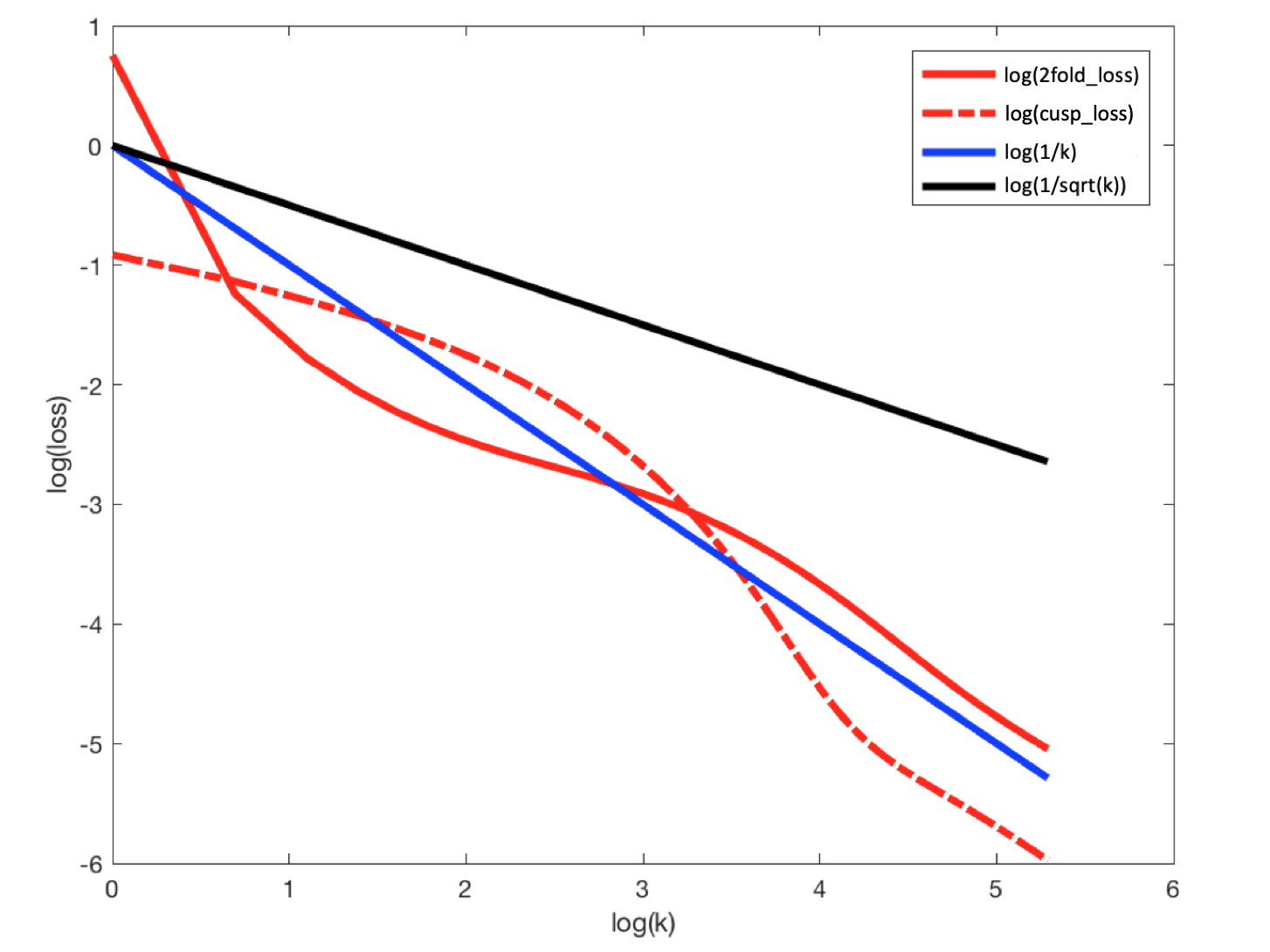}
    \caption{Convergence plot: the loss here is defined as $\sum_{T\in \Omega_{+}} |\mu_{g_n}|^2 +\sum_{T\in \Omega_{-}} 1/|\mu_{g_n}|^2 $ for scale invariant comparison, where $\mu_{g_n}$ denotes the Beltrami coefficient associated to the mapping $g_n$, which should be close to $0$ (or $\infty$) on $\Omega^{+}$ (or respectively  on $\Omega^{-}$).}
    \label{fig:convergence}
\end{figure}

\section{Applications} \label{sec5}

\subsection{Generating and editing generalized Miura-ori} \label{sec: miura}
The Miura-ori refers to a special type of Origami tessellation of the plane, which can be used to design flat-foldable materials aiming at achieving designed curvature properties \cite{dudte2016programming}. Previous approaches are based on analytic construction or constrained optimization, using the Kowasaki condition. Here we explore another possibility of creating such Origami models. Namely, we create more Miura-ori type domains and realize them via solving alternating Beltrami equations. 

For simplicity, we consider the Miura-ori pattern in Figure \ref{fig:miura}. The yellow color on a triangle $T$ refers to the prescription of $\mu(T) = \infty$, and purple ones $\mu(T) = 0$. After solving the alternating Beltrami equation in 2D by pining two vertices, we obtain the classical Miura-ori strip, which is the flat-folded state of the surface.  Suitable $z$-coordinates are added for visualization in 3D.

\begin{figure}
    \centering
    \begin{subfigure}[b]{0.33\textwidth}
        \includegraphics[width=\textwidth]{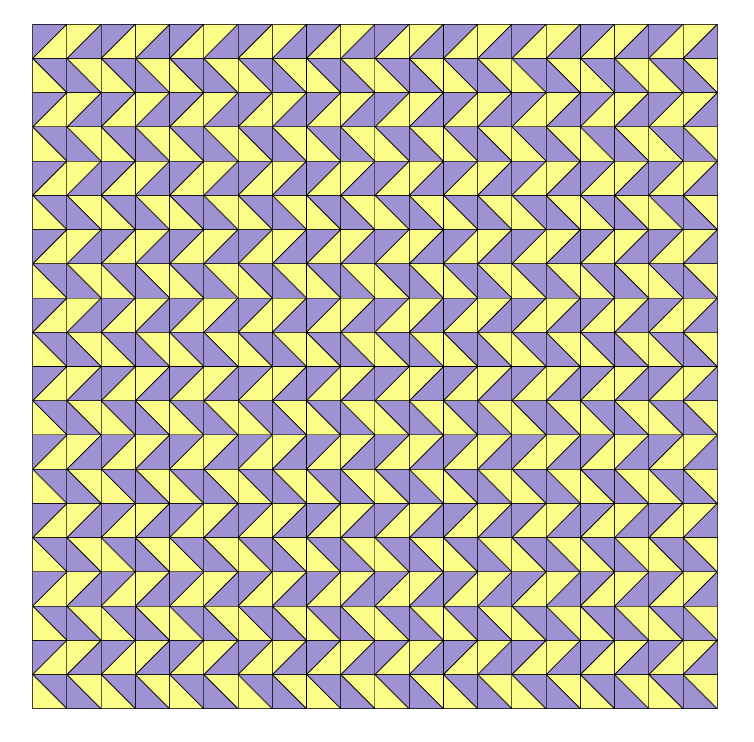}
        \caption{Classical Miura-ori pattern}
        \label{fig:miura_ori_unfolded}
    \end{subfigure}
    \hspace{10pt}
    \begin{subfigure}[b]{0.45\textwidth}
        \includegraphics[width=\textwidth]{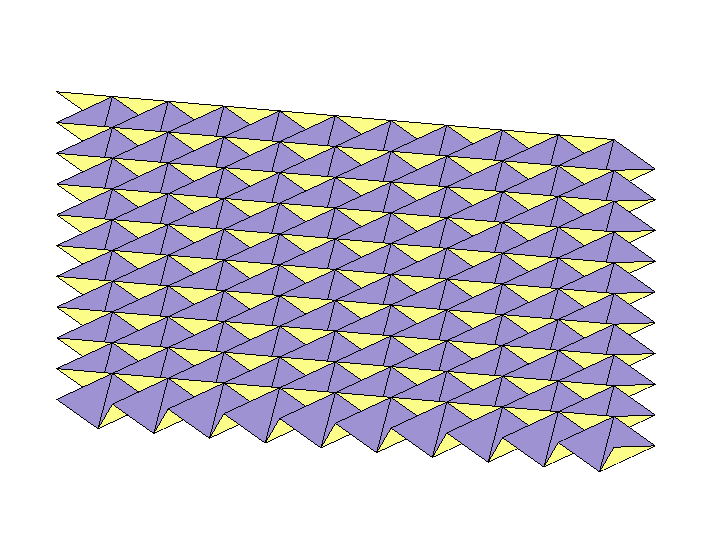}
        \caption{Realization of the Miura-ori on the left}
        \label{fig:miura_ori_folded}
    \end{subfigure}
    \begin{subfigure}[b]{0.39\textwidth}
        \includegraphics[width=\textwidth]{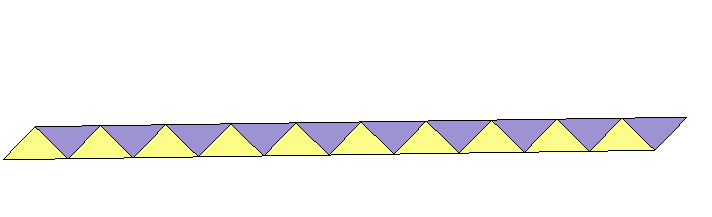}
        \caption{Classical Miura-ori strip}
        \label{fig:miura_ori_strip}
    \end{subfigure}
    \caption{Classical Miura-ori and its realization in 3D}
    \label{fig:miura}
\end{figure}
To generate more Miura-ori type domains, ideally we can simply apply a conformal map on the domain. Notice that, in the continuous case, the domain obtained by compositing a flat-foldable configuration with a conformal map remains flat-foldable (satisfying the alternating Beltrami equation with the same coefficients as before), because of the angle preserving property. Such a composition can create triangles at different scales.

However, because of the discreteness, the angles is preserved only if the map is a uniform scaling plus rigid motion. Indeed, this follows from the our assertion on rank of the system matrix. Fortunately, applying a conformal map usually only yield a small and structured perturbation to the Beltrami equation, and the new Miura-ori domains can still be created via several iteration of the foldings and unfoldings, in light of the reinforcement iteration we proposed. For example, a new Miura-ori pattern in Figure \ref{fig:miura2} is created via this method, with the choice of  (in this case we just made any convenient choice)
\[
\Phi(z) = 10 + 0.1z + 0.4z^2.
\]

\begin{figure}
    \centering
    \begin{subfigure}[b]{0.33\textwidth}
        \includegraphics[width=\textwidth]{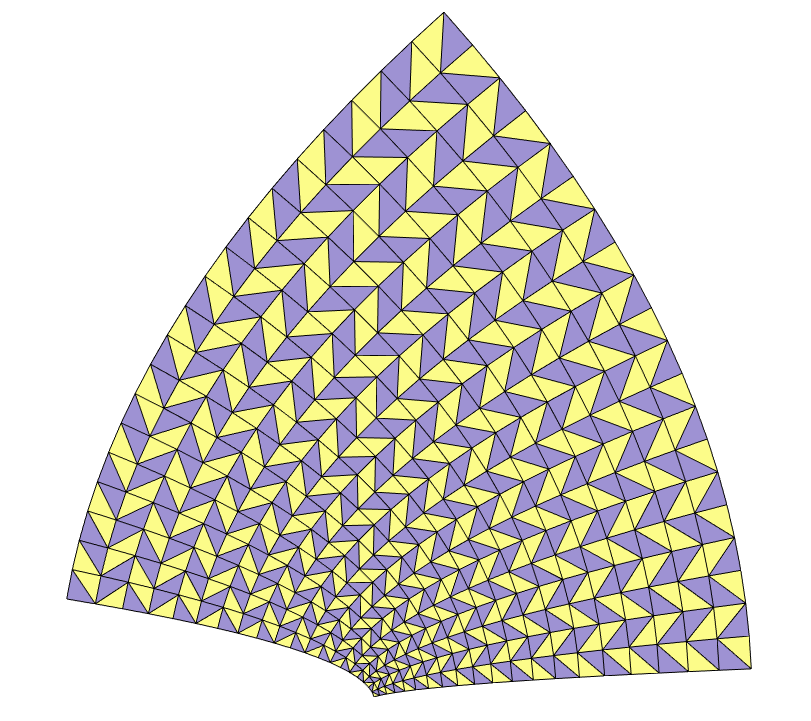}
        \caption{Naive composition with $\Phi$}
        \label{fig:miura_ori_unfolded2}
    \end{subfigure}
    \hspace{10pt}
    \begin{subfigure}[b]{0.33\textwidth}
        \includegraphics[width=\textwidth]{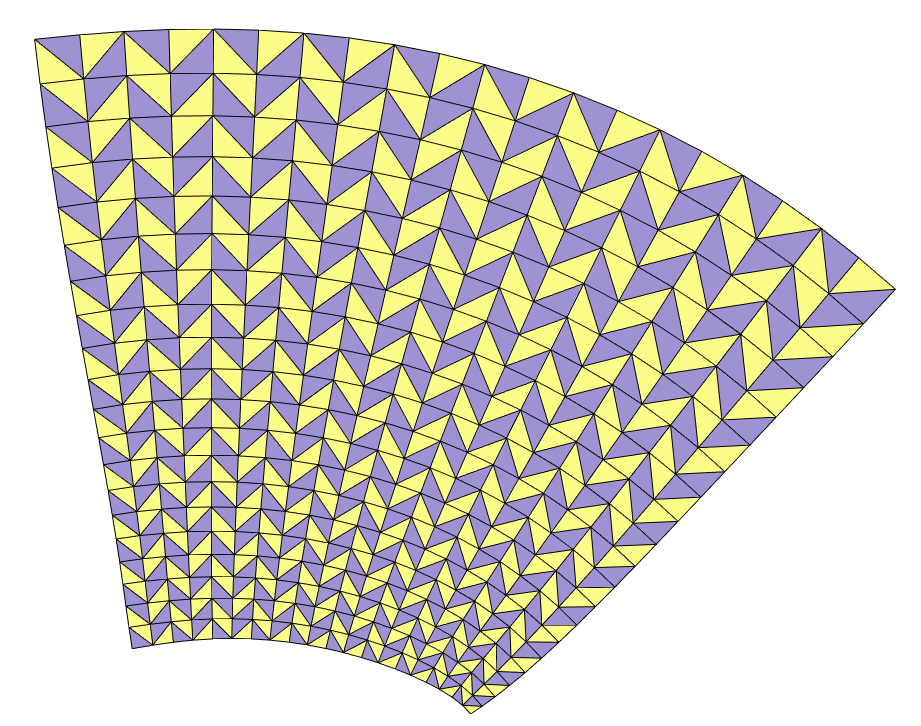}
        \caption{A new Miura-ori pattern after folding-unfolding iterations (maximal distortion $ = 3\times 10^{-4}$ compared to the folded state (c))}
        \label{fig:miura_ori_unfolded22}
    \end{subfigure}
    \begin{subfigure}[b]{0.45\textwidth}
        \includegraphics[width=\textwidth]{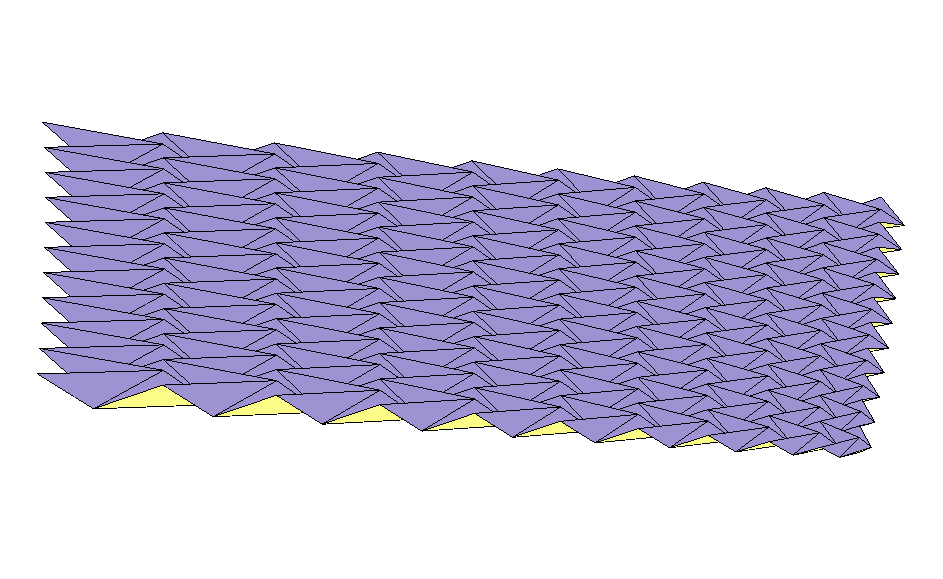}
        \caption{Realization of the Miura-ori in (b)}
        \label{fig:miura_ori_folded2}
    \end{subfigure}
    \hspace{10pt}
    \begin{subfigure}[b]{0.39\textwidth}
        \includegraphics[width=\textwidth]{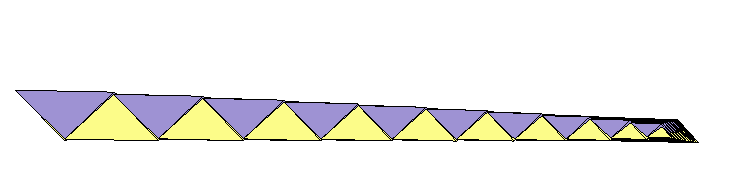}
        \caption{New Miura-ori strips: note that there is a stack of strips of varing sizes}
        \label{fig:miura_ori_strip2}
    \end{subfigure}
    \caption{A new Miura-ori pattern by composition with $\Phi$ and its realization in 3D. Here, maximal distortion is defined by $\max\{\max\{|\mu_T|\}_{T\in \Omega^{+}}, \max\{1/|\mu_T|\}_{T\in \Omega^{-}} \}$}
    \label{fig:miura2}
\end{figure}

 Different from approach of Dudte {\it et al.} \cite{dudte2016programming}, the surface we obtain is flat-foldable by design. Given the rich family of conformal maps, it will be particularly interesting to study the new family of Miura-ori patterns with the aid of our algorithm. The study of different patterns' curvature approximation capacities is also a exciting future direction. We envisage a ``conformal geometric processing" approach to the modelling of Miura-ori. Under such an approach researchers can efficiently design the pattern with a simple set of CAD tools. Mathematical understanding of this problem will definitely benefit such a ``bottom-up" approach to material design with flat-foldable structures.

For a preliminary example, we can simulate and study the deformation of the Miura-ori in 3D with our solutions. Starting from the flat-folded state of the surface, one can apply the classical geometric editing methods such as as-rigid-as-possible \cite{sorkine2007rigid}. An example of such a deformation with user-defined position constraints is shown in Figure \ref{fig:miura2_arap}. 

\begin{figure}
    \centering
    \includegraphics[width=0.5\textwidth]{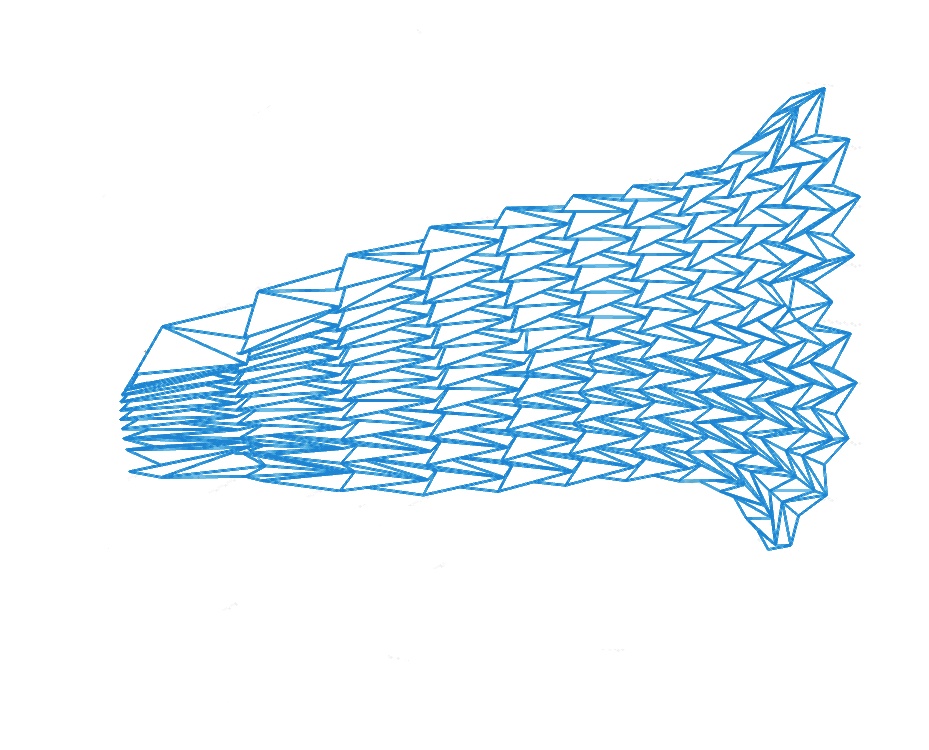}
    \caption{A rigid deformation of the Miura-ori in Figure \ref{fig:miura2}}
    \label{fig:miura2_arap}
\end{figure}
\subsection{Almost rigid folding with application to fold-texture generation, fold sculpting and fold in-painting}
As one of the immediate applications, we can consider a folding transformation on the texture space to create synthetic fold-like textures, prior to applying the texture map.  This can be cheap to do if high quality physical simulation and rendering is not available. In Figure \ref{fig: 3dfold_texture} we explore such a possibility of user-designed fold-like texture generation. 

One of the fundamental steps in texturing a 3D surface is to find the parameterization (or the texture map) $f : S \to \Omega \in \mathbb{R}^2$. In particular, UV map is one of the major types of parameterization techniques in various software packages, which works well if the 3D model is created from polygon meshes. The above technique can be very useful in the interactive user design, where the user directly operates on the target mesh, and the input is transformed to the texture domain via the UV-map, to create desirable fold-like texture on the target mesh. It is also possible to incorporate proper shading effect on the transformed texture directly, making the texture look more realistic. We have implemented such a fold-like texturing method using a 3D T-shirt model, shown in Figure \ref{fig:3dtexture}. Note that the mesh is not deformed at all.

\begin{figure}
    \centering
    \begin{subfigure}[b]{0.35\textwidth}
    \includegraphics[width=\textwidth]{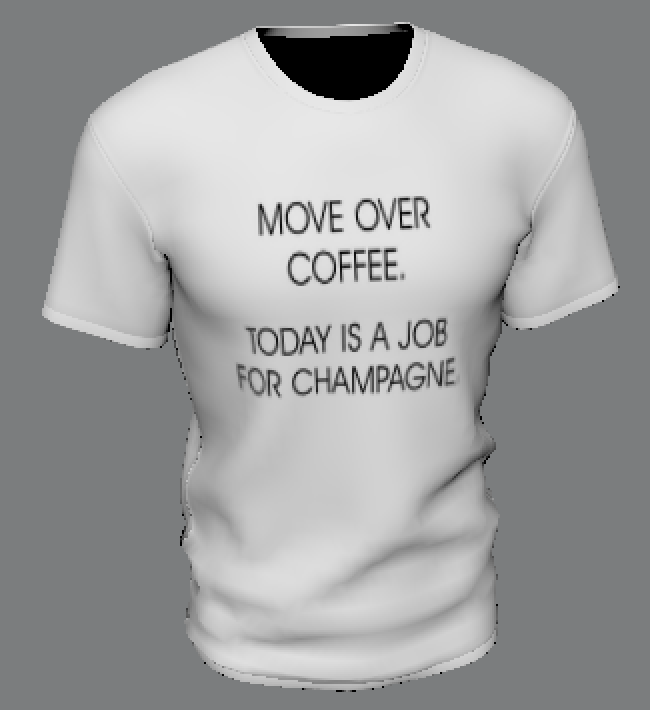}
    \caption{Before fold-texturing}
    \label{fig:original}
    \end{subfigure}
    \begin{subfigure}[b]{0.36\textwidth}
    \includegraphics[width=\textwidth]{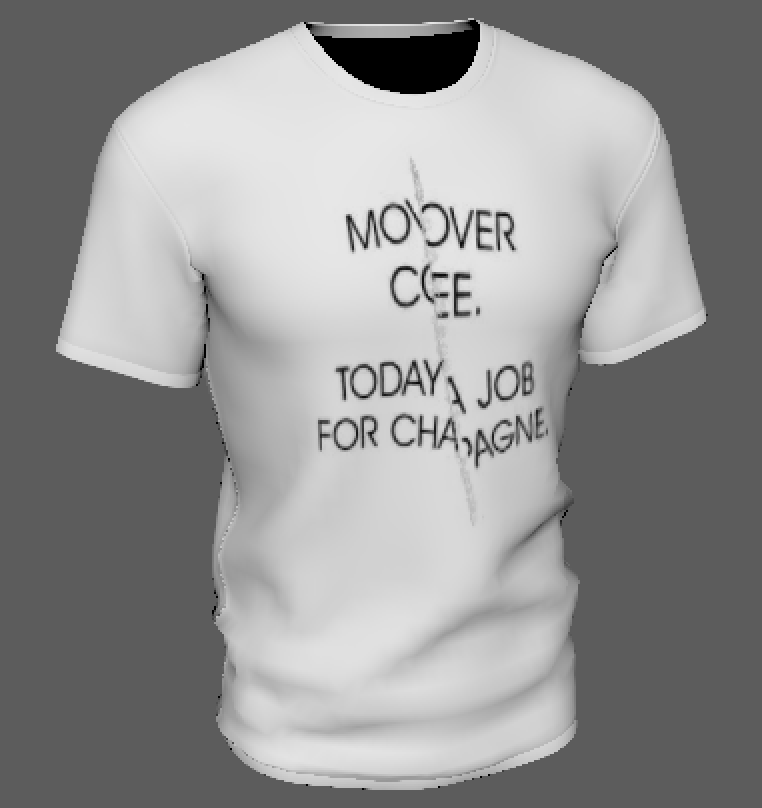}
    \caption{After fold-texturing}
    \label{fig:3dtexture}
    \end{subfigure}

    \caption{Folding effect texturing on a 3D model. Note that the mesh model is not deformed.}
    \label{fig: 3dfold_texture}
\end{figure}

We can also apply the folding technique directly to the 3D meshes, as an application we would like to call {\it fold sculpting}. To illustrate this, we select a patch from the T-shirt model, as shown in Figure \ref{fig:tshirt}. We applied the folding operation to a suitable parametrization of the patch, which can obtained easily via, for example, projection or a least square conformal parametrization \cite{levy2002least}, and then glue it back to the T-shirt model. Note that our algorithm produces sharp edges. This can be mitigated by some standard smoothing operation in various mesh editing software. Figure \ref{fig: maya_smooth2} and \ref{fig: maya_smooth} show the results after appropriate smoothing, where we used the software Maya\footnotemark to the smoothing and rendering tasks. Note that such folding is not easily obtained by pure handcraft, since one part of the cloth actually folds over and covers some other part of the cloth. 
\footnotetext{A software of the Autodesk Inc. See \url{https://www.autodesk.com.hk/products/maya/overview}. The results are generated under the student license  obtained by the first author. }
\begin{figure}
    \centering
    \begin{subfigure}[b]{0.30\textwidth}
    \includegraphics[width=\textwidth]{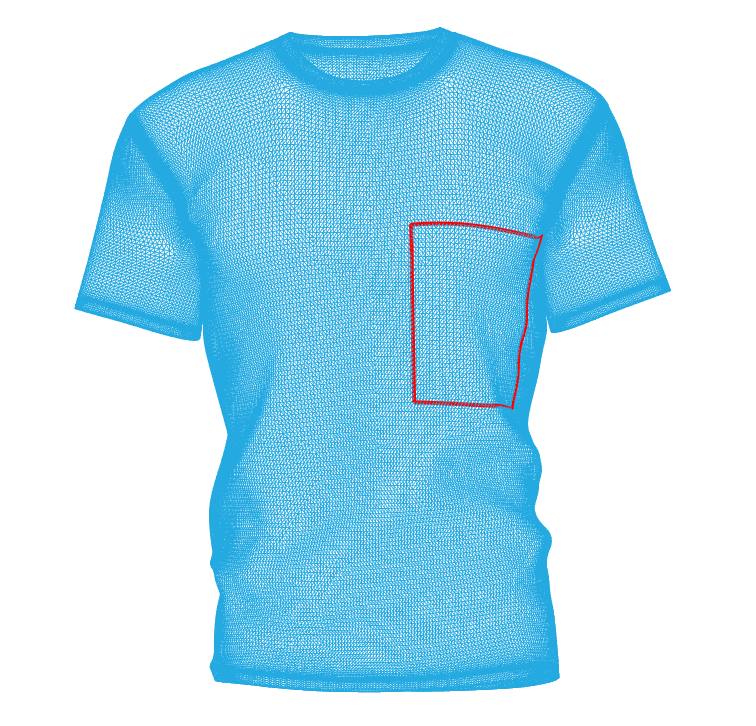}
    \caption{The T-shirt model}
    \label{fig:tshirt}
    \end{subfigure}
    \begin{subfigure}[b]{0.34\textwidth}
    \includegraphics[width=\textwidth]{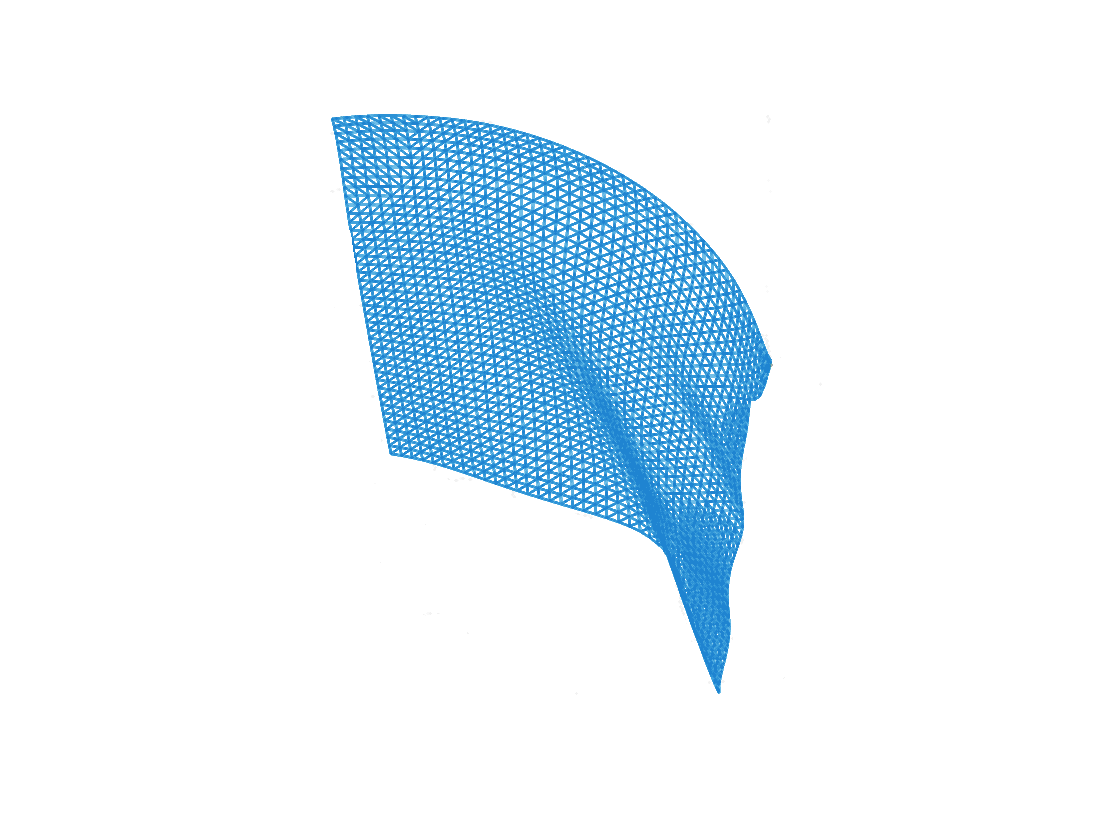}
    \caption{The original patch}
    \label{fig:ori_patch}
    \end{subfigure}
    \begin{subfigure}[b]{0.34\textwidth}
    \includegraphics[width=\textwidth]{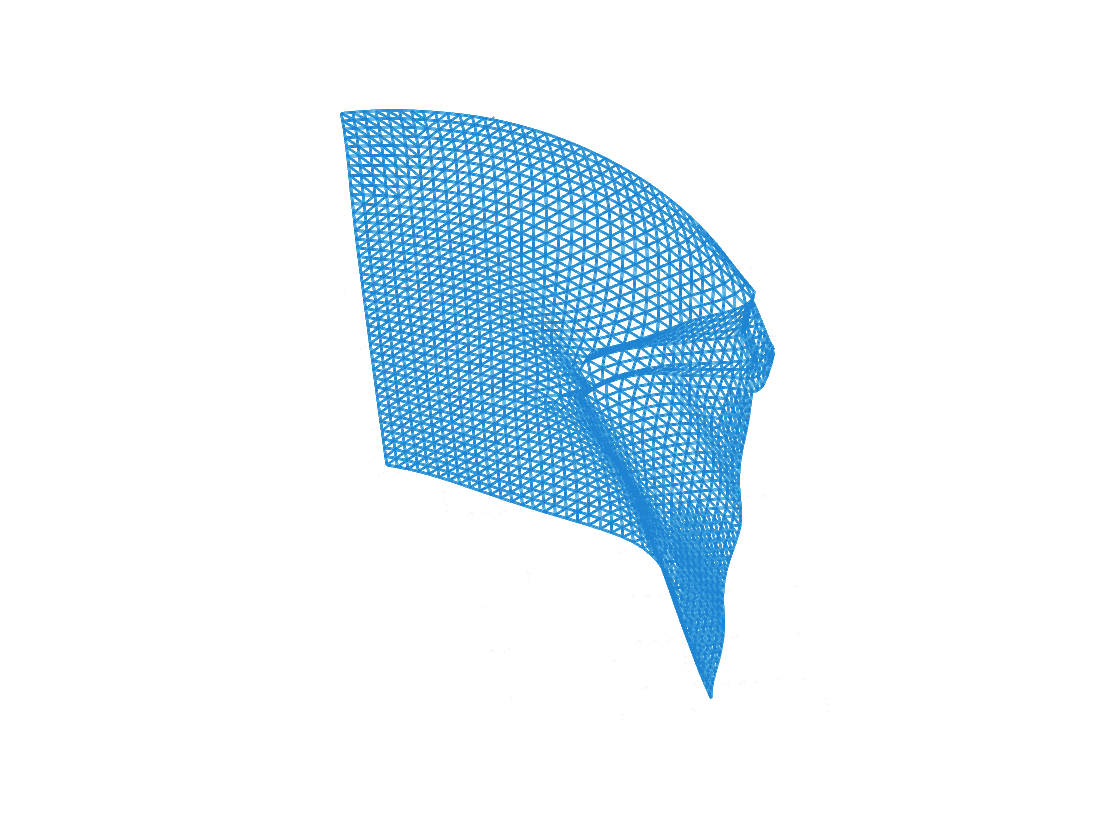}
    \caption{The deformed patch}
    \label{fig:def_patch}
    \end{subfigure}
    \caption{Patch-wise fold sculpting: the region inside the red contour is the patch selected, appropriate alternating Beltrami equation is then solved in the patch domain to obtain the desired folding effect.}
    \label{fig: 3dfold}
\end{figure}

\begin{figure}
    \centering
    \begin{subfigure}[b]{0.31\textwidth}
    \includegraphics[width=\textwidth]{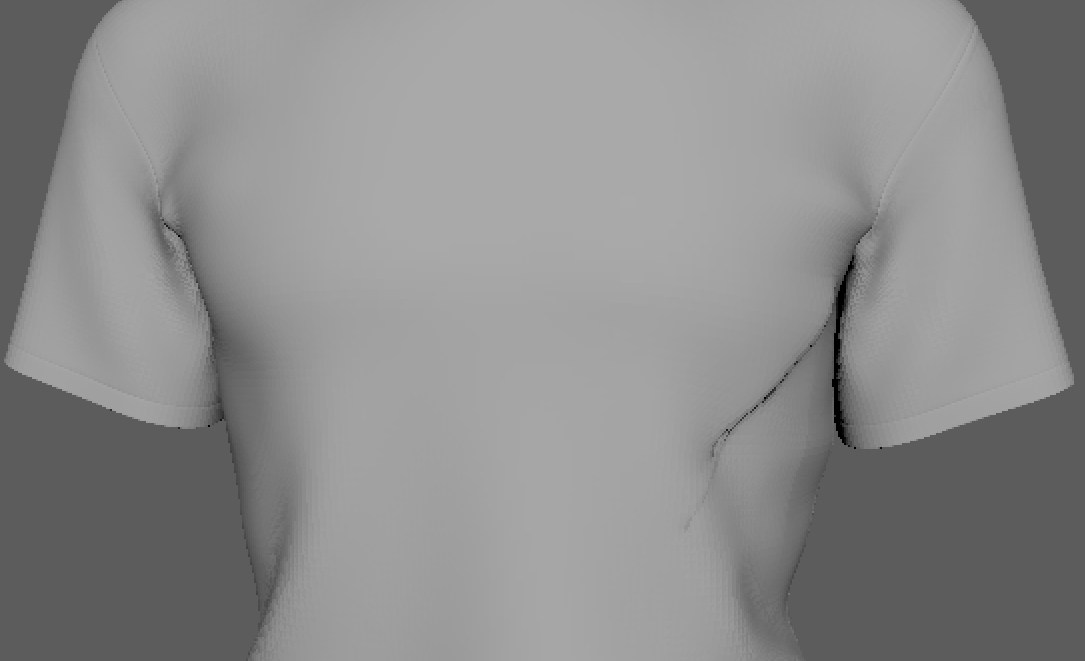}
    \end{subfigure}
    \begin{subfigure}[b]{0.32\textwidth}
    \includegraphics[width=\textwidth]{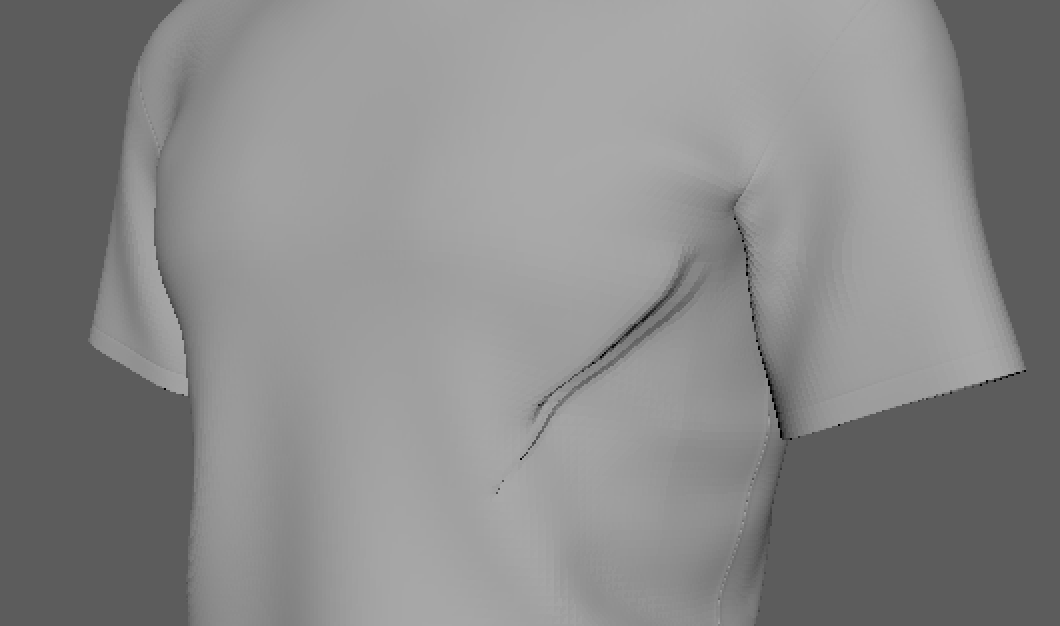}
    \end{subfigure}
    \begin{subfigure}[b]{0.33\textwidth}
    \includegraphics[width=\textwidth]{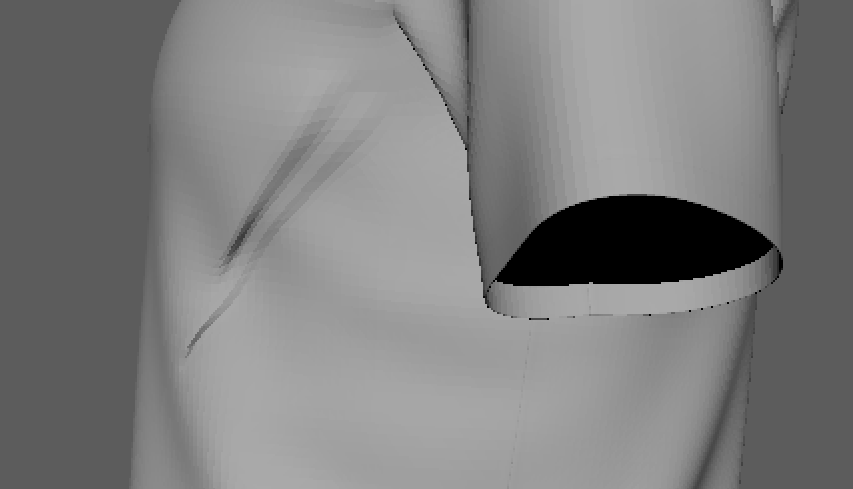}
    \end{subfigure}
    \caption{Results of fold sculpting on the T-shirt model after appropriate smoothing: two short folds are sculpted on the right.}
    \label{fig: maya_smooth2}
\end{figure}

\begin{figure}
    \centering
    \begin{subfigure}[b]{0.35\textwidth}
    \includegraphics[width=\textwidth]{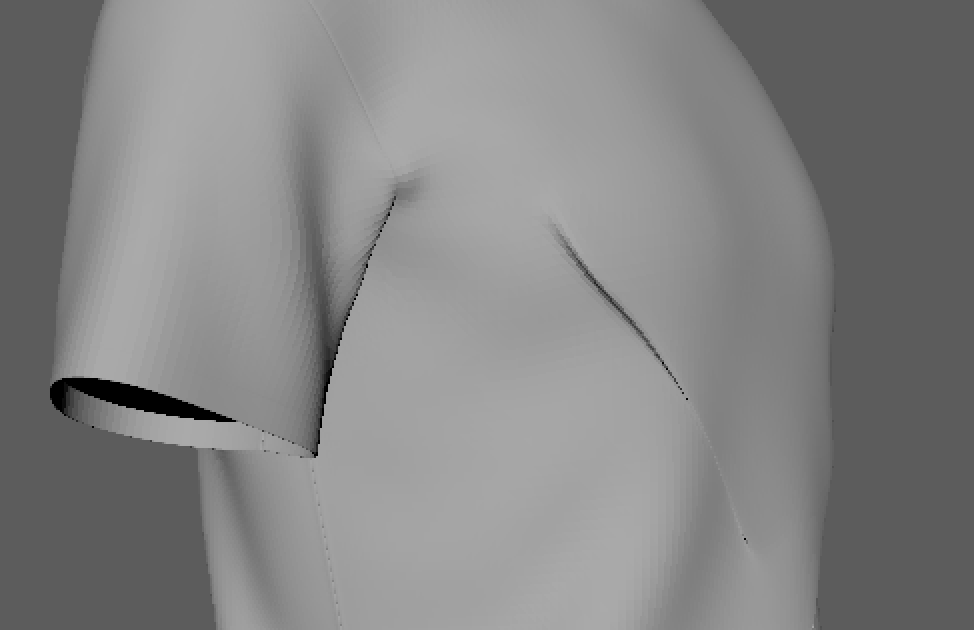}
    \end{subfigure}
    \begin{subfigure}[b]{0.35\textwidth}
    \includegraphics[width=\textwidth]{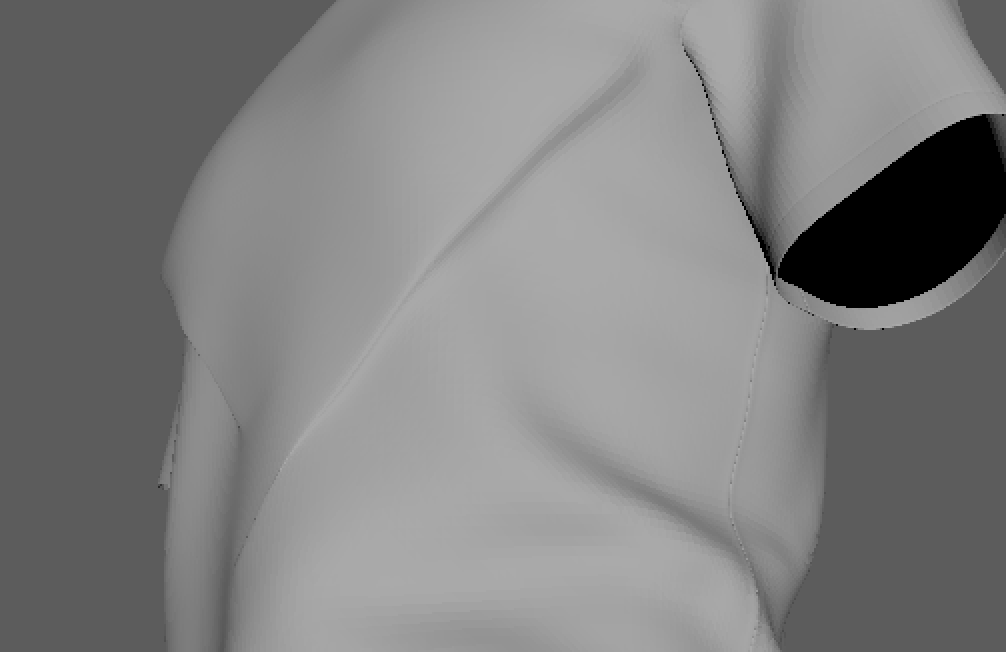}
    \end{subfigure}
    \caption{Results of fold sculpting on the T-shirt model: two long folds are sculpted on the left and right.}
    \label{fig: maya_smooth}
\end{figure}

The technique can also be applied after the acquisition of a folded surface using laser scans, where the folded part introduces self-occlusions and the folding is usually diminished or destroyed after applying the watertight operation. To preserve the folding details from the scans directly, we can mark the folding part that we want to preserve in the raw acquisition. By taking a patch like before and mapping it into the plane, we can solve a proper alternating Beltrami equation to obtain the desired folding effect. The folded patch can then be mapped back to the raw acquisition. To illustrate this, we have done a synthetic experiment using the above approach. 
\begin{figure}
    \centering
    \begin{subfigure}[b]{0.32\textwidth}
    \includegraphics[width=\textwidth]{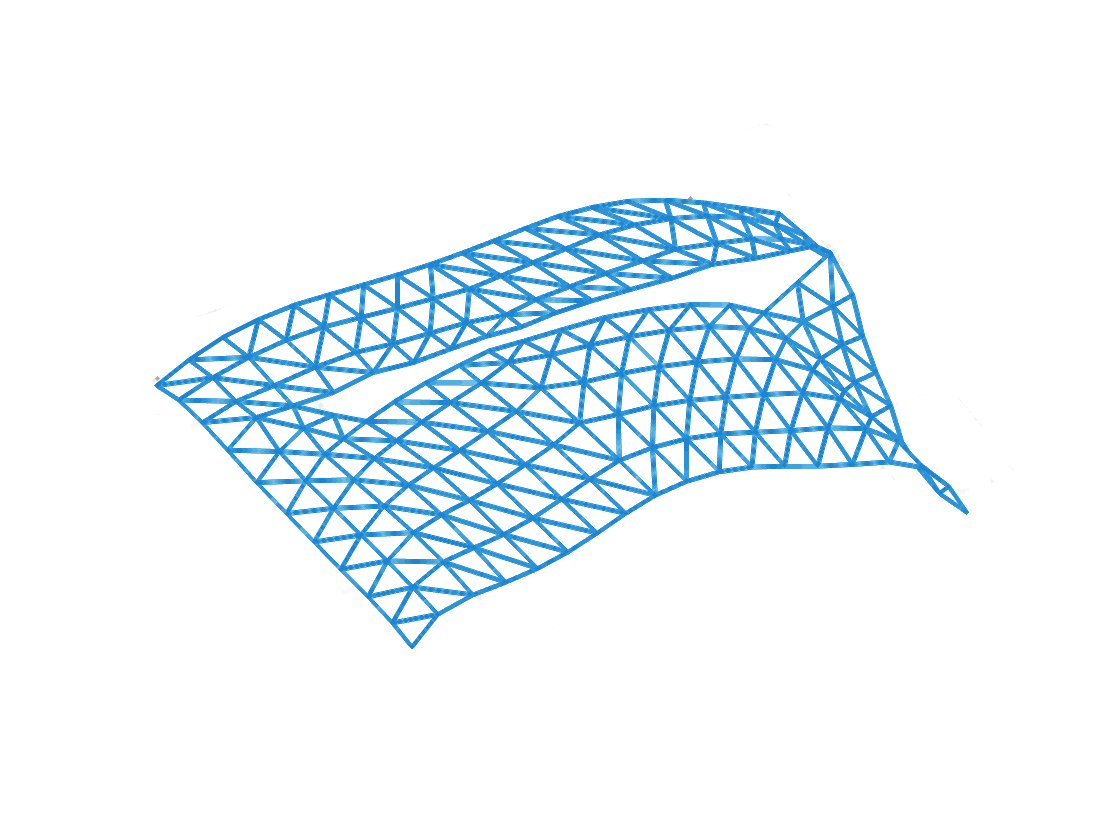}
    \end{subfigure}
    \begin{subfigure}[b]{0.33\textwidth}
    \includegraphics[width=\textwidth]{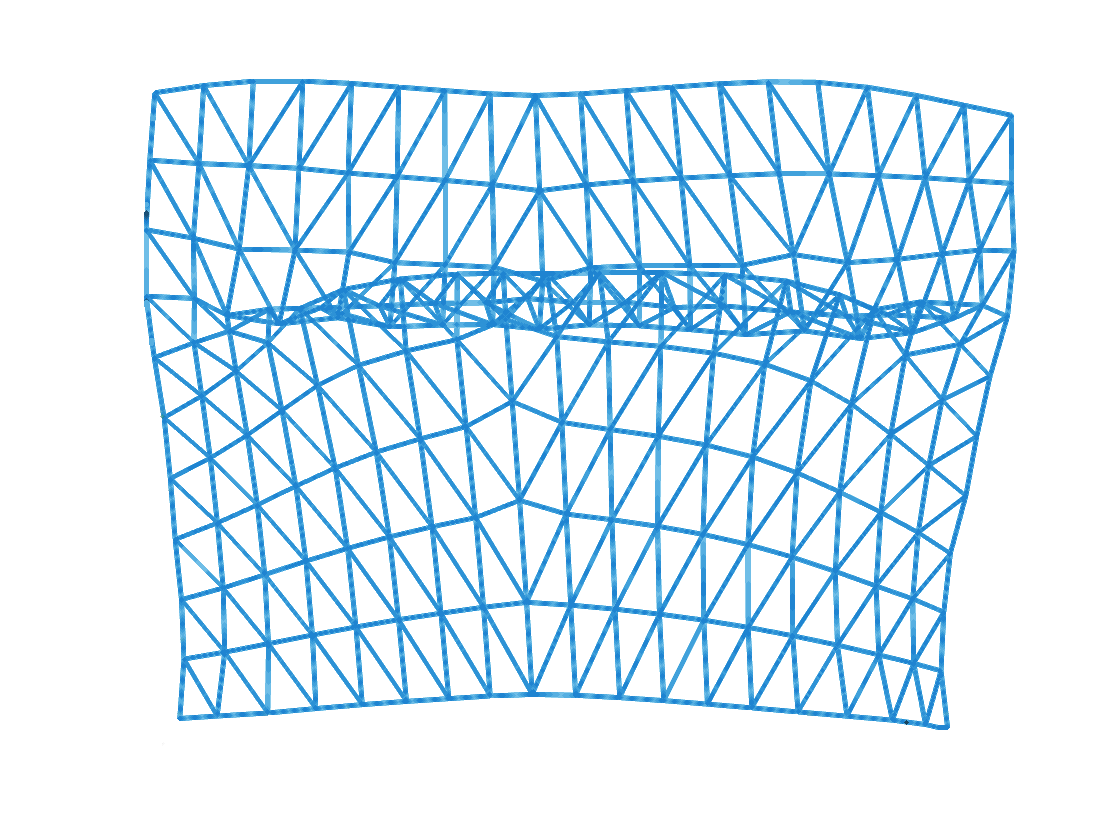}
    \end{subfigure}
    \begin{subfigure}[b]{0.33\textwidth}
    \includegraphics[width=\textwidth]{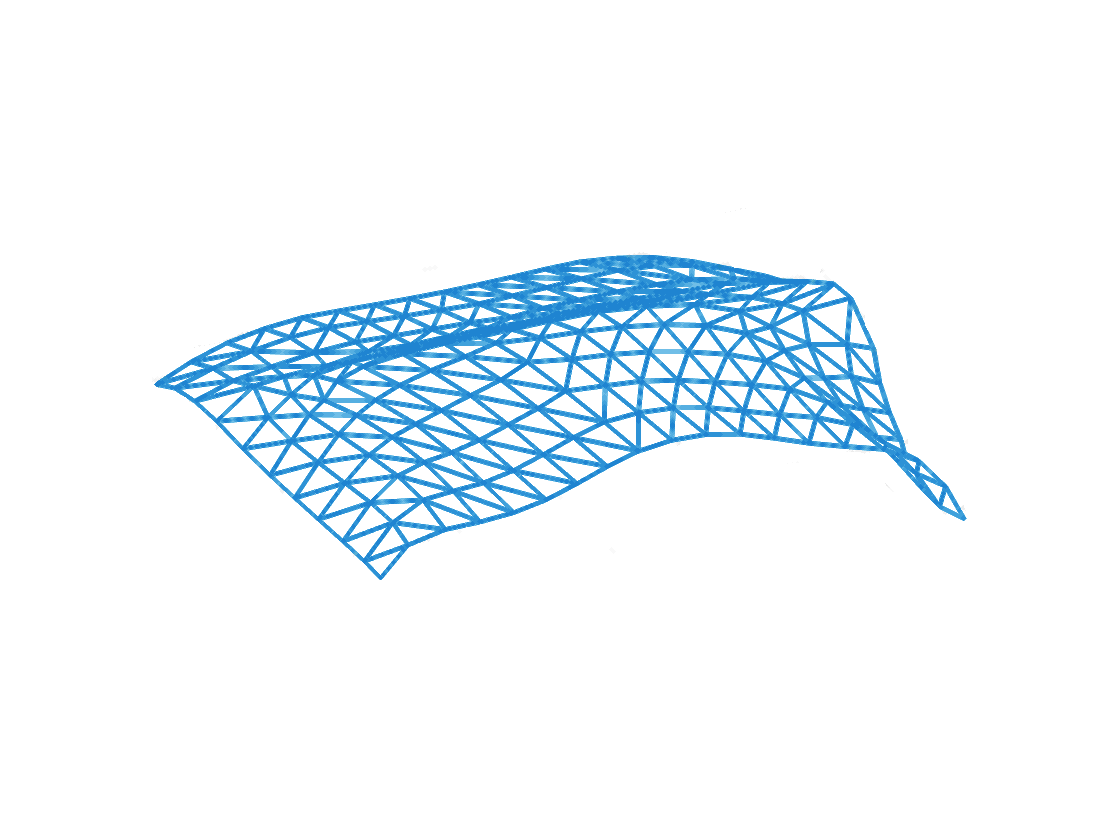}
    \end{subfigure}
    \caption{Patch-wise fold in-painting: from the raw acquisition of a self-occluded surface, we map the occluded, holed surface to the plane and fill the hole; we then apply a suitable folding operation to reproduce the fold that was not captured during acquisition.}
    \label{fig: paint patch}
\end{figure}
We begin with an incomplete acquisition of a shirt model and a dress model, such as the one on the left in Figure \ref{fig: paint fold}. As demonstrated in Figure \ref{fig: paint patch}, the patch with holes is first map to the plane and subsequently filled. A suitable folding operation is then applied to the patch to produce a plausible fold geometry given the acquisition data. The reconstruction is shown in Figure \ref{fig: paint fold} on the right. 
\begin{figure}
    \centering
    \begin{subfigure}[b]{0.22\textwidth}
    \includegraphics[width=\textwidth]{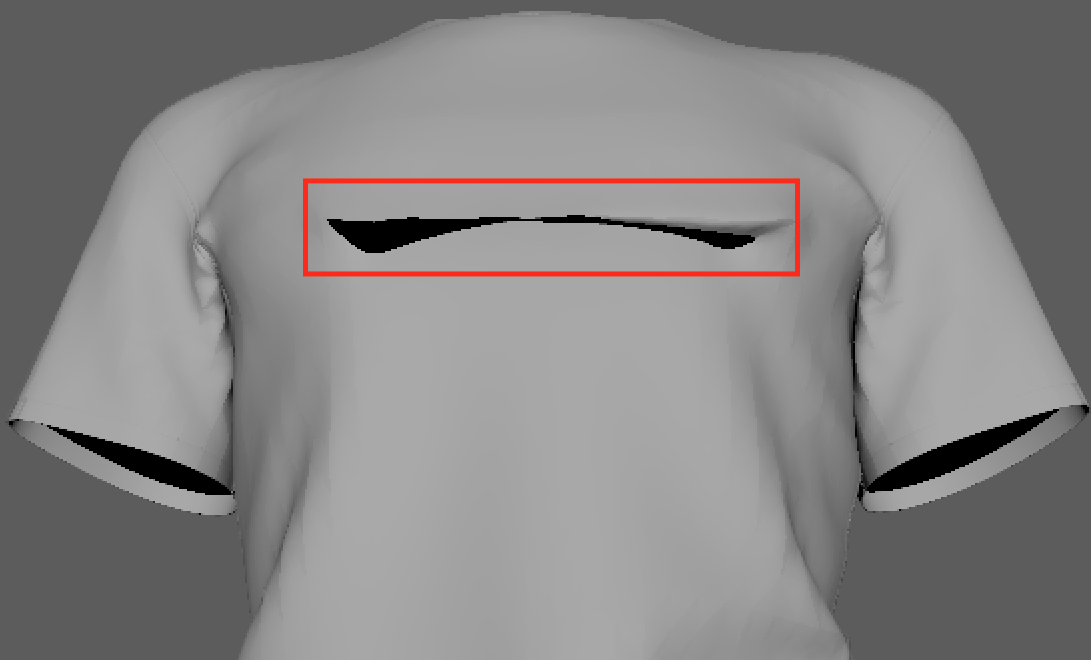}
    \end{subfigure}
     \begin{subfigure}[b]{0.22\textwidth}
    \includegraphics[width=\textwidth]{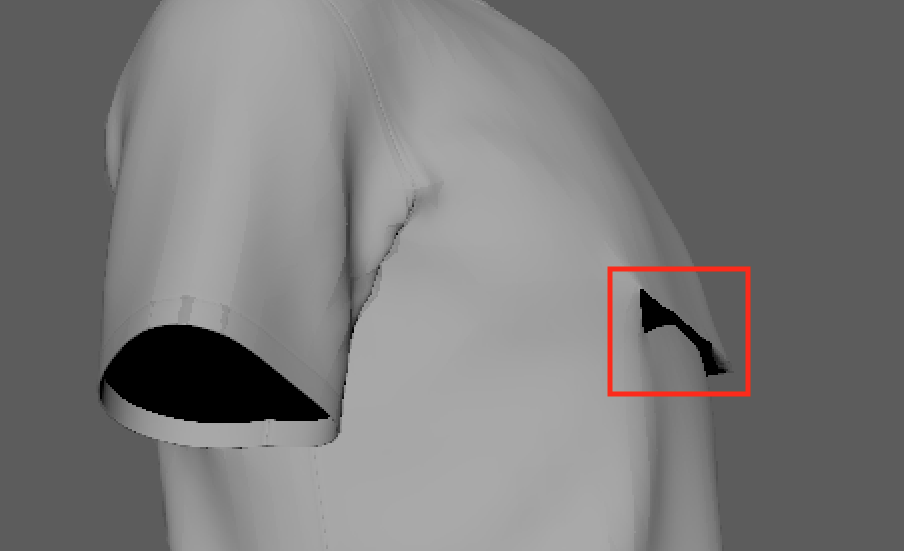}
    \end{subfigure}
    \begin{subfigure}[b]{0.22\textwidth}
    \includegraphics[width=\textwidth]{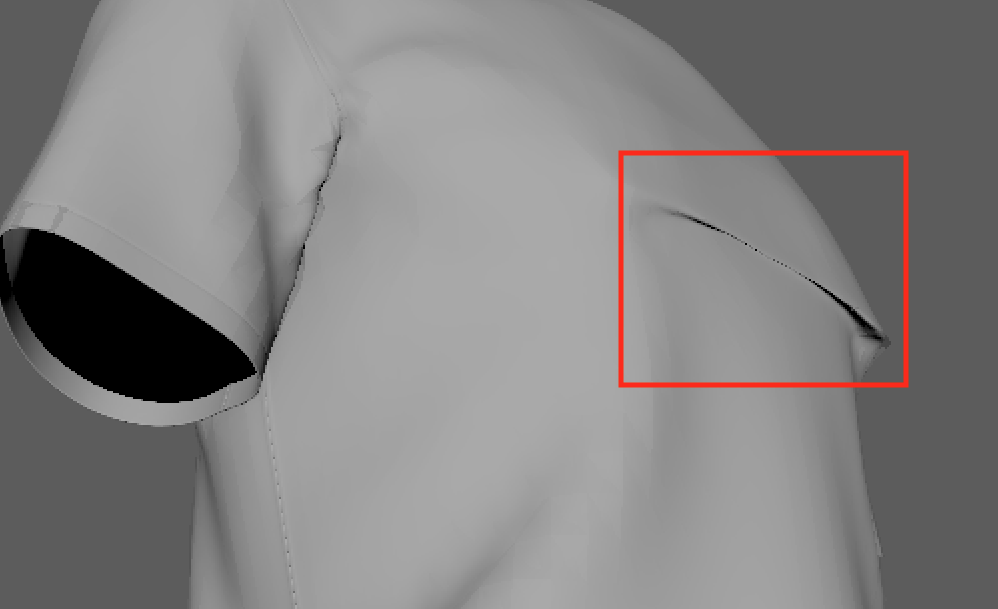}
    \end{subfigure}
    \begin{subfigure}[b]{0.22\textwidth}
    \includegraphics[width=\textwidth]{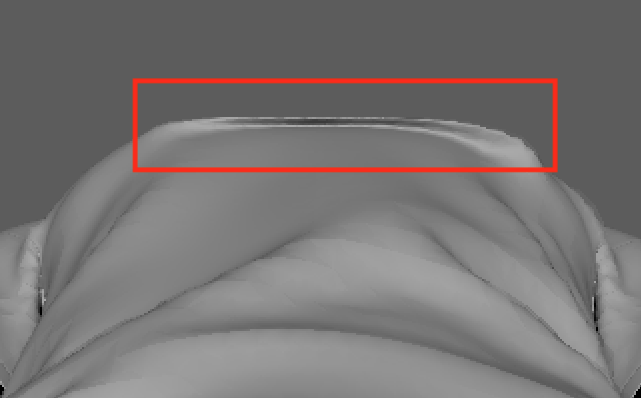}
    \end{subfigure}
    \begin{subfigure}[b]{0.24\textwidth}
    \includegraphics[width=\textwidth]{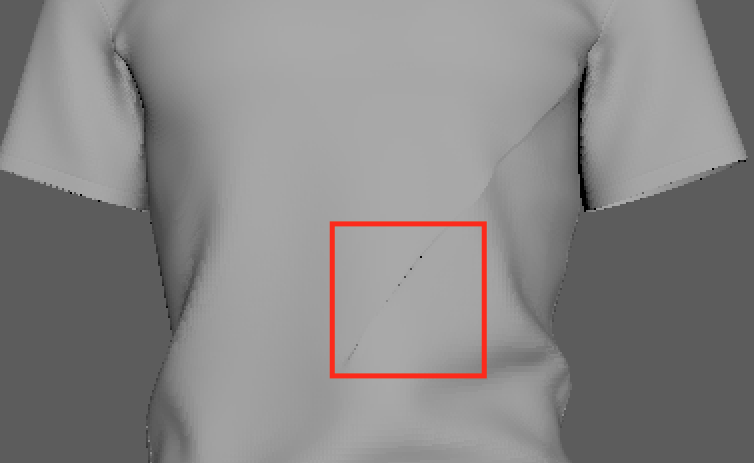}
    \end{subfigure}
    \begin{subfigure}[b]{0.24\textwidth}
    \includegraphics[width=\textwidth]{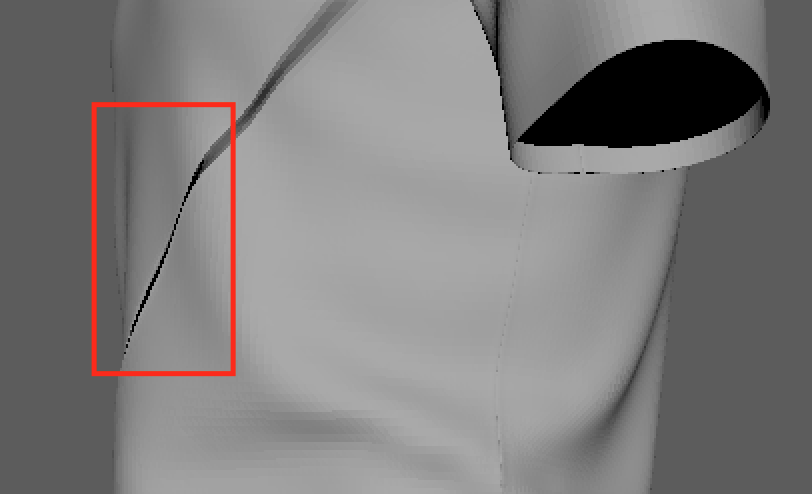}
    \end{subfigure}
    \begin{subfigure}[b]{0.24\textwidth}
    \includegraphics[width=\textwidth]{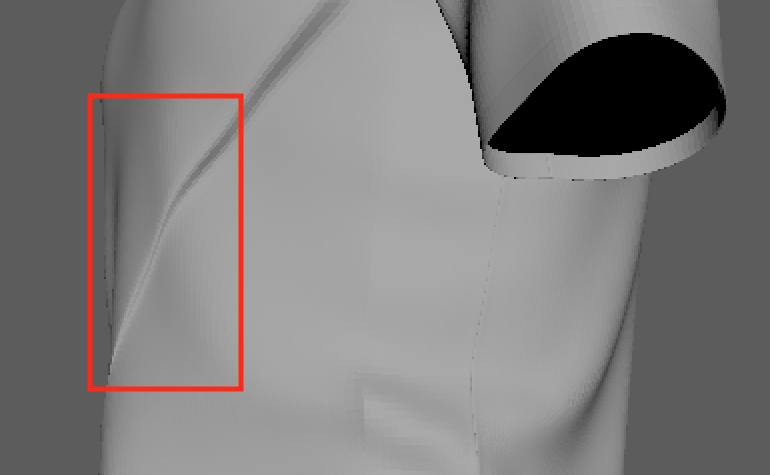}
    \end{subfigure}\\
    
    \begin{subfigure}[b]{0.24\textwidth}
    \includegraphics[width=\textwidth]{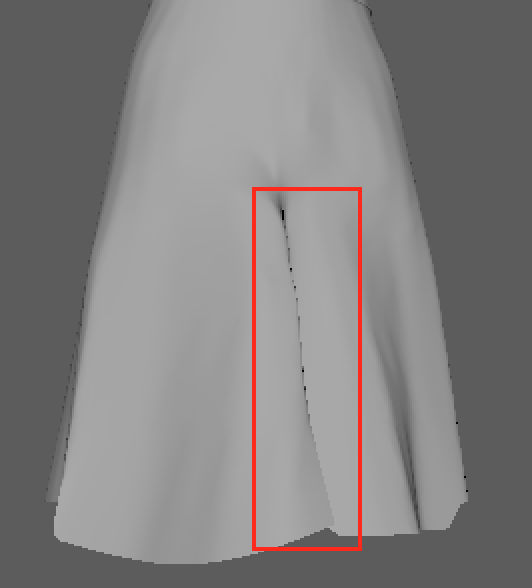}
    \end{subfigure}
    \begin{subfigure}[b]{0.235\textwidth}
    \includegraphics[width=\textwidth]{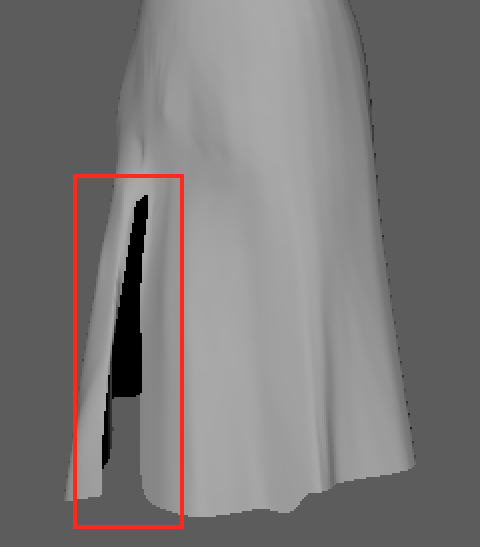}
    \end{subfigure}
    \begin{subfigure}[b]{0.235\textwidth}
    \includegraphics[width=\textwidth]{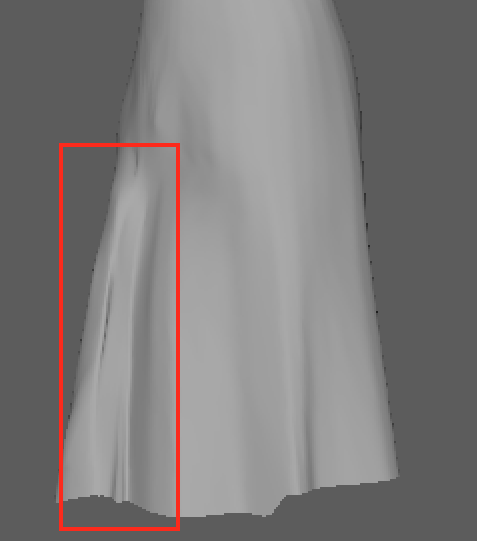}
    \end{subfigure}
    \caption{Results of fold in-painting. For each row, the two on the left are the surface with holes from acquisition due to self occlusion. The results of in-painting are on the right. The corresponding holed and in-painted regions are highlighted inside the red boxes.}
    \label{fig: paint fold}
\end{figure}

\subsection{Self-occlusion reasoning of flat-foldable surfaces and its application to restoration of folded images}
Given a single perspective of a folded surface, for example, shown in figure \ref{fig:experiemnt234-fold}, we can use the proposed reinforcement iteration to unfold the surface, thus enabling us to identify the self-occluded region in the unfolded domains, shown in Figure \ref{fig:mask_regions}.  

\begin{figure}
    \centering
    \begin{subfigure}[b]{0.22\textwidth}
        \includegraphics[width=\textwidth]{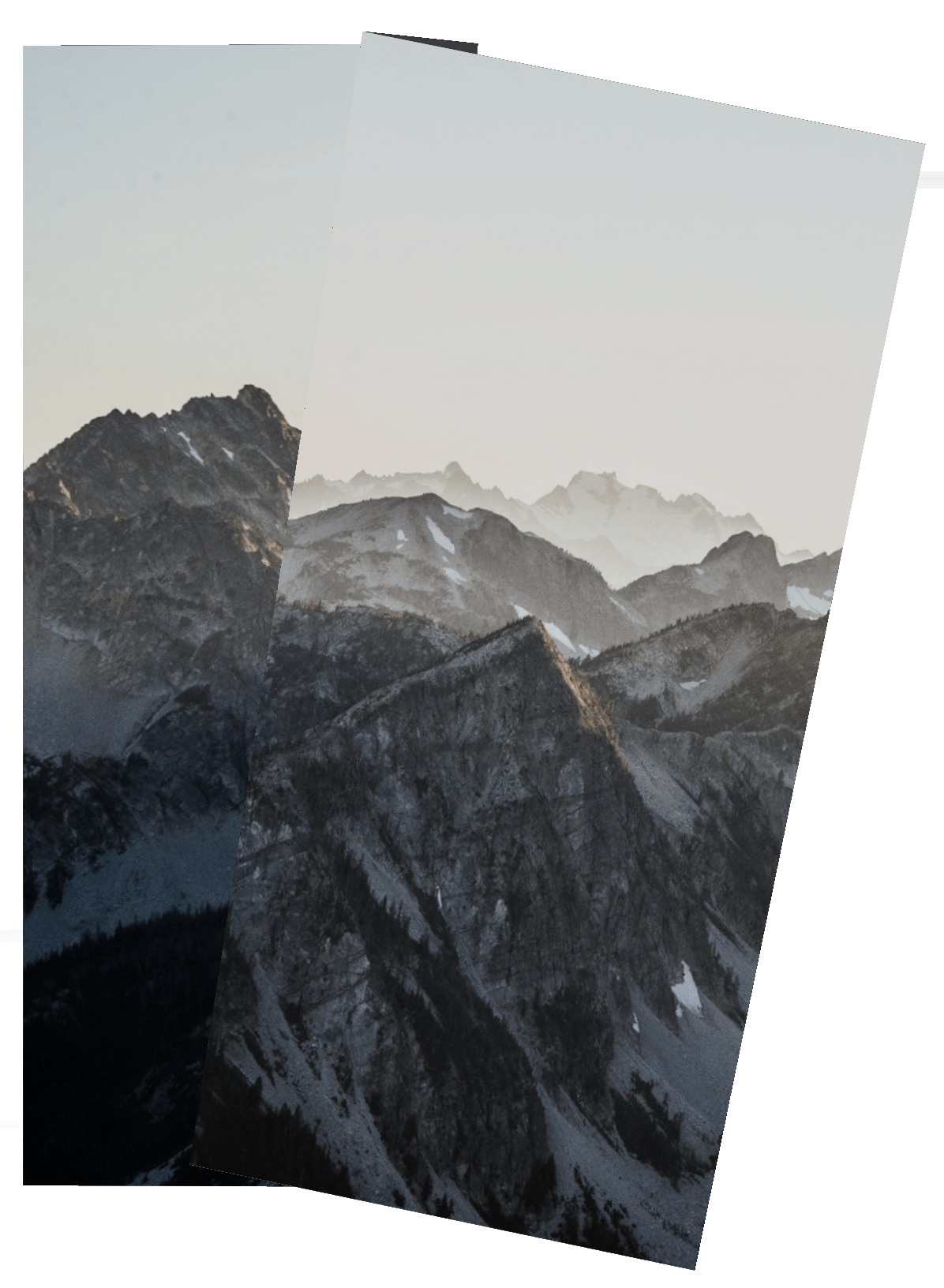}
        \caption{"Mountain"}
        \label{fig:onefold_a}
    \end{subfigure}
    \begin{subfigure}[b]{0.27\textwidth}
        \includegraphics[width=\textwidth]{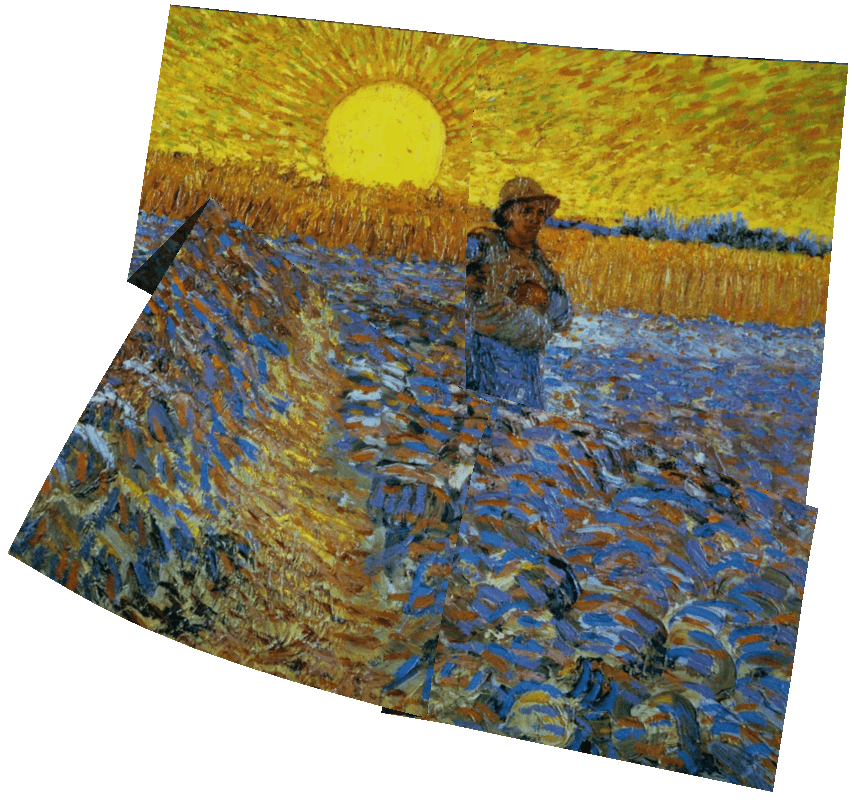}
        \caption{"The Sower"}
        \label{fig:twofold_a}
    \end{subfigure}
    \begin{subfigure}[b]{0.18\textwidth}
        \includegraphics[width=\textwidth]{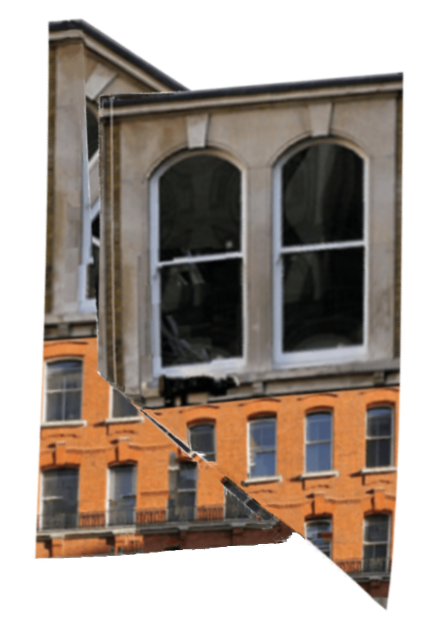}
        \caption{"Building"}
        \label{fig:3fold-a}
    \end{subfigure}
    
    \caption{1-fold, 2-fold and cusp-fold examples}\label{fig:experiemnt234-fold}
\end{figure}

\begin{figure}
    \centering
    \begin{subfigure}[b]{0.23\textwidth}
        \includegraphics[width=\textwidth]{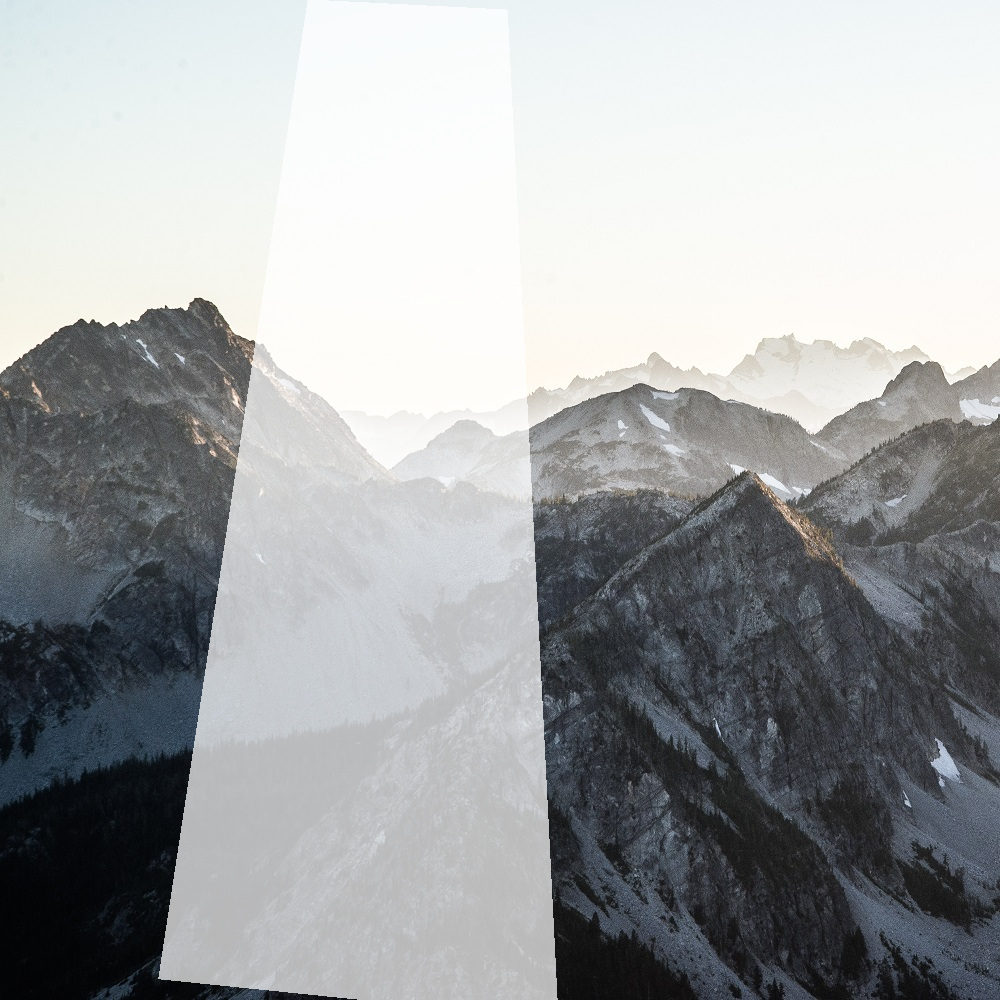}
        \caption{1-fold}
        \label{fig:ex1_mask}
    \end{subfigure}
    
    \begin{subfigure}[b]{0.23\textwidth}
        \includegraphics[width=\textwidth]{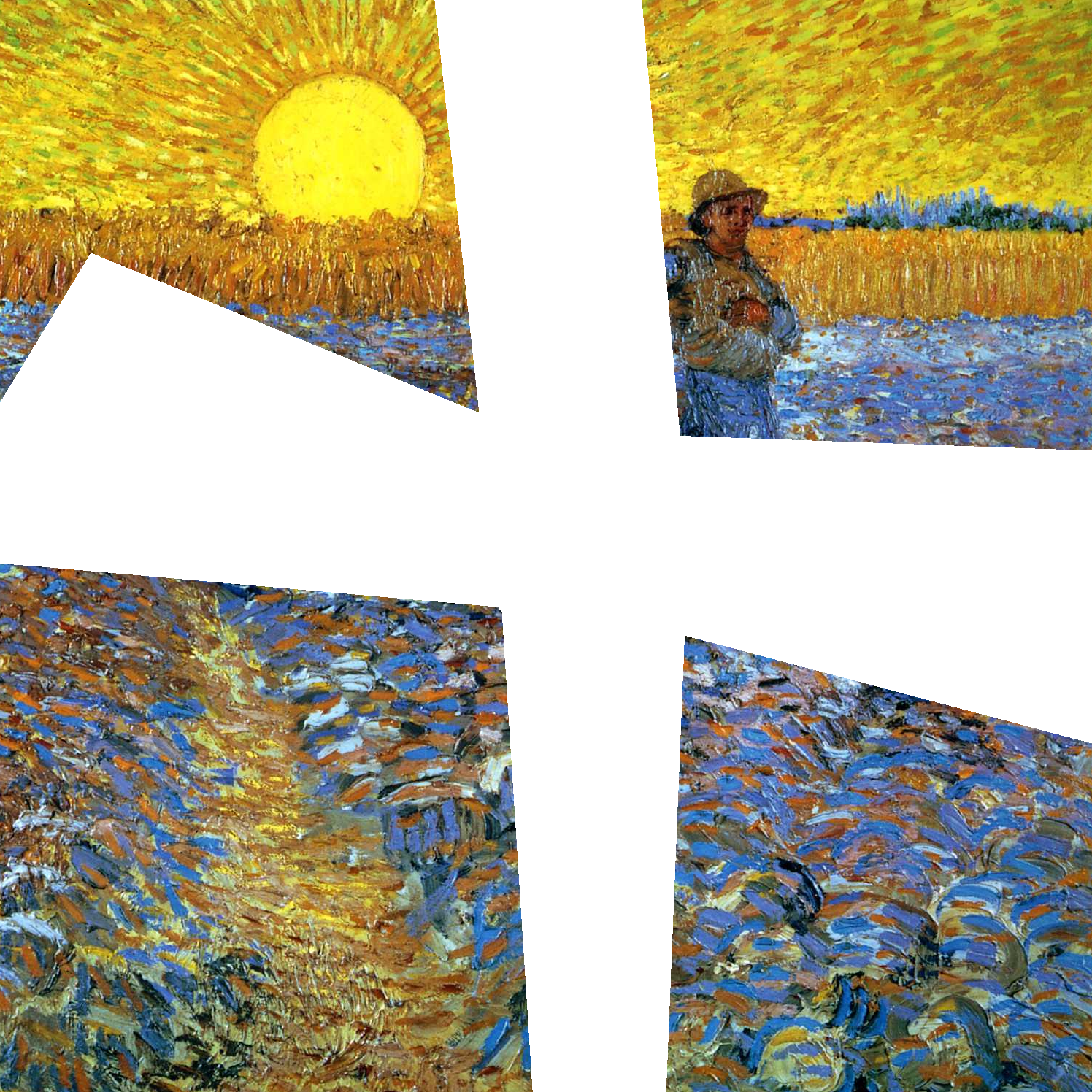}
        \caption{2-fold}
        \label{fig:ex2_mask}
    \end{subfigure}
    \begin{subfigure}[b]{0.23\textwidth}
        \includegraphics[width=\textwidth]{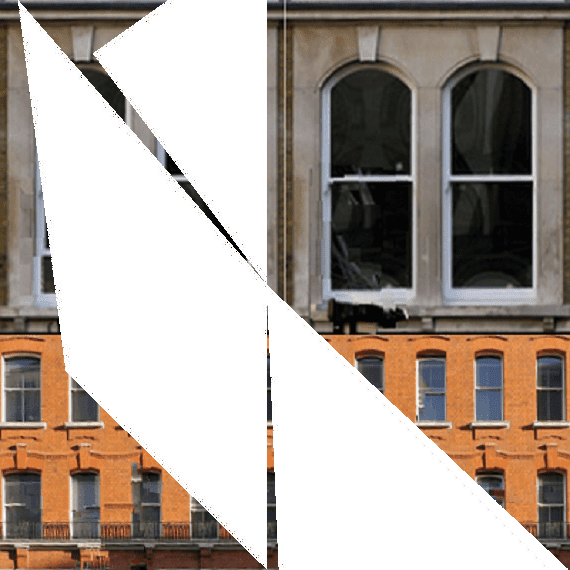}
        \caption{cusp-fold}
        \label{fig:ex3_mask}
    \end{subfigure}
    \caption{Occluded regions of various folded paper examples}\label{fig:mask_regions}
\end{figure}

Given a folded image, the task of restoring the image involves unfolding the image and in-painting the missing parts beneath the folded region. The performance of the final restoration results obviously depends on the realization of the texture synthesis. However, it is worth noting that the unfolding result may also drastically affects the in-painting result since many in-painting methods such as the diffusion-based \cite{bertalmio2000image, tschumperle2005vector}, exemplar-based \cite{daisy2013fast, barnes2009patchmatch, barnes2010generalized} assume the full knowledge of the computational domain (i.e., the image domain). To alleviate the difficulty arose from the incomplete knowledge of the image domain, we can employ the proposed unfolding algorithm to retrieve the geometric information of the folded subdomain with the given partial geometric information. Once we restore the intrinsic image domain consistent with the partial geometric information, well-developed in-painting techniques can then be employed correctly and provide satisfactory in-painting results.

    

\begin{figure}
    \centering
    \begin{subfigure}[b]{0.22\textwidth}
        \includegraphics[width=\textwidth]{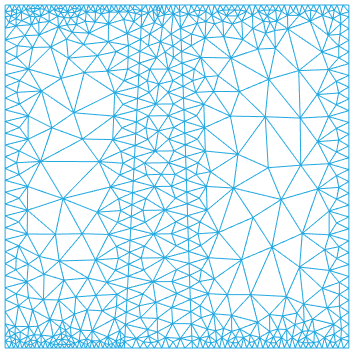}
        \caption{Ground truth}
        \label{fig:onefoldresult_a}
    \end{subfigure}
    \begin{subfigure}[b]{0.22\textwidth}
        \includegraphics[width=\textwidth]{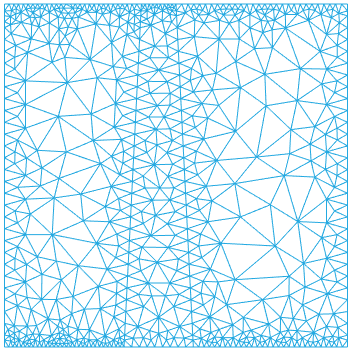}
        \caption{Unfolded mesh}
        \label{fig:onefoldresult_b}
    \end{subfigure}
    \begin{subfigure}[b]{0.20\textwidth}
        \includegraphics[width=\textwidth]{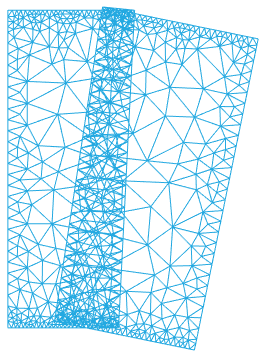}
        \caption{Parametrized}
        \label{fig:onefold_b}
    \end{subfigure}
    \begin{subfigure}[b]{0.22\textwidth}
        \includegraphics[width=\textwidth]{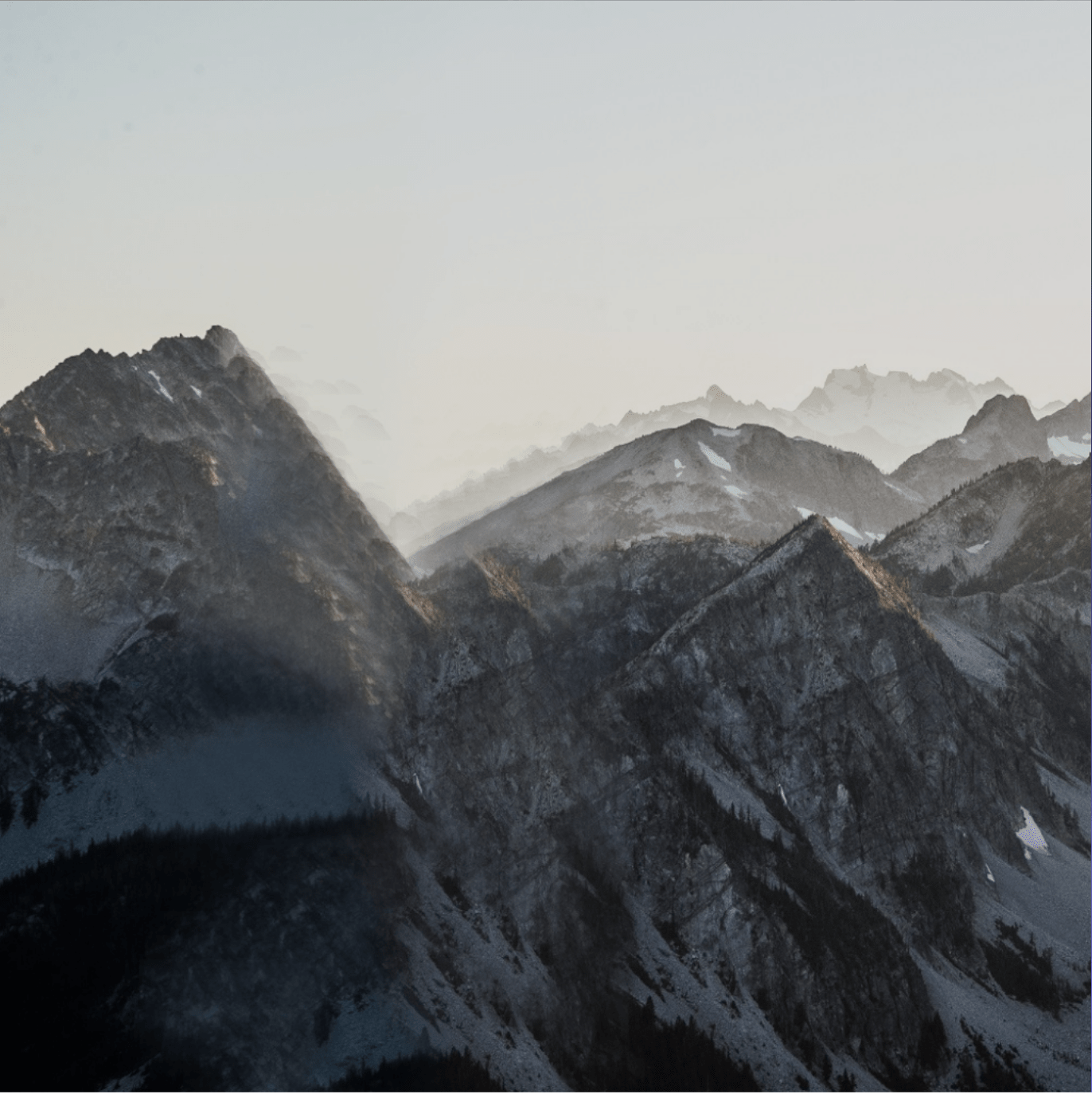}
        \caption{Restored result}
        \label{fig:onefoldresult_c}
    \end{subfigure}
    \caption{Unfolding and restoring result of a 1-folded image.}\label{fig:mountain_result}
\end{figure}


Figure \ref{fig:mountain_result} shows the result for the 1-folded example. The unfolding is trivial in a sense, but here we want to use it to illustrate the typical procedure of a folded-image restoration. The same procedure applies to all kinds of folds alike. In any case, we assume to have the partial boundary and singular set data (i.e., the folding edges and the boundaries) available from the folded images.  Our goal is to recover the folded image \ref{fig:onefold_a} by using the proposed unfolding technique and some well-established in-painting algorithms. At the beginning of our algorithm, an initialization $\Sigma_0$ is constructed by simply using the partial boundary and singular set data. By the proposed Algorithm 1, we can successfully reconstruct the folding map based on the reinforcement iterations, which are shown in \ref{fig:onefoldresult_b}. With the folding map, we can obtain the occluded region and carry out the in-painting procedure.  Notice that the unfolded mesh in \ref{fig:onefoldresult_b} highly resembles to the ground truth (see \ref{fig:onefoldresult_a}). With such unfolded domain, we can acurately approximate the masked region and apply the patch-matching based in-painting algorithm to recover the image. The corresponding parametrized folding surfaces are show in figure \ref{fig:onefold_b}. The image can then be mapped from the folded surface back to the domain complementary to the occluded region. Here, as is common the case, due to the size of masked region generated from the fold of the image, we choose the patch-matching based algorithm for image in-painting. In particular, we employ the algorithm\footnotemark proposed by Daisy et. al. \cite{daisy2013fast} in this example. The result is shown in \ref{fig:onefoldresult_c}. For comparison, the original image together with the overlapping mask (drawn as a half-transparent domain) is shown in Figure \ref{fig:ex1_mask}.

\footnotetext{The algorithm is available as a plugin for the open source GIMP2 software. The software is available at \url{https://www.gimp.org/}.}

To illustrate the adaptability of our proposed algorithm in some more complicated folded surface, we now consider correspondingly, a 2-, 3-folded examples, as shown in Figure \ref{fig:experiemnt234-fold}. There is a 2-folded painting ``The Sower" by Vincent van Gogh, and a  cusp-folded ``building" image.

Similarly, to approximate the original images with the given folded data, we first have to unfold these surfaces using the proposed algorithm. Unlike the trivial 1-fold example illustrated above (which may be simply get unfolded even without the use of Algorithm 1), the folding order also takes part in this unfolding problem since different orders of folding produce different folded images. When the folding number is large, obtaining the ordering of the folds from the given data is difficult. However, using the Algorithm 1, this folding order can be obtained implicitly. In other words, the unfolding procedure using Algorithm 1 is fully automatic. Figure \ref{fig:234fold-result} shows the unfolding and the corresponding in-painting results. The leftmost column shows the folded meshes corresponding to some unidentified rectangular meshes. With only partial boundary conditions and singular set data, unfolding these surfaces are highly ill-posed. But using our proposed algorithm, we successfully obtained the unfolding surfaces (the middle-left column). Notice that by regularizing the generalized Beltrami coefficient, Algorithm 1 converges to unfolded regular meshes, where unnatural curvy edges are not presented. With these unfolded meshes, we can recover the occluded regions due to the foldings (See the middle-right column) and therefore in-painting algorithms can be employed as usual. The overall recovered images are shown in the rightmost column of Figure \ref{fig:234fold-result}.

\begin{figure}
    \centering
    
    \begin{subfigure}[b]{0.23\textwidth}
        \includegraphics[width=\textwidth]{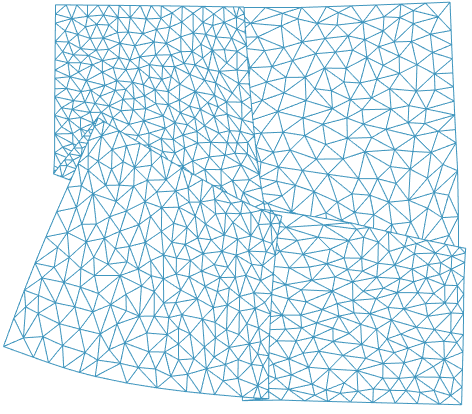}
    \end{subfigure}
    \begin{subfigure}[b]{0.22\textwidth}
        \includegraphics[width=\textwidth]{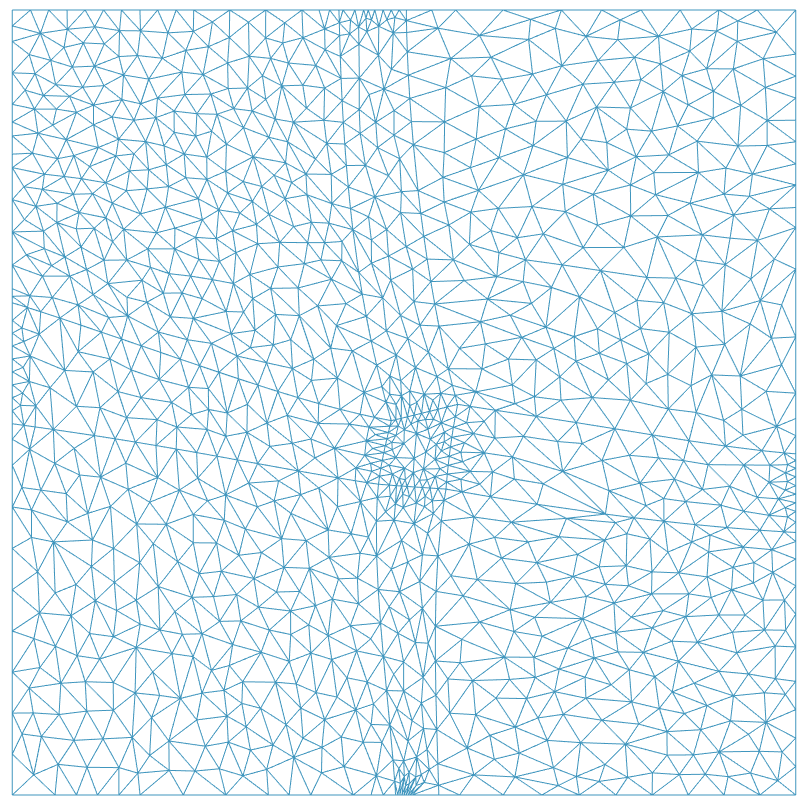}
    \end{subfigure}
    \begin{subfigure}[b]{0.22\textwidth}
        \includegraphics[width=\textwidth]{sower_sq_with_mask.png}
    \end{subfigure}
    \begin{subfigure}[b]{0.22\textwidth}
        \includegraphics[width=\textwidth]{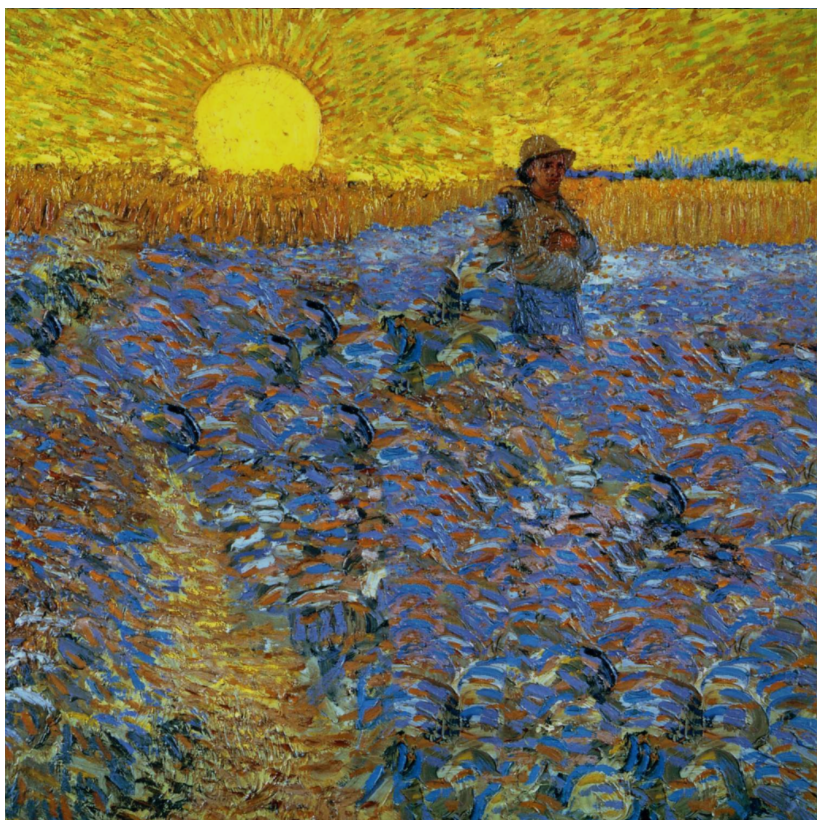}
    \end{subfigure}
    \begin{subfigure}[b]{0.22\textwidth}
        \includegraphics[width=\textwidth]{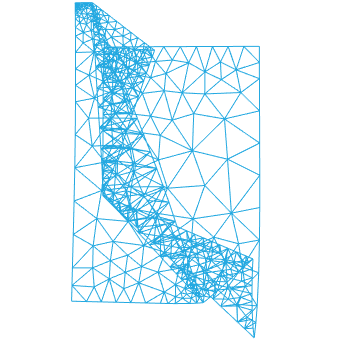}
    \end{subfigure}
    \begin{subfigure}[b]{0.21\textwidth}
        \includegraphics[width=\textwidth]{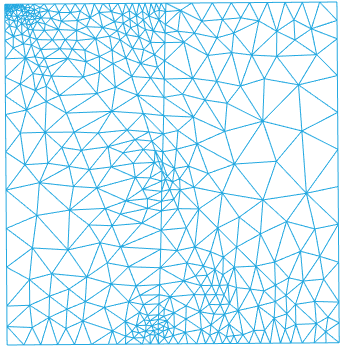}
    \end{subfigure}
    \begin{subfigure}[b]{0.22\textwidth}
        \includegraphics[width=\textwidth]{building_masked.png}
    \end{subfigure}
    \begin{subfigure}[b]{0.22\textwidth}
        \includegraphics[width=\textwidth]{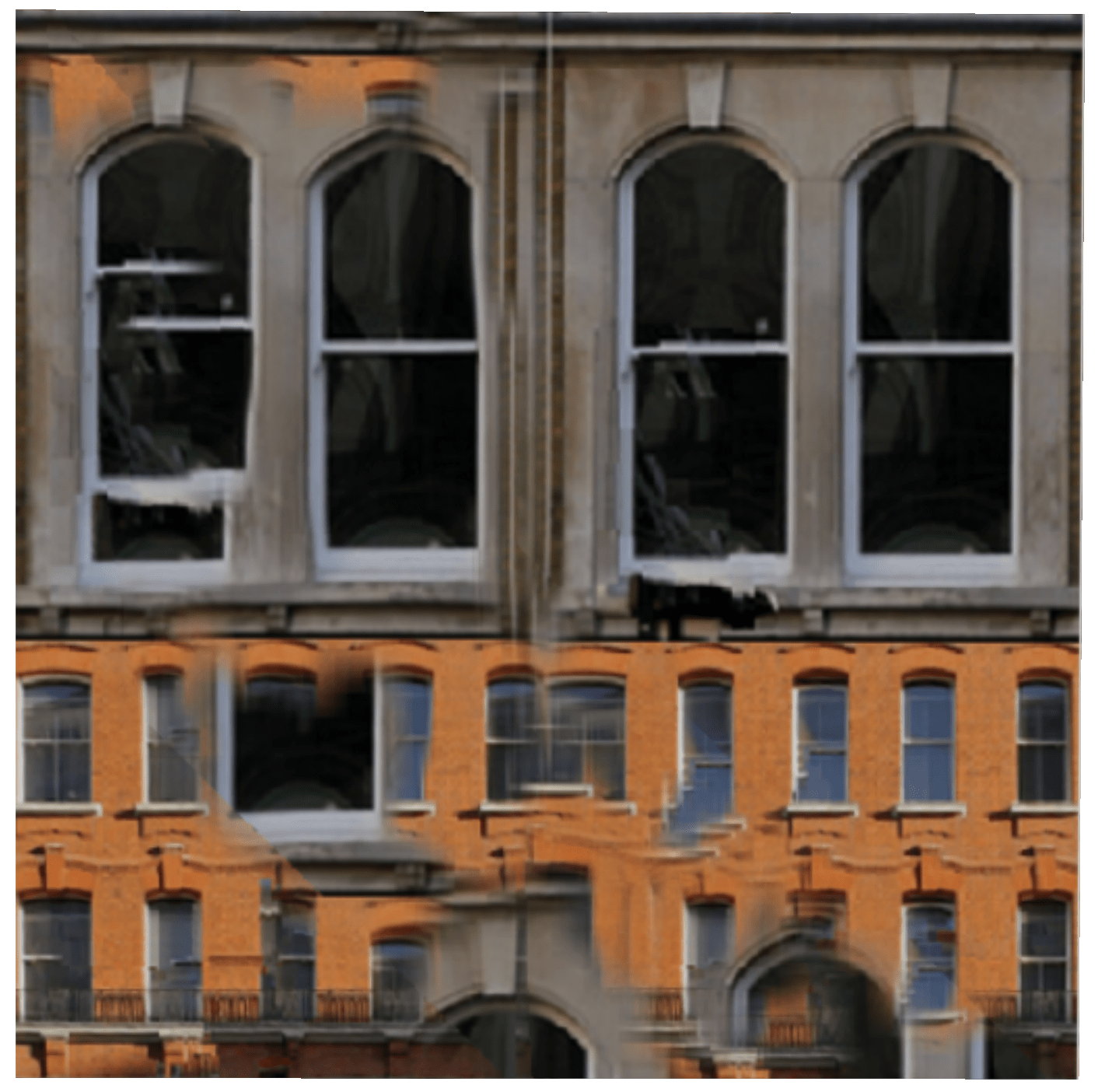}
    \end{subfigure}

    \caption{Unfolding and in-painting result for the 2-fold (first row) and cusp-fold (second row) examples. Leftmost column shows the folded meshes representing the domain of the images. The unfolded results are shown in the middle-left column and the recovered occluded domains are shown in the middle-right columns. The overall recovery results are shown in the rightmost column.}
    \label{fig:234fold-result}
\end{figure}

\section{Discussion and Conclusion}

We have proposed a novel way of studying and modeling the folding phenomena of surfaces using alternating Beltrami equations. The numerical scheme is proposed to discretize and solve the equation, by taking into account of the coupled nature of the two coordinate functions of the solution. The resulting method works for as few as fixing two point as constraints, and has a nice geometric interpretation. More importantly, it allows us to formulate and solve the inverse problem of inferring and parametrizing flat-foldable surfaces with observed partial data. We have proposed to use the ``reinforcement iteration" algorithm in order to solve the associated optimization problem, which has shown empirical convergence over various examples. Various applications are given, including fold sculpting, fold-like texture generation, generating and editing generalized Miura-ori patterns, as well as self-occlusion reasoning. Many more possible applications shall be explored in the future. At the same time, the understanding of non-rigid folding is still largely incomplete and many more interesting examples and applications are waiting to be discovered.

\section*{Acknowledgments}
The first author would like to thank Mr. Leung Liu Yusan and Dr. Emil Saucan for some useful help and discussions in the early stage of this work. The examples meshes are generated by the software Triangle \cite{shewchuk1996triangle}. This work is supported by HKRGC GRF (Project ID: 2130549).

\bibliographystyle{siamplain}
\bibliography{references}

\end{document}